\newtheorem{theorem}{Theorem}[section]
      \newtheorem{proposition}[theorem]{Proposition}
      \newtheorem{corollary}[theorem]{Corollary}
      \newtheorem{example}[theorem]{Example}
      \newtheorem{remark}[theorem]{Remark}
\numberwithin{equation}{section}
\title{Factorization for  the full-line matrix Schr\"odinger equation
and a unitary transformation to the half-line scattering\thanks{Research partially supported by project PAPIIT-DGAPA UNAM 
IN100321} }
\author{Tuncay Aktosun\\
Department of Mathematics\\
University of Texas at Arlington\\
Arlington, TX 76019-0408, USA\\
\\
Ricardo Weder\thanks{Emeritus Fellow, 
Sistema Nacional de Investigadores}\\ %. weder@unam.mx. www.iimas.unam.mx/rweder/rweder.html }\\
Departamento de F\'\i sica Matem\'atica\\
Instituto de Investigaciones en
Matem\'aticas Aplicadas y en Sistemas\\
Universidad Nacional Aut\'onoma de M\'exico\\
Apartado Postal 20-126, IIMAS-UNAM\\
 Ciudad de M\'exico  01000, M\'exico}
\date{}
\begin{document}

\maketitle

\centerline{\it Dedicated to Vladimir Aleksandrovich Marchenko  to commemorate his 100th birthday}

\begin{abstract}
The scattering matrix for the full-line matrix Schr\"odinger equation is analyzed when the corresponding matrix-valued
potential is selfadjoint, integrable, and has a finite first moment.
The matrix-valued potential is decomposed into a finite number of fragments, and a factorization formula is presented 
expressing the matrix-valued scattering coefficients in terms of the matrix-valued
scattering coefficients for the fragments. Using the factorization formula,
some explicit examples are provided illustrating that in general the left and 
right matrix-valued transmission coefficients 
are unequal.  A unitary transformation is established between the full-line matrix Schr\"odinger operator and
the half-line matrix Schr\"odinger operator with a particular selfadjoint
boundary condition and by relating the full-line and half-line potentials appropriately. 
Using that unitary transformation, the relations are established between the full-line and the half-line quantities
such as the Jost solutions, the physical solutions, and the scattering matrices. Exploiting the connection between the 
corresponding full-line and half-line scattering matrices, Levinson's theorem on the full line is proved
and is related to
Levinson's theorem on the half line.

\end{abstract}

{\bf {AMS Subject Classification (2020):}} 34L10, 34L25, 34L40, 47A40, 81U99

{\bf Keywords:} matrix-valued Schr\"odinger equation on the line, factorization of scattering data, matrix-valued scattering coefficients, Levinson's theorem,
unitary transformation to the half-line scattering

\newpage

\section{Introduction}
\label{section1}

In this paper we consider certain aspects
of the matrix-valued Schr\"odinger equation on the full line
\begin{equation}
\label{1.1}
-\psi''+V(x)\,\psi=k^2\psi,\qquad x\in\mathbb R,
\end{equation}
where $x$ represents the spacial coordinate, $\mathbb R:=(-\infty,+\infty),$ 
the prime denotes the $x$-derivative, the wavefunction $\psi$
may be an $n\times n$ matrix or a column vector with
$n$ components. Here, $n$ can be chosen as any fixed positive integer, including the special value $n=1$ which corresponds to the scalar case.
The potential $V$ is assumed to be an $n\times n$ matrix-valued function
of $x$ satisfying the selfadjointness
\begin{equation}
\label{1.2}
V(x)^\dagger=V(x),\qquad x\in\mathbb R,
\end{equation}
with the dagger denoting the matrix adjoint
(matrix transpose and complex conjugation),
and also belonging to the Faddeev class, i.e. satisfying the condition
\begin{equation}
\label{1.3}
\int_{-\infty}^\infty dx\,(1+|x|)\,|V(x)|<+\infty,
\end{equation}
with $|V(x)|$ denoting the
operator norm of the matrix $V(x).$
Since all matrix norms are equivalent for $n\times n$ matrices,
any other matrix norm can be used in \eqref{1.3}.
We use the conventions and notations
from \cite{AKV2001} and refer the reader to
that reference for further details.

Let us decompose the potential $V$ into two pieces 
$V_1$ and $V_2$ as
\begin{equation}
\label{1.4}
V(x)=V_1(x)+V_2(x),\qquad x\in\mathbb R,
\end{equation}
where we have defined
\begin{equation}
\label{1.5}
V_1(x):=
\begin{cases}V(x),\qquad x < 0,\\
\noalign{\medskip}
0,\qquad x>0,
\end{cases}
\quad
V_2(x):=
\begin{cases}0,\qquad x < 0,\\
\noalign{\medskip}
V(x),\qquad x>0.
\end{cases}
\end{equation}
We refer to $V_1$ and $V_2$ as the
left and right fragments of $V,$ respectively. We are interested in relating the $n\times n$ matrix-valued scattering 
coefficients  corresponding to $V$ to the $n\times n$ matrix-valued scattering coefficients corresponding to $V_1$ and $V_2,$ respectively.
This is done in Theorem~\ref{theorem3.3} by presenting 
a factorization formula
in terms of the transition matrix $\Lambda(k)$ defined in \eqref{2.20} 
and an equivalent factorization formula 
 in terms of $\Sigma(k)$ defined in \eqref{2.21}. 
 In fact, in Theorem~\ref{theorem3.6} the scattering coefficients for
 $V$ themselves are expressed in terms of the scattering coefficients for the fragments $V_1$ and $V_2.$

 The factorization
result of Theorem~\ref{theorem3.3} corresponds to the case where the potential $V$ is fragmented into two pieces at 
the fragmentation point $x=0$ as in \eqref{1.5}. 
In Theorem~\ref{theorem3.4} the factorization result of Theorem~\ref{theorem3.3} is generalized
by showing that the fragmentation point can be chosen arbitrarily. In Corollary~\ref{corollary3.5} the factorization formula
is further generalized to the case where the matrix potential is arbitrarily decomposed into
any finite number of fragments and by expressing the transition matrices $\Lambda(k)$ and
$\Sigma(k)$ in terms of the respective transition matrices corresponding to the fragments. 

Since the potential fragments are either supported on a half line or
compactly supported,   
the corresponding reflection coefficients have meromorphic extensions in $k$ from the real axis to 
the upper-half or lower-half complex plane or to the whole complex plane, respectively. 
Thus, it is more efficient to deal with the scattering coefficients of the fragments
than the scattering coefficients of the whole potential.
Furthermore, it is easier to determine 
the scattering coefficients when the corresponding potential is compactly supported or supported on a half line. 
We refer the reader to \cite{A1992} for a proof of the factorization formula for the full-line scalar Schr\"odinger equation
and to \cite{AKV1996}
for a generalization of
that factorization formula.
A composition rule has been presented in \cite{KS1999}
to express the factorization of the 
scattering matrix of a quantum graph in terms of the scattering matrices of its subgraphs.

The factorization formulas are useful in the analysis of direct and inverse scattering problems
because they help us to understand the scattering from the whole potential in terms of the scattering from 
the fragments of that potential.
We recall that the direct scattering problem on the half line consists of the
determination of the scattering matrix and the bound-state information
when the potential and the boundary condition are known.
The goal in the inverse scattering problem on the half line
is to recover the potential and the boundary condition when the
scattering matrix and the bound-state information are available.
The direct and inverse scattering problems on the full line are similar
to those on the half line except for the absence of a boundary condition.
For the direct and inverse scattering theory for the half-line matrix Schr\"odinger equation,
we refer the reader to the seminal monograph \cite{AM1963} of
Agranovich and Marchenko
when the
Dirichlet boundary condition is used 
and to our recent monograph \cite{AW2021} when the
general selfadjoint boundary condition is used.

The factorization formulas yield an efficient method to determine the scattering coefficients for the whole potential by first
determining the scattering coefficients for the potential fragments. For example, in Section~\ref{section4} we provide
some explicit examples to illustrate that the matrix-valued transmission coefficients from the left and from the right
are not necessarily equal even though the equality holds in the scalar case. In our examples, we determine
the left and right transmission coefficients explicitly with the help of the factorization result of Theorem~\ref{theorem3.6}.
Since the resulting explicit expressions for those transmission coefficients are extremely lengthy, we use the symbolic software Mathematica
on the first author's personal computer in order
to obtain those lengthy expressions.
Even though those transmission coefficients could be determined without using the factorization result, 
it becomes difficult or impossible to determine them directly
and demonstrate their unequivalence by using 
Mathematica on the same personal computer. 

In Section~2.4 of \cite{AW2021} we have presented
a unitary transformation between the half-line $2n\times 2n$ matrix Schr\"odinger operator with
a specific selfadjoint boundary condition and the full-line $n \times n$ matrix Schr\"odinger operator with a point interaction at $x=0.$ 
Using that unitary transformation and by introducing a full-line physical solution with a 
point interaction, in \cite{W2022} the relation between the half-line and full-line physical solutions and the relation between
the half-line and full-line scattering matrices have been found. In our current paper, we elaborate on such relations
in the absence of a point interaction on the full line, and we show how the half-line physical solution and the standard full-line physical solution
are related to each other and also show how the half-line and full-line scattering matrices are related to each other.
We also show how some other relevant half-line and full-line quantities are related to each other.
For example, in Theorem~\ref{theorem5.2} we establish the relationship between 
the determinant of the half-line Jost matrix and the determinant of the full-line transmission coefficient, and
in Theorem~\ref{theorem5.3} we provide the relationship between 
the determinant of the half-line scattering matrix and the determinant of the full-line transmission coefficient.
Those results help us establish in Section~\ref{section6}
Levinson's theorem for the full-line $n\times n$ matrix
Schr\"odinger operator and compare it with Levinson's theorem for the
corresponding half-line $2n\times 2n$ matrix
Schr\"odinger operator. 

We have the following remark on the notation we use. There are many equations in our paper
of the form
\begin{equation}\label{1.6}
a(k)=b(k),\qquad k\in\mathbb R\setminus\{0\},
\end{equation}
where $a(k)$ and $b(k)$ are continuous in $k\in\mathbb R\setminus\{0\},$ 
the quantity $a(k)$ is also continuous at $k=0,$ but $b(k)$ is not necessarily well defined at $k=0.$
We write \eqref{1.6} as
\begin{equation}
\label{1.7}
a(k)=b(k),\qquad k\in\mathbb R,
\end{equation}
with the understanding that we interpret $a(0)=b(0)$ in the sense that
by the continuity of $a(k)$ at $k=0,$ the limit of $b(k)$ at $k=0$ exists and we have
\begin{equation}
\label{1.8}
a(0)=\displaystyle\lim_{k\to 0}b(k).
\end{equation}

Our paper is organized as follows. In Section~\ref{section2} we provide 
the relevant results related to the scattering problem for \eqref{1.1}, and this is done
by presenting the Jost solutions, the physical solutions, the scattering coefficients, the scattering matrix
for \eqref{1.1}, and the relevant properties of those quantities.
In Section~\ref{section3} we establish our factorization formula by relating the scattering coefficients for
the full-line potential $V$ to the scattering coefficients for the fragments of the potential.
We also provide an alternate version of the factorization formula.
In Section~\ref{section4} we elaborate on the relation between the matrix-valued left and right transmission coefficients,
and we provide some explicit examples to illustrate that they are in general not equal to each other. In Section~\ref{section5},
we elaborate on the unitary transformation connecting the half-line and full-line matrix
Schr\"odinger operators, and we establish the connections between
the half-line and full-line scattering matrices, the half-line and full-line Jost solutions,
the half-line and full-line physical solutions, the half-line Jost matrix and the full-line transmission coefficients,
and the half-line and full-line zero-energy solutions that are bounded.
Finally, in Section~\ref{section6} we present
Levinson's theorem for the full-line matrix Schr\"odinger operator and compare it
with Levinson's theorem for the half-line matrix Schr\"odinger operator 
with a selfadjoint boundary condition.

\section{The full-line matrix Schr\"odinger equation}
\label{section2}

In this section we provide a summary of the results relevant to
the scattering problem for the full-line matrix Schr\"odinger equation
\eqref{1.1}. In particular, for \eqref{1.1} we introduce the pair of Jost solutions 
$f_{\rm{l}}(k,x)$ and $f_{\rm{r}}(k,x);$ the pair of physical solutions
$\Psi_{\rm{l}}(k,x)$ and $\Psi_{\rm{r}}(k,x);$ the four $n\times n$ matrix-valued
scattering coefficients
$T_{\rm{l}}(k),$ $T_{\rm{r}}(k),$ $L(k),$ and $R(k);$
the $2n\times 2n$ scattering matrix $S(k);$
the three relevant $2n\times 2n$ matrices
$F_{\rm{l}}(k,x),$ $F_{\rm{r}}(k,x),$ and $G(k,x);$ and
the pair of $2n\times 2n$ transition matrices $\Lambda(k)$
and $\Sigma(k).$ For the preliminaries needed in this section, we refer the reader to 
\cite{AKV2001}. For some earlier results on the full-line matrix Schr\"odinger equation, the reader can
consult \cite{LW2000,MO1982,O1985}.

When the potential $V$ in \eqref{1.1} satisfies \eqref{1.2} and \eqref{1.3}, 
there are two particular $n\times n$ matrix-valued solutions
to \eqref{1.1}, known as the left and right Jost solutions and denoted by
$f_{\rm{l}}(k,x)$ and $f_{\rm{r}}(k,x),$ respectively, satisfying the respective spacial asymptotics
\begin{equation}
\label{2.1}
f_{\rm{l}}(k,x)=e^{ikx}\left[I+o(1)\right],\quad f'_{\rm{l}}(k,x)=ik\,e^{ikx}\left[I+o(1)\right],\qquad x\to+\infty,
\end{equation}
\begin{equation}
\label{2.2}
f_{\rm{r}}(k,x)=e^{-ikx}\left[I+o(1)\right],\quad f'_{\rm{r}}(k,x)=-ik\,e^{ikx}\left[I+o(1)\right],\qquad x\to-\infty.
\end{equation}
For each $x\in
\mathbb R,$ the Jost solutions have analytic extensions  in $k$ from the real axis
$\mathbb R$ of the complex plane $\mathbb C$ to the
upper-half complex plane $\mathbb C^+$
and  they are continuous in $k\in\overline{\mathbb C^+},$ where we have defined
$\overline{\mathbb C^+}:=\mathbb C^+\cup\mathbb R.$ 
As listed in (2.1)--(2.3) of \cite{AKV2001}, 
we have  the integral representations for $f_{\rm{l}}(k,x)$ and $f_{\rm{r}}(k,x),$ which are respectively
given by
\begin{equation}
\label{4.6}
e^{-ikx}f_{\rm{l}}(k,x)=I+\displaystyle\frac{1}{2ik}\displaystyle\int_x^\infty dy\left[e^{2ik(y-x)}-1\right] V(y)\,e^{-iky}f_{\rm{l}}(k,y),
\end{equation}
\begin{equation}
\label{4.7}
e^{ikx}f_{\rm{r}}(k,x)=I+\displaystyle\frac{1}{2ik}\displaystyle\int_{-\infty}^x dy\left[e^{2ik(x-y)}-1\right] V(y)\,e^{iky}f_{\rm{r}}(k,y).
\end{equation}

For each fixed $k\in\mathbb R\setminus\{0\},$ the combined $2n$ columns of $f_{\rm{l}}(k,x)$ and $f_{\rm{r}}(k,x)$ form a fundamental set
for \eqref{1.1}, and
any solution to \eqref{1.1} can be expressed as a linear combination of those column-vector
solutions. 
The $n\times n$ matrix-valued scattering coefficients are defined \cite{AKV2001} in terms of the spacial asymptotics of
the Jost solutions via
\begin{equation}
\label{2.3}
f_{\rm{l}}(k,x)=e^{ikx}\,T_{\rm{l}}(k)^{-1}+e^{-ikx} L(k)\,T_{\rm{l}}(k)^{-1}+o(1),\qquad x\to-\infty,
\end{equation}
\begin{equation}
\label{2.4}
f_{\rm{r}}(k,x)=e^{-ikx}\,T_{\rm{r}}(k)^{-1}+e^{ikx} R(k)\,T_{\rm{r}}(k)^{-1}+o(1),\qquad x\to +\infty,
\end{equation}
where $T_{\rm{l}}(k)$ is the left transmission coefficient,
$T_{\rm{r}}(k)$ is the right transmission coefficient,
$L(k)$ is the left reflection coefficient, and $R(k)$ is the right
reflection coefficient. With the help of \eqref{4.6}--\eqref{2.4}, it can be shown that
\begin{equation}
\label{2.3a}
f'_{\rm{l}}(k,x)=ik\,e^{ikx}\,T_{\rm{l}}(k)^{-1}-ik\,e^{-ikx} L(k)\,T_{\rm{l}}(k)^{-1}+o(1),\qquad x\to-\infty,
\end{equation}
\begin{equation}
\label{2.4a}
f'_{\rm{r}}(k,x)=-ik\,e^{-ikx}\,T_{\rm{r}}(k)^{-1}+ik\,e^{ikx} R(k)\,T_{\rm{r}}(k)^{-1}+o(1),\qquad x\to +\infty.
\end{equation}
As a result, the leading asymptotics in \eqref{2.3a} and \eqref{2.4a} are obtained by taking the $x$-derivatives
of the leading asymptotics in \eqref{2.3} and \eqref{2.4}, respectively.
From (2.16) and (2.17) of \cite{AKV2001} it follows that the matrices $T_{\rm{l}}(k)$ and $T_{\rm{r}}(k)$ are invertible for $k \in \mathbb R \setminus\{0\}.$
We remark that \eqref{2.3} and \eqref{2.4} hold in the limit $k\to 0$ as their left-hand sides are continuous at $k=0$
even though each of the four matrices $T_{\rm{l}}(k)^{-1},$ $T_{\rm{r}}(k)^{-1},$ $L(k)\,T_{\rm{l}}(k)^{-1},$ and $R(k)\,T_{\rm{r}}(k)^{-1}$ generically
behaves as $O(1/k)$ when $k\to 0.$
The $2n\times 2n$ scattering matrix for \eqref{1.1} is defined as
\begin{equation}
\label{2.5}
S(k):=\begin{bmatrix}T_{\rm{l}}(k)&R(k) \\ 
\noalign{\medskip}
 L(k)& T_{\rm{r}}(k)\end{bmatrix},\qquad k\in\mathbb R.
\end{equation}
As seen from Theorem~3.1 and Theorem~4.6 of \cite{AKV2001}, the scattering coefficients
can be defined first via \eqref{2.3} and \eqref{2.4} for $ k \in \mathbb R \setminus\{0\}$ and then their domain
can be extended in a continuous way to include $ k=0.$ When the potential $V$ in \eqref{1.1} satisfies \eqref{1.2} and \eqref{1.3},
from \cite{MO1982,O1985} and the comments below (2.21) in \cite{AKV2001} it follows that the $n\times n$ matrix-valued
transmission coefficients
$T_{\rm{l}}(k)$ and $T_{\rm{r}}(k)$ have meromorphic
extensions in $k$ from $\mathbb R$ to $\mathbb C^+$ where any possible poles are simple and can only occur
on the positive imaginary axis. On the other hand,
the domains of the $n\times n$ matrix-valued
reflection coefficients
$L(k)$ and $R(k)$ cannot be extended from $k\in\mathbb R$ unless the
potential $V$ in \eqref{1.1} satisfies further restrictions besides \eqref{1.2} and \eqref{1.3}.

The left and right physical solutions to \eqref{1.1}, denoted by $\Psi_{\rm{l}}(k,x)$ and $\Psi_{\rm{r}}(k,x),$ are the
two particular
$n\times n$ matrix-valued solutions that are related to the Jost solutions
$f_{\rm{l}}(k,x)$ and $f_{\rm{r}}(k,x),$ respectively, as
\begin{equation}
\label{2.6}
\Psi_{\rm{l}}(k,x):=f_{\rm{l}}(k,x)\,T_{\rm{l}}(k),\quad \Psi_{\rm{r}}(k,x):=f_{\rm{r}}(k,x)\,T_{\rm{r}}(k),
\end{equation}
and, as seen from \eqref{2.1}, \eqref{2.2}, \eqref{2.3}, \eqref{2.4}, and \eqref{2.6} they satisfy the spacial asymptotics
\begin{equation}
\label{2.7}
\Psi_{\rm{l}}(k,x)=e^{ikx}T_{\rm{l}}(k)+o(1),\quad \Psi_{\rm{r}}(k,x)=e^{-ikx}I +e^{ikx} R(k)+o(1),\qquad x\to+\infty,
\end{equation}
\begin{equation}
\label{2.8}
\Psi_{\rm{l}}(k,x)=e^{ikx}I+e^{-ikx} L(k)+o(1),\quad \Psi_{\rm{r}}(k,x)=e^{-ikx} T_{\rm{r}}(k)+o(1),\qquad x\to-\infty.
\end{equation}
Using \eqref{2.7} and \eqref{2.8}, we can interpret $\Psi_{\rm{l}}(k,x)$ in terms of the matrix-valued plane wave $e^{ikx}I$ of unit amplitude
sent from $x=-\infty,$ the matrix-valued reflected plane wave $e^{-ikx} L(k)$ of amplitude
$L(k)$ at $x=-\infty,$ and the matrix-valued transmitted plane wave $e^{ikx}\,T_{\rm{l}}(k)$ of amplitude $T_{\rm{l}}(k)$ at $x=+\infty.$
Similarly, the physical solution $\Psi_{\rm{r}}(k,x)$ can be interpreted 
 in terms of the matrix-valued plane wave $e^{-ikx}I$ of unit amplitude
sent from $x=+\infty,$ the matrix-valued reflected plane wave $e^{ikx}R(k)$ of amplitude
$R(k)$ at $x=+\infty,$ and the matrix-valued transmitted plane wave $e^{-ikx}\,T_{\rm{r}}(k)$ of amplitude $T_{\rm{r}}(k)$ at $x=-\infty.$

The following proposition is a useful consequence of Theorem~7.3 on p.~28 of \cite{CL1955}, where
we use $\det$ and $\text{\rm{tr}}$ to denote the matrix determinant and the matrix trace, respectively.

\begin{proposition}\label{proposition2.1} Assume that the $n\times n$ matrix-valued potential $V$ in \eqref{1.1} satisfies \eqref{1.2} and \eqref{1.3}. Then,
for any pair of $n\times n$ matrix-valued solutions $\varphi(k,x)$ and $\psi(k,x)$ to \eqref{1.1}, the $2n\times 2n$ 
matrix determinant given by
\begin{equation}
\label{2.9}
\det\begin{bmatrix}\varphi (k,x) &\psi(k,x)\\
\noalign{\medskip}
\varphi'(k,x)&\psi'(k,x)\end{bmatrix},
\end{equation}
is independent of $x$
and can only depend on $k.$
\end{proposition}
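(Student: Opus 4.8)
The plan is to recognize the displayed $2n\times 2n$ determinant as a Wronskian-type quantity and to invoke the Abel--Liouville formula, which is precisely the content of Theorem~7.3 on p.~28 of \cite{CL1955} cited just above. First I would rewrite the second-order matrix equation \eqref{1.1} as a first-order linear system on $\mathbb C^{2n}.$ Setting $\Phi:=\begin{bmatrix}\psi\\ \psi'\end{bmatrix}$ and using $\psi''=(V(x)-k^2 I)\,\psi,$ equation \eqref{1.1} becomes
\begin{equation*}
\Phi'(x)=A(x)\,\Phi(x),\qquad A(x):=\begin{bmatrix}0 & I\\ V(x)-k^2 I & 0\end{bmatrix},
\end{equation*}
where each block is $n\times n$ and $k$ enters only as a parameter.

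Next I would observe that the $2n\times 2n$ matrix
\begin{equation*}
W(k,x):=\begin{bmatrix}\varphi(k,x)&\psi(k,x)\\ \varphi'(k,x)&\psi'(k,x)\end{bmatrix}
\end{equation*}
has, for each of its $2n$ columns, the structure $\begin{bmatrix}u\\ u'\end{bmatrix}$ where $u$ is a column of $\varphi(k,x)$ or of $\psi(k,x)$ and hence solves \eqref{1.1}. Consequently every column of $W(k,x)$ solves the first-order system $\Phi'=A\,\Phi,$ so that $W(k,x)$ is itself a matrix solution of $W'(k,x)=A(x)\,W(k,x).$ The Abel--Liouville formula then yields
\begin{equation*}
\frac{d}{dx}\det W(k,x)=\text{\rm{tr}}\,A(x)\;\det W(k,x),\qquad x\in\mathbb R.
\end{equation*}

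The decisive point is that the coefficient matrix $A(x)$ is block off-diagonal, with both diagonal blocks equal to the $n\times n$ zero matrix, so that $\text{\rm{tr}}\,A(x)=0$ for every $x\in\mathbb R.$ Hence the right-hand side above vanishes identically, and $\det W(k,x)$ is constant in $x,$ depending at most on $k,$ which is exactly the assertion. I do not anticipate a genuine obstacle here: the argument is the matrix analogue of the classical constancy of the Wronskian for the scalar Schr\"odinger equation, and the only substantive verification is that passing from the second-order equation to the first-order system produces a coefficient matrix of vanishing trace. The hypotheses \eqref{1.2} and \eqref{1.3} enter only to guarantee that $\varphi(k,x),$ $\psi(k,x),$ and their $x$-derivatives are well defined so that differentiation under $\det$ is legitimate.
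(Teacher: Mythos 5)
Your proof is correct and follows essentially the same route as the paper: both rewrite \eqref{1.1} as the first-order $2n\times 2n$ system with coefficient matrix $\begin{bmatrix}0&I\\ V(x)-k^2&0\end{bmatrix}$, invoke Theorem~7.3 on p.~28 of \cite{CL1955} (the Abel--Liouville formula), and conclude from the vanishing trace of the block off-diagonal coefficient matrix that the determinant is constant in $x$.
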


\begin{proof}
The second-order matrix-valued systems \eqref{1.1} for $\varphi(k,x)$ and $\psi(k,x)$ can be expressed as a first-order $2n \times 2n$ matrix-valued
system as
\begin{equation}\label{2.10}
\displaystyle\frac{d}{dx}\begin{bmatrix}\varphi (k,x) &\psi (k,x)\\
\noalign{\medskip}
\varphi'(k,x)&\psi'(k,x)\end{bmatrix} =\begin{bmatrix} 0 & I \\
\noalign{\medskip}
V(x)- k^2& 0\end{bmatrix} \begin{bmatrix}\varphi (k,x) &\psi (k,x)\\
\noalign{\medskip}
\varphi'(k,x)&\psi' (k,x)\end{bmatrix}.
\end{equation}
From Theorem~7.3 on p.~28 of \cite{CL1955} we know that \eqref{2.10} implies
\begin{equation}\label{2.11}
\begin{split}
\displaystyle\frac{d}{dx}&\left(\det\begin{bmatrix}\varphi (k,x) &\psi(k,x)\\
\noalign{\medskip}
\varphi'(k,x)&\psi'(k,x)\end{bmatrix}\right)\\
&=\left( \text{\rm{tr}}\begin{bmatrix} 0 & I \\
\noalign{\medskip}
V(x)- k^2& 0 \end{bmatrix} \right)\left(\det\begin{bmatrix}\varphi (k,x) &\psi(k,x)\\
\noalign{\medskip}
\varphi'(k,x)&\psi'(k,x)\end{bmatrix}\right).
\end{split}
\end{equation}
Since the coefficient matrix in \eqref{2.10} has zero trace, the right-hand side of \eqref{2.11} is zero,
and hence the vanishment of the left-hand side of \eqref{2.11} shows that
the determinant in \eqref{2.9} cannot depend on $x.$
\end{proof}

Since $k$ appears as $k^2$ in \eqref{1.1} and we already know that
$f_{\rm{l}}(k,x)$ and $f_{\rm{r}}(k,x)$ are solutions to \eqref{1.1}, it follows that
$f_{\rm{l}}(-k,x)$ and $f_{\rm{r}}(-k,x)$ are also solutions
to \eqref{1.1}. From the known properties of the Jost solutions $f_{\rm{l}}(k,x)$ and $f_{\rm{r}}(k,x)$
we conclude that, for each $x\in
\mathbb R,$ the solutions $f_{\rm{l}}(-k,x)$ and $f_{\rm{r}}(-k,x)$ have analytic extensions  in $k$ from the real axis
$\mathbb R$ to the
lower-half complex plane $\mathbb C^-$
and they are continuous in $k\in\overline{\mathbb C^-},$ where we have defined $\overline{\mathbb C^-}:=\mathbb C^-\cup\mathbb R.$
In terms of the four solutions $f_{\rm{l}}(k,x),$ $f_{\rm{r}}(k,x),$
$f_{\rm{l}}(-k,x),$ $f_{\rm{r}}(-k,x)$
to \eqref{1.1}, we 
introduce three useful $2n\times 2n$ matrices as
\begin{equation}
\label{2.12}
F_{\rm{l}}(k,x):=\begin{bmatrix}f_{\rm{l}}(k,x)&f_{\rm{l}}(-k,x) \\
\noalign{\medskip}
f'_{\rm{l}}(k,x)&f'_{\rm{l}}(-k,x)\end{bmatrix},  \qquad x \in \mathbb R,
\end{equation}
\begin{equation}
\label{2.13}
F_{\rm{r}}(k,x):=\begin{bmatrix}f_{\rm{r}}(-k,x)&f_{\rm{r}}(k,x)\\
\noalign{\medskip}
f'_{\rm{r}}(-k,x)&f'_{\rm{r}}(k,x)\end{bmatrix}, \qquad x \in \mathbb R, 
\end{equation}
\begin{equation}
\label{2.14}
G(k,x):=\begin{bmatrix}f_{\rm{l}}(k,x)&f_{\rm{r}}(k,x)\\
\noalign{\medskip}
f'_{\rm{l}}(k,x)&f'_{\rm{r}}(k,x)\end{bmatrix},  \qquad x \in \mathbb R.
\end{equation}
Since the $k$-domains of those four solutions are known, from \eqref{2.12} and \eqref{2.13} we see that
$F_{\rm{l}}(k,x)$ and $F_{\rm{r}}(k,x)$ are defined when $k\in\mathbb R$ and 
from \eqref{2.14} we see that $G(k,x)$ is defined when $k\in\overline{\mathbb C^+}.$

In the next proposition, 
with the help of Proposition~\ref{proposition2.1} 
we show that the determinant of each of the three
matrices defined in \eqref{2.12}, \eqref{2.13}, and \eqref{2.14}, respectively, is independent of $x,$ and in fact
we determine the values of those determinants explicitly. 
We also establish the equivalence of the determinants of the left and right
transmission coefficients.

\begin{proposition}\label{proposition2.2}
Assume that the $n\times n$ matrix-valued potential $V$ in \eqref{1.1} satisfies \eqref{1.2} and \eqref{1.3}.
We then have the following:

\begin{enumerate}
\item[\text{\rm(a)}] The determinants of the $2n\times 2n$ matrices $F_{\rm{l}}(k,x),$ $F_{\rm{r}}(k,x),$ and $G(k,x)$ defined
in \eqref{2.12}, \eqref{2.13}, and \eqref{2.14}, respectively, 
are independent of $x,$ and we have
\begin{equation}\label{2.15}
\det\left[F_{\rm{l}}(k,x)\right]= (-2ik )^n, \qquad k \in \mathbb R,
\end{equation}
\begin{equation}\label{2.16}
\det\left[F_{\rm{r}}(k,x)\right]= (-2ik )^n, \qquad k \in \mathbb R,
\end{equation}
\begin{equation}\label{2.17}
\det\left[G(k,x)\right]= \frac{(-2ik )^n}{\det\left[T_{\rm{r}}(k)\right]}, \qquad k \in \overline{\mathbb C^+}.
\end{equation}

\item[\text{\rm(b)}] The $n\times n$ matrix-valued left transmission coefficient $T_{\rm{l}}(k)$ and right transmission coefficient $T_{\rm{r}}(k)$ 
have the same determinant, i.e. we have
\begin{equation} \label{2.18}
\det[T_{\rm{l}}(k)] = \det [T_{\rm{r}}(k)], \qquad k\in \overline{\mathbb C^+} . 
\end{equation}

\item[\text{\rm(c)}] We have
\begin{equation} \label{2.19}
\det[\Lambda(k)] =1,\quad \det[\Sigma(k)] =1,\qquad k\in\mathbb R,
\end{equation}
where the $2n\times 2n$ matrices $\Lambda(k)$ and $\Sigma(k)$ are defined in terms of the scattering coefficients in \eqref{2.5} as
\begin{equation}
\label{2.20}
\Lambda(k):=\begin{bmatrix}T_{\rm{l}}(k)^{-1}&L(-k)\,T_{\rm{l}}(-k)^{-1} \\
\noalign{\medskip}
L(k)\,T_{\rm{l}}(k)^{-1}&T_{\rm{l}}(-k)^{-1}\end{bmatrix},\qquad k\in\mathbb R \setminus\{0\},
\end{equation}
\begin{equation}
\label{2.21}
\Sigma(k):=\begin{bmatrix}T_{\rm{r}}(-k)^{-1}&R(k)\,T_{\rm{r}}(k)^{-1} \\
\noalign{\medskip}
R(-k)\,T_{\rm{r}}(-k)^{-1}&T_{\rm{r}}(k)^{-1}\end{bmatrix},\qquad k\in\mathbb R \setminus\{0\} .
\end{equation}
\end{enumerate}
\end{proposition}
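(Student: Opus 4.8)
The plan is to derive all three parts from Proposition~\ref{proposition2.1}, the spacial asymptotics \eqref{2.1}--\eqref{2.4a}, and two standard identities expressing each Jost solution as a linear combination of the opposite pair. First I would observe that each of $F_{\rm{l}}$, $F_{\rm{r}}$, and $G$ has exactly the block form $\begin{bmatrix}\varphi&\psi\\\varphi'&\psi'\end{bmatrix}$ of \eqref{2.9}, with columns built from solutions of \eqref{1.1} and their derivatives, so by Proposition~\ref{proposition2.1} all three determinants are independent of $x$. To evaluate $\det[F_{\rm{l}}(k,x)]$ I would let $x\to+\infty$ and insert \eqref{2.1} (with $k$ replaced by $-k$ for the second block of columns), factor the scalar exponentials $e^{ikx}$ and $e^{-ikx}$ out of the two column blocks so that their contributions $e^{ikxn}$ and $e^{-ikxn}$ cancel, and pass to the limit in $\det\begin{bmatrix}I+o(1)&I+o(1)\\ik(I+o(1))&-ik(I+o(1))\end{bmatrix}$, whose blocks become scalar multiples of $I$ and give $(-2ik)^n$. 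The identical computation with $x\to-\infty$ and \eqref{2.2} yields $\det[F_{\rm{r}}(k,x)]=(-2ik)^n$, and the value at $k=0$ is read off by continuity in the sense of \eqref{1.6}--\eqref{1.8}.

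Next I would record the two Jost identities. Matching \eqref{2.3}--\eqref{2.4} and \eqref{2.3a}--\eqref{2.4a} against \eqref{2.1}--\eqref{2.2} and invoking uniqueness of solutions to \eqref{1.1} with prescribed asymptotics gives, for $k\in\mathbb R\setminus\{0\}$, the exact identities $f_{\rm{r}}(k,x)=f_{\rm{l}}(-k,x)\,T_{\rm{r}}(k)^{-1}+f_{\rm{l}}(k,x)\,R(k)\,T_{\rm{r}}(k)^{-1}$ and $f_{\rm{l}}(k,x)=f_{\rm{r}}(-k,x)\,T_{\rm{l}}(k)^{-1}+f_{\rm{r}}(k,x)\,L(k)\,T_{\rm{l}}(k)^{-1}$. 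Stacking the first identity with its $x$-derivative expresses the second block of columns of $G$ through $F_{\rm{l}}$, yielding the exact factorization $G(k,x)=F_{\rm{l}}(k,x)\begin{bmatrix}I&R(k)\,T_{\rm{r}}(k)^{-1}\\0&T_{\rm{r}}(k)^{-1}\end{bmatrix}$. Taking determinants of this block-triangular second factor and using \eqref{2.15} gives \eqref{2.17}.

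For part (b) I would use the second identity symmetrically to obtain $G(k,x)=F_{\rm{r}}(k,x)\begin{bmatrix}T_{\rm{l}}(k)^{-1}&0\\L(k)\,T_{\rm{l}}(k)^{-1}&I\end{bmatrix}$, whence $\det[G(k,x)]=(-2ik)^n/\det[T_{\rm{l}}(k)]$; comparing with \eqref{2.17} forces $\det[T_{\rm{l}}(k)]=\det[T_{\rm{r}}(k)]$ for $k\in\mathbb R\setminus\{0\}$. Since both determinants extend meromorphically to $\mathbb C^+$ and continuously to $\overline{\mathbb C^+}$, the identity theorem promotes this equality from the punctured real axis to all of $\overline{\mathbb C^+}$. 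For part (c), stacking each of the two identities together with its counterpart at $-k$ produces the exact matrix relations $F_{\rm{l}}(k,x)=F_{\rm{r}}(k,x)\,\Lambda(k)$ and $F_{\rm{r}}(k,x)=F_{\rm{l}}(k,x)\,\Sigma(k)$ with $\Lambda$ and $\Sigma$ precisely as in \eqref{2.20}--\eqref{2.21}; taking determinants and invoking \eqref{2.15}--\eqref{2.16} then gives $\det[\Lambda(k)]=\det[\Sigma(k)]=1$.

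The routine algebra here is trivial because every ancillary factor is block-triangular or has blocks that are scalar multiples of $I$. The genuinely delicate points are the justification of the two Jost identities, which requires uniqueness for \eqref{1.1} and the fact that \eqref{2.3a}--\eqref{2.4a} make the first-derivative asymptotics legitimate so that both sides agree as solutions together with their derivatives, and the continuation step in part~(b), where equality of the two determinants is obtained only on $\mathbb R\setminus\{0\}$ and must be upgraded to $\overline{\mathbb C^+}$ via meromorphy of the transmission coefficients and continuity up to the boundary, with the value at $k=0$ treated through the limiting convention of \eqref{1.6}--\eqref{1.8}.
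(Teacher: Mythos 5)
Your proof is correct, and its overall architecture matches the paper's (Proposition~\ref{proposition2.1} for $x$-independence, asymptotic evaluation of $\det[F_{\rm{l}}]$ and $\det[F_{\rm{r}}]$, two evaluations of $\det[G]$ compared to get \eqref{2.18}, and a relation between $F_{\rm{l}},$ $F_{\rm{r}},$ $\Lambda,$ $\Sigma$ for \eqref{2.19}), but you implement the key computations by a genuinely different technique. The paper never writes the exact linear-combination identities at this stage: it evaluates $G(k,x)$ asymptotically as $x\to+\infty$ and $x\to-\infty$ using \eqref{2.3}--\eqref{2.4a}, and evaluates $F_{\rm{l}},$ $F_{\rm{r}}$ as $x\to\mp\infty,$ then extracts the determinants through elementary row-block operations (or, as the paper notes, the Schur-complement factorization \eqref{2.38}). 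You instead derive the exact identities $f_{\rm{r}}(k,x)=f_{\rm{l}}(k,x)R(k)T_{\rm{r}}(k)^{-1}+f_{\rm{l}}(-k,x)T_{\rm{r}}(k)^{-1}$ and its mirror — these are precisely \eqref{3.1} and \eqref{3.2}, which the paper only establishes in Section~\ref{section3} — and read off \eqref{2.17}, \eqref{2.18}, \eqref{2.19} from determinants of block-triangular factors and from $F_{\rm{l}}=F_{\rm{r}}\Lambda,$ $F_{\rm{r}}=F_{\rm{l}}\Sigma$ (the paper's \eqref{3.5}--\eqref{3.6}). Your route buys cleaner algebra and a unified treatment of all three parts, at the cost of front-loading the fundamental-set/uniqueness argument; importantly, you avoid circularity by proving \eqref{2.15}--\eqref{2.16} from asymptotics \emph{before} invoking the fundamental-set property that justifies the linear combinations. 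Two small remarks: your appeal to ``the identity theorem'' to pass from $\mathbb R\setminus\{0\}$ to $\overline{\mathbb C^+}$ in part (b) tacitly needs a reflection-type argument, since the agreement set lies on the boundary of the region of meromorphy rather than in its interior — but the paper's own extension step (via analyticity and continuity of $G(k,x)$) is informal at exactly the same point; and your part (c) makes the paper's later observation \eqref{3.7} ($\Lambda$ and $\Sigma$ are mutually inverse) an immediate corollary, whereas the paper re-derives $\det[\Sigma(k)]=1$ by a separate asymptotic computation.
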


\begin{proof} As already mentioned,
the four quantities $f_{\rm{l}}(k,x),$ $f_{\rm{r}}(k,x),$ $f_{\rm{l}}(-k,x),$ $f_{\rm{r}}(-k,x)$
are all solutions to \eqref{1.1}, and hence from Proposition~\ref{proposition2.1}  it follows that
the determinants of $F_{\rm{l}}(k,x),$ $F_{\rm{r}}(k,x),$ and $G(k,x)$ are independent of $x.$ In their respective $k$-domains, we can evaluate each of
those determinants as $x\to+\infty$ and $x\to-\infty,$ and we know that we have the equivalent values.
Using \eqref{2.1} in \eqref{2.12} we get
\begin{equation}\label{2.22}
\det\left[F_{\rm{l}}(k,x)\right]=\det\begin{bmatrix} e^{ikx}I&  e^{-ikx}I\\
\noalign{\medskip}
ik\, e^{ikx}I&-ik\, e^{-ikx}I\end{bmatrix}+ o(1), \qquad x \to +\infty.
\end{equation}
Using an elementary row block operation on the block matrix appearing on the right-hand side of \eqref{2.22}, i.e.
by multiplying the first row block by $ikI$ and subtracting the resulting row block from the second row block, we obtain
\eqref{2.15}.
In a similar way, using \eqref{2.2} in \eqref{2.13}, we have
\begin{equation}\label{2.23}
\det\left[F_{\rm{r}}(k,x)\right]=\det\begin{bmatrix} e^{ikx}I&  e^{-ikx}I\\
\noalign{\medskip}
ik\, e^{ikx}I &- ik\, e^{-ikx}I\end{bmatrix}+ o(1), \qquad x \to  -\infty,
\end{equation}
and again using the aforementioned elementary row block operation on the matrix appearing on the right-hand side of \eqref{2.23},
we get \eqref{2.16}. For $k\in\mathbb R\setminus\{0\}$ using \eqref{2.1} and \eqref{2.4}--\eqref{2.4a} in \eqref{2.14}, we obtain
\begin{equation}\label{2.24}
G(k,x)= K_{\rm{r}}(k,x)+o(1), \qquad x \to +\infty,
\end{equation}
where we have defined
\begin{equation*}
%\label{2.25}
K_{\rm{r}}(k,x):=\begin{bmatrix} e^{ikx}I&  e^{-ikx}\,T_{\rm{r}}(k)^{-1}+ e^{ikx} R(k)\,T_{\rm{r}}(k)^{-1}  \\
\noalign{\medskip}
\displaystyle ik\, e^{ikx}I& \displaystyle  -ik\,e^{-ikx}\,T_{\rm{r}}(k)^{-1}+\displaystyle  ik\,e^{ikx} R(k)\,T_{\rm{r}}(k)^{-1}\end{bmatrix}.
\end{equation*}
Using the aforementioned elementary row block operation on the matrix $K_{\rm{r}}(k,x),$ we get
\begin{equation}\label{2.26}
\det\left[K_{\rm{r}}(k,x)\right]=\det \begin{bmatrix} e^{ikx}I&  e^{-ikx}\,T_{\rm{r}}(k)^{-1}+ e^{ikx} R(k)\,T_{\rm{r}}(k)^{-1}  \\
\noalign{\medskip}
\displaystyle 0& -2ik\,e^{-ikx}\,T_{\rm{r}}(k)^{-1}\end{bmatrix},
\end{equation}
where $0$ denotes the $n\times n$ zero matrix.
From \eqref{2.24} and \eqref{2.26} we get \eqref{2.17}. Thus, the proof of (a) is complete.
Let us now turn to the proof of (b).
Using \eqref{2.2}, \eqref{2.3}, and \eqref{2.3a} in \eqref{2.14}, we obtain
\begin{equation}\label{2.27}
G(k,x)= K_{\rm{l}}(k,x)+o(1), \qquad x \to -\infty,
\end{equation}
where we have defined
\begin{equation*}
%\label{2.28}
K_{\rm{l}}(k,x):=\begin{bmatrix} e^{ikx}\,T_{\rm{l}}(k)^{-1}+ e^{-ikx} L(k)\,T_{\rm{l}}(k)^{-1}&  e^{-ikx}I \\
\noalign{\medskip}
ik\,e^{ikx}\,T_{\rm{l}}(k)^{-1}-ik e^{-ikx} L(k)\,T_{\rm{l}}(k)^{-1}&  -ik\,e^{-ikx}I\end{bmatrix}.
\end{equation*}
Using an elementary row block operation on the matrix $K_{\rm{l}}(k,x),$ i.e.
by multiplying the first row block by $ikI$ and adding the resulting row block to the second row block, we get
\begin{equation}\label{2.29}
\det\left[K_{\rm{l}}(k,x)\right]=\det \begin{bmatrix} e^{ikx}\,T_{\rm{l}}(k)^{-1}+ e^{-ikx} L(k)\,T_{\rm{l}}(k)^{-1}&  e^{-ikx}I \\
\noalign{\medskip}
\displaystyle 2ik\,e^{ikx}\,T_{\rm{l}}(k)^{-1}& 0\end{bmatrix}.
\end{equation}
Interchanging the first and second row blocks of the matrix appearing on the right-hand side of \eqref{2.29}, we have
\begin{equation}\label{2.30}
\det\left[K_{\rm{l}}(k,x)\right]=(-1)^n\det \begin{bmatrix} 2ik\,e^{ikx}\,T_{\rm{l}}(k)^{-1}&0 \\
\noalign{\medskip}
\displaystyle e^{ikx}\,T_{\rm{l}}(k)^{-1}+ e^{-ikx} L(k)\,T_{\rm{l}}(k)^{-1}&  e^{-ikx}I\end{bmatrix}.
\end{equation}
Then, from \eqref{2.27} and \eqref{2.30} we conclude that
\begin{equation}\label{2.31}
\det\left[G(k,x)\right]= \frac{(-2ik )^n}{\det\left[T_{\rm{l}}(k)\right]}, \qquad k \in \mathbb R\setminus\{0\},
\end{equation}
and by comparing \eqref{2.17} and \eqref{2.31} we obtain \eqref{2.17} for $k\in\mathbb R\setminus\{0\}.$
However, for each fixed $x\in\mathbb R$ the quantity $G(k,x)$ is analytic in
$k\in\mathbb C^+$ and continuous in $k\in\overline{\mathbb C^+}.$
Thus, \eqref{2.17} holds for $k\in\overline{\mathbb C^+}$
and the proof of (b) is also complete.
Using \eqref{2.3} and 
\eqref{2.3a} in \eqref{2.12}, and by exploiting the fact that
$\det[F_{\rm{l}}(k,x)]$ is independent of $x,$ we evaluate
$\det[F_{\rm{l}}(k,x)]$ as $x\to-\infty,$ and we get the equality
\begin{equation}\label{2.32}
\det\left[F_{\rm{l}}(k,x)\right]
=\det \begin{bmatrix} q_1& q_2\\
%\noalign{\medskip}
q_3& q_4\end{bmatrix},
\end{equation}
where we have defined
\begin{equation*}
q_1:=e^{ikx}T_{\rm{l}}(k)^{-1}+ e^{-ikx} L(k)\,T_{\rm{l}}(k)^{-1},
\end{equation*}
\begin{equation*}
q_2:=e^{-ikx}T_{\rm{l}}(-k)^{-1}+ e^{ikx} L(-k)\,T_{\rm{l}}(-k)^{-1},
\end{equation*}
\begin{equation*}
q_3:=ik\,e^{ikx}T_{\rm{l}}(k)^{-1}-ik\, e^{-ikx} L(k)\,T_{\rm{l}}(k)^{-1},
\end{equation*}
\begin{equation*}
q_4:= -ik\,e^{-ikx}T_{\rm{l}}(-k)^{-1}+ik\, e^{ikx} L(-k)\,T_{\rm{l}}(-k)^{-1}.
\end{equation*}
Using two consecutive elementary row block operations on the matrix on the right-hand side of \eqref{2.32}, i.e. by
multiplying the first row block by $ikI$ and adding the resulting row block to the second row block and then
by dividing the resulting second row block by $2ik$ and subtracting the resulting
row block from the first row block,
we can write \eqref{2.32}
as
\begin{equation}\label{2.33}
\det\left[F_{\rm{l}}(k,x)\right]
=\det \begin{bmatrix} e^{-ikx} L(k)\,T_{\rm{l}}(k)^{-1}&  e^{-ikx}T_{\rm{l}}(-k)^{-1}\\
\noalign{\medskip}
2ik\,e^{ikx}T_{\rm{l}}(k)^{-1}& 2ik\, e^{ikx} L(-k)\,T_{\rm{l}}(-k)^{-1}\end{bmatrix}.
\end{equation}
By interchanging the first and second row blocks of the matrix appearing on the right-hand side of \eqref{2.33} and
simplifying the determinant of the resulting matrix, from \eqref{2.33} we obtain
\begin{equation}\label{2.34}
\det\left[F_{\rm{l}}(k,x)\right]
=(-2ik)^n \det \begin{bmatrix} T_{\rm{l}}(k)^{-1}&   L(-k)\,T_{\rm{l}}(-k)^{-1}\\
\noalign{\medskip}
L(k)\,T_{\rm{l}}(k)^{-1}& T_{\rm{l}}(-k)^{-1}\end{bmatrix}.
\end{equation}
Comparing \eqref{2.15}, \eqref{2.20}, and \eqref{2.34}, we see that the first equality in \eqref{2.19} holds.
We remark that the matrix $\Lambda(k)$ defined in \eqref{2.20} behaves as $O(1/k)$ as $k\to 0.$
On the other hand, we observe that
the first equality of \eqref{2.19} holds at $k=0$ by the continuity
argument based on \eqref{1.6}--\eqref{1.8}.
In a similar way, using \eqref{2.4} and 
\eqref{2.4a} in \eqref{2.13}, and by exploiting the fact that
$\det[F_{\rm{r}}(k,x)]$ is independent of $x,$ 
we evaluate
$\det[F_{\rm{r}}(k,x)]$  as $x\to+\infty$ and we get the equality
\begin{equation}\label{2.35}
\det\left[F_{\rm{r}}(k,x)\right]
=\det \begin{bmatrix} q_5&  q_6\\
%\noalign{\medskip}
q_7& q_8\end{bmatrix},
\end{equation}
where we have defined
\begin{equation*}
q_5:= e^{ikx}\,T_{\rm{r}}(-k)^{-1}+ e^{-ikx} R(-k)\,T_{\rm{r}}(-k)^{-1},
\end{equation*}
\begin{equation*}
q_6:=e^{-ikx}\,T_{\rm{r}}(k)^{-1}+ e^{ikx} R(k)\,T_{\rm{r}}(k)^{-1},
\end{equation*}
\begin{equation*}
q_7:=ik\,e^{ikx}\,T_{\rm{r}}(-k)^{-1}-ik\, e^{-ikx} R(-k)\,T_{\rm{r}}(-k)^{-1},
\end{equation*}
\begin{equation*}
q_8:=-ik\,e^{-ikx}\,T_{\rm{r}}(k)^{-1}+ik\, e^{ikx} R(k)\,T_{\rm{r}}(k)^{-1}.
\end{equation*}
Using two elementary row block operations on the matrix on the right-hand side of \eqref{2.35},  i.e. by
multiplying the first row block by $ikI$ and adding the resulting row block to the second row block and then
by dividing the resulting second row block by $2ik$ and subtracting the resulting
row block from the first row block,
we can write \eqref{2.35}
as
\begin{equation}\label{2.36}
\det\left[F_{\rm{r}}(k,x)\right]
=\det \begin{bmatrix} e^{-ikx} R(-k)\,T_{\rm{r}}(-k)^{-1}&  e^{-ikx}\,T_{\rm{r}}(k)^{-1}\\
\noalign{\medskip}
2ik\,e^{ikx}\,T_{\rm{r}}(-k)^{-1}& 2ik\, e^{ikx} R(k)\,T_{\rm{r}}(k)^{-1}\end{bmatrix}.
\end{equation}
By interchanging the first and second row blocks of the matrix appearing on the right-hand side of \eqref{2.36} and
simplifying the determinant of the resulting matrix, from \eqref{2.36} we obtain
\begin{equation}\label{2.37}
\det\left[F_{\rm{r}}(k,x)\right]
=(-2ik)^n \det \begin{bmatrix} T_{\rm{r}}(-k)^{-1}&   R(k)\,T_{\rm{r}}(k)^{-1}\\
\noalign{\medskip}
R(-k)\,T_{\rm{r}}(-k)^{-1}& T_{\rm{r}}(k)^{-1}\end{bmatrix}.
\end{equation}
Comparing \eqref{2.16}, \eqref{2.21}, and \eqref{2.37}, 
we see that the second equality in \eqref{2.19} holds.
We remark that the matrix $\Sigma(k)$ defined in \eqref{2.21} behaves as $O(1/k)$ as $k\to 0,$
but the second equality of \eqref{2.19} holds at $k=0$ by the continuity argument
expressed in \eqref{1.6}--\eqref{1.8}.
\end{proof}

We note that, in the proof of 
Proposition~\ref{proposition2.2}, instead of using elementary row block operations, we could alternatively make use of the 
matrix factorization formula involving a Schur complement. Such a factorization formula is given by
\begin{equation}
\label{2.38}
\begin{bmatrix}M_1&M_2 \\
\noalign{\medskip}
M_3&M_4\end{bmatrix}=
\begin{bmatrix}I&0\\
\noalign{\medskip}
M_3\, M_1^{-1}&I\end{bmatrix}
\begin{bmatrix}M_1&0\\
\noalign{\medskip}
0&M_4-M_3\, M_1^{-1} \,M_2\end{bmatrix}
\begin{bmatrix}I&M_1^{-1}\,M_2\\
\noalign{\medskip}
0&I\end{bmatrix},
\end{equation}
which corresponds to (1.11) on p.~17 of \cite{D2006}.
In the alternative proof of Proposition~\ref{proposition2.2}, it is sufficient to use
 \eqref{2.38} in the special case where the block matrices $M_1,$ $M_2,$ $M_3,$ $M_4$
have the same size $n\times n$ and $M_1$ is invertible.

Let us use $[f(x);g(x)]$ to denote the Wronskian of two $n\times n$ matrix-valued functions of $x,$ where we have defined
\begin{equation*}
%\label{2.39}
 [f(x);g(x)]:=f(x)\,g'(x)-f'(x)\,g(x).
\end{equation*}
Given any two $n\times n$ matrix-valued solutions $\xi(k,x)$ and $\psi(k,x)$ to \eqref{1.1},
one can directly verify that the Wronskian $[\xi(\pm k^*,x)^\dagger;\psi(k,x)]$ is independent of $x,$ 
where we use an asterisk to denote complex conjugation. 
Evaluating the Wronskians involving the Jost solutions to \eqref{1.1} as $x\to+\infty$ and
also as $x\to-\infty,$ respectively, for $k\in\mathbb R$
we obtain
\begin{equation}
\label{2.40}
 [f_{\rm{l}}(k,x)^\dagger;f_{\rm{l}}(k,x)]=2ikI=2ik\,[T_{\rm{l}}(k)^\dagger]^{-1}\left[I-L(k)^\dagger L(k)\right]T_{\rm{l}}(k)^{-1},
\end{equation}
\begin{equation}
\label{2.41}
 [f_{\rm{l}}(-k,x)^\dagger;f_{\rm{l}}(k,x)]=0=2ik\, [T_{\rm{l}}(-k)^\dagger]^{-1}\left[L(-k)^\dagger-L(k)\right] T_{\rm{l}}(k)^{-1},
\end{equation}
\begin{equation}
\label{2.42}
 [f_{\rm{r}}(k,x)^\dagger;f_{\rm{r}}(k,x)]=
 -2ik\,[T_{\rm{r}}(k)^\dagger]^{-1}\left[I-R(k)^\dagger R(k)\right]T_{\rm{r}}(k)^{-1}=-2ikI,
\end{equation}
\begin{equation}
\label{2.43}
 [f_{\rm{r}}(-k,x)^\dagger;f_{\rm{r}}(k,x)]=
 2ik\, [T_{\rm{r}}(-k)^\dagger]^{-1}\left[R(k)-R(-k)^\dagger\right] T_{\rm{r}}(k)^{-1}=0,
\end{equation}
\begin{equation}
\label{2.44}
 [f_{\rm{l}}(k,x)^\dagger;f_{\rm{r}}(k,x)]=2ik\,R(k)\,T_{\rm{r}}(k)^{-1}=-2ik\,[T_{\rm{l}}(k)^\dagger]^{-1}
 L(k)^\dagger,
\end{equation}
and for $k\in\overline{\mathbb C^+}$ we get
 \begin{equation}
\label{2.45}
 [f_{\rm{l}}(-k^*,x)^\dagger;f_{\rm{r}}(k,x)]=-2ik\,T_{\rm{r}}(k)^{-1}=-2ik\,[T_{\rm{l}}(-k^\ast)^\dagger]^{-1},
\end{equation}
 \begin{equation}
\label{2.46}
 [f_{\rm{r}}(-k^*,x)^\dagger;f_{\rm{l}}(k,x)]=2ik\,[T_{\rm{r}}(-k^\ast)^\dagger]^{-1}=2ik\,T_{\rm{l}}(k)^{-1}.
\end{equation}

With the help of \eqref{2.40}, \eqref{2.42}, and \eqref{2.44}, we can prove that
\begin{equation}
\label{2.47}
S(k)^\dagger S(k)=\begin{bmatrix}I&0\\
0&I\end{bmatrix},\qquad k\in\mathbb R,
\end{equation}
where $S(k)$ is the scattering matrix defined in \eqref{2.5}.
From \eqref{2.47} we conclude that  $S(k)$ is unitary, and hence we also have
\begin{equation}
\label{2.48}
S(k)\,S(k)^\dagger=\begin{bmatrix}I&0\\
0&I\end{bmatrix},\qquad k\in\mathbb R.
\end{equation}
From \eqref{2.41}, \eqref{2.43}, and \eqref{2.45} we obtain
\begin{equation}
\label{2.49}
L(-k)=L(k)^\dagger,\quad R(-k)=R(k)^\dagger, \quad T_{\rm{l}}(-k)=T_{\rm{r}}(k)^\dagger,\qquad k\in\mathbb R,
\end{equation}
which can equivalently be expressed as
\begin{equation*}
%\label{2.50}
S(k)^\dagger= Q S(-k)Q,  \qquad k \in \mathbb R,
\end{equation*}
where $Q$ is the constant $2n\times 2n$ matrix given by
\begin{equation}
\label{2.51}
Q:=\begin{bmatrix}0&I\\
I&0\end{bmatrix}.
\end{equation}
We remark that the matrix $Q$ is equal to its own inverse.

For easy referencing, for $k\in\mathbb R$
we write \eqref{2.47} and \eqref{2.48} explicitly as
\begin{equation}
\label{2.52}
\begin{bmatrix}T_{\rm{l}}(k)^\dagger \,T_{\rm{l}}(k)+L(k)^\dagger\, L(k) &T_{\rm{l}}(k)^\dagger \,R(k)+L(k)^\dagger\, T_{\rm{r}}(k)\\
\noalign{\medskip}
R(k)^\dagger \,T_{\rm{l}}(k)+T_{\rm{r}}(k)^\dagger\, L(k)&T_{\rm{r}}(k)^\dagger \,T_{\rm{r}}(k)+R(k)^\dagger\, R(k)\end{bmatrix}=\begin{bmatrix}I&0\\
0&I\end{bmatrix},
\end{equation}
\begin{equation}
\label{2.53}
\begin{bmatrix}T_{\rm{l}} (k)\,T_{\rm{l}}(k)^\dagger+R(k) \,R(k)^\dagger&T_{\rm{l}}(k)\, L(k)^\dagger+R(k) \,T_{\rm{r}}(k)^\dagger\\
\noalign{\medskip}
L(k)\, T_{\rm{l}}(k)^\dagger +T_{\rm{r}} (k)\,R(k)^\dagger&T_{\rm{r}} (k)\,T_{\rm{r}}(k)^\dagger+L(k) \,L(k)^\dagger \end{bmatrix}=\begin{bmatrix}I&0\\
0&I\end{bmatrix}.
\end{equation}

In Proposition~\ref{proposition2.2}(a) we have evaluated the determinants of
the matrices $F_{\rm{l}}(k,x),$ $F_{\rm{r}}(k,x),$ and $G(k,x)$ appearing in
\eqref{2.12}, \eqref{2.13}, and \eqref{2.14}, respectively.
In the next theorem we determine their matrix inverses explicitly in terms of
the Jost solutions $f_{\rm{l}}(k,x)$ and $f_{\rm{r}}(k,x).$

\begin{theorem}\label{theorem2.3}
Assume that the $n\times n$ matrix-valued potential $V$ in \eqref{1.1} satisfies \eqref{1.2} and \eqref{1.3}.
We then have the following:

\begin{enumerate}
\item[\text{\rm(a)}] The $2n\times 2n$ matrix $F_{\rm{l}}(k,x)$ defined in \eqref{2.12} is invertible
when $k\in\mathbb R\setminus\{0\},$ and we have
\begin{equation}
\label{2.54}
F_{\rm{l}}(k,x)^{-1}=\displaystyle\frac{1}{2ik}
\begin{bmatrix}-f'_{\rm{l}}(k,x)^\dagger&f_{\rm{l}}(k,x)^\dagger\\
\noalign{\medskip}
f'_{\rm{l}}(-k,x)^\dagger&-f_{\rm{l}}(-k,x)^\dagger\end{bmatrix},\qquad k\in\mathbb R\setminus\{0\}.
\end{equation}

\item[\text{\rm(b)}] The $2n\times 2n$ matrix $F_{\rm{r}}(k,x)$ defined in \eqref{2.13} is invertible
when $k\in\mathbb R\setminus\{0\},$ and we have
\begin{equation}
\label{2.55}
F_{\rm{r}}(k,x)^{-1}=\displaystyle\frac{1}{2ik}
\begin{bmatrix}-f'_{\rm{r}}(-k,x)^\dagger&f_{\rm{r}}(-k,x)^\dagger\\
\noalign{\medskip}
f'_{\rm{r}}(k,x)^\dagger&-f_{\rm{r}}(k,x)^\dagger\end{bmatrix},\qquad k\in\mathbb R\setminus\{0\}.
\end{equation}

\item[\text{\rm(c)}] The $2n\times 2n$ matrix $G(k,x)$ given in \eqref{2.14} is invertible
for $k\in\overline{\mathbb C^+}\setminus\{0\}$ except at the poles
of the determinant of the transmission coefficient $T_{\rm{l}}(k),$
where such poles can only occur on the positive
imaginary axis in the complex $k$-plane, those $k$-values correspond to
the bound states of \eqref{1.1}, and the number of such poles is finite.
Furthermore, for $k\in\overline{\mathbb C^+}\setminus\{0\}$ we have
\begin{equation}
\label{2.56}
G(k,x)^{-1}=-\displaystyle\frac{1}{2ik}
\begin{bmatrix}T_{\rm{l}} (k)&0\\
\noalign{\medskip}
0&T_{\rm{r}} (k)\end{bmatrix}
\begin{bmatrix}f'_{\rm{r}}(-k^\ast,x)^\dagger&-f_{\rm{r}}(-k^\ast,x)^\dagger\\
\noalign{\medskip}
-f'_{\rm{l}}(-k^\ast,x)^\dagger&f_{\rm{l}}(-k^\ast,x)^\dagger\end{bmatrix}.
\end{equation}
\end{enumerate}
\end{theorem}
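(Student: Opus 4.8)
The plan is to verify each of the three formulas by a single block-matrix multiplication, using the key observation that every $n\times n$ block of the relevant product is a Wronskian of Jost solutions whose value has already been recorded in \eqref{2.40}--\eqref{2.46}. Since $F_{\rm{l}}(k,x),$ $F_{\rm{r}}(k,x),$ and $G(k,x)$ are square matrices, it suffices to check that the candidate matrix is a one-sided inverse. Invertibility in the stated domains is partly in hand already: \eqref{2.15} and \eqref{2.16} give $\det[F_{\rm{l}}(k,x)]=\det[F_{\rm{r}}(k,x)]=(-2ik)^n,$ which is nonzero for $k\in\mathbb R\setminus\{0\},$ settling the invertibility claims of (a) and (b); the invertibility claim of (c) I treat separately at the end.

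For part (a), I would form the product of the right-hand side of \eqref{2.54} with $F_{\rm{l}}(k,x),$ in that order. After pulling out the scalar $1/(2ik),$ each of the four resulting blocks is precisely a Wronskian $[\,f_{\rm{l}}(\pm k,x)^\dagger;f_{\rm{l}}(\pm k,x)\,].$ The diagonal blocks are evaluated by \eqref{2.40} and by its reflection under $k\mapsto -k,$ namely $[\,f_{\rm{l}}(-k,x)^\dagger;f_{\rm{l}}(-k,x)\,]=-2ikI,$ each producing the $n\times n$ identity, while the two off-diagonal blocks vanish by \eqref{2.41} and its reflection. Thus the product is the $2n\times 2n$ identity. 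Part (b) is entirely parallel, with $f_{\rm{r}}$ replacing $f_{\rm{l}}$ and with \eqref{2.42} and \eqref{2.43} (together with their $k\mapsto -k$ reflections) in place of \eqref{2.40} and \eqref{2.41}.

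For part (c), I would first compute the product of the second factor on the right-hand side of \eqref{2.56}, call it $W,$ with $G(k,x).$ Each block of $WG$ is again a Wronskian: the $(1,1)$ and $(2,2)$ blocks are given by \eqref{2.46} and \eqref{2.45}, yielding $-2ik\,T_{\rm{l}}(k)^{-1}$ and $-2ik\,T_{\rm{r}}(k)^{-1},$ so that $WG=\mathrm{diag}\big(-2ik\,T_{\rm{l}}(k)^{-1},\,-2ik\,T_{\rm{r}}(k)^{-1}\big).$ Premultiplying by the remaining factor $-\tfrac{1}{2ik}\,\mathrm{diag}(T_{\rm{l}}(k),T_{\rm{r}}(k))$ then collapses to the identity. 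The main obstacle here, and the only genuinely new input, is that the two remaining blocks of $WG$ require the Wronskians $[\,f_{\rm{r}}(-k^\ast,x)^\dagger;f_{\rm{r}}(k,x)\,]$ and $[\,f_{\rm{l}}(-k^\ast,x)^\dagger;f_{\rm{l}}(k,x)\,],$ which do not appear in \eqref{2.40}--\eqref{2.46}. I would show both vanish on $\overline{\mathbb C^+}$ as follows: each is analytic in $k\in\mathbb C^+$ and continuous on $\overline{\mathbb C^+}$ (conjugating the spectral argument together with the matrix entries preserves analyticity in $k$), and on the real axis the two reduce to \eqref{2.43} and \eqref{2.41}, which are zero; the identity theorem then forces them to vanish throughout $\overline{\mathbb C^+}.$ Alternatively, each Wronskian is independent of $x$ and may be evaluated directly from the Jost asymptotics \eqref{2.1} as $x\to+\infty$ and \eqref{2.2} as $x\to-\infty.$

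Finally, for the invertibility and pole statement in (c), I would combine \eqref{2.17}, which gives $\det[G(k,x)]=(-2ik)^n/\det[T_{\rm{r}}(k)],$ with the equality $\det[T_{\rm{l}}(k)]=\det[T_{\rm{r}}(k)]$ from \eqref{2.18}. Since $(-2ik)^n\neq 0$ on $\overline{\mathbb C^+}\setminus\{0\}$ and $G(k,x)$ is analytic in $k\in\mathbb C^+$ and continuous on $\overline{\mathbb C^+},$ its determinant vanishes precisely at the poles of $\det[T_{\rm{l}}(k)].$ By the meromorphy properties of the transmission coefficients recalled above, those poles are finite in number, lie on the positive imaginary axis, and correspond to the bound states; away from them $G(k,x)$ is invertible, and the explicit inverse \eqref{2.56} is valid there.
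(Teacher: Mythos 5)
Your proposal is correct and follows essentially the same route as the paper's proof: both establish \eqref{2.54}, \eqref{2.55}, and \eqref{2.56} by one-sided verification, postmultiplying the candidate inverse by $F_{\rm{l}}(k,x),$ $F_{\rm{r}}(k,x),$ or $G(k,x)$ and reducing every block to the Wronskian relations \eqref{2.40}--\eqref{2.46}, while the invertibility claims come from the determinant formulas \eqref{2.15}--\eqref{2.17}, the equality \eqref{2.18}, and the known pole structure of $\det[T_{\rm{l}}(k)].$ Your one refinement is worth keeping: in part (c) the paper simply cites \eqref{2.41} and \eqref{2.43}, which are stated only for $k\in\mathbb R,$ whereas you note explicitly that the needed identities are $[f_{\rm{l}}(-k^\ast,x)^\dagger;f_{\rm{l}}(k,x)]=0$ and $[f_{\rm{r}}(-k^\ast,x)^\dagger;f_{\rm{r}}(k,x)]=0$ on all of $\overline{\mathbb C^+}$ and supply a proof (by reflection/continuity from the real axis, or more cleanly by the $x$-independence of these Wronskians together with the asymptotics \eqref{2.1} and \eqref{2.2}, which is exactly how the paper itself obtains \eqref{2.45} and \eqref{2.46}).
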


\begin{proof}
From \eqref{2.15} we see that the determinant of $F_{\rm{l}}(k,x)$ vanishes
only at $k=0$ on the real axis.
We confirm \eqref{2.54} by direct verification. This is done
by first postmultiplying the right-hand side of \eqref{2.54} with the matrix
$F_{\rm{l}}(k,x)$ given in \eqref{2.12} and then by simplifying the block entries of the resulting matrix product
with the help of \eqref{2.40} and \eqref{2.41}.
Thus, the proof of (a) is complete.
We prove (b) in a similar manner. From \eqref{2.16} we observe that
$\det[F_{\rm{r}}(k,x)]$ is nonzero when $k\in\mathbb R$ except at $k=0.$
Consequently, the matrix $F_{\rm{r}}(k,x)$ is invertible when $k\in\mathbb R\setminus \{0\}.$
By postmultiplying the right-hand side of \eqref{2.55}
by the matrix $F_{\rm{r}}(k,x)$ given in \eqref{2.13}, we simplify
the resulting matrix product with the help of \eqref{2.42} and \eqref{2.43}
and
verify
that we obtain the $2n\times 2n$ identity matrix as the product.
Thus, the proof of (b) is also complete.
For the proof of (c) we proceed as follows.
From \eqref{2.17} we observe that $\det[G(k,x)]$ is nonzero
when $k\in\overline{\mathbb C^+}$ except when $k=0$
and when $\det[T_{\rm{r}}(k)]$ has poles. From \eqref{2.18} we know that
the determinants of $T_{\rm{l}}(k)$ and $T_{\rm{r}}(k)]$ coincide, and 
from \cite{AKV2001} we know that
the poles of $\det[T_{\rm{l}}(k)]$ correspond to
the $k$-values at which the bound states of \eqref{1.1} occur.
It is also known \cite{AKV2001} that the bound-state
$k$-values can only occur on the positive imaginary axis in the complex $k$-plane and that
the number of such $k$-values is finite.
We verify \eqref{2.56} directly. That is done by postmultiplying both sides of \eqref{2.56} with the matrix $G(k,x)$ defined in \eqref{2.14},
by simplifying the matrix product by using \eqref{2.41}, \eqref{2.43}, \eqref{2.45}, and \eqref{2.46}, and by showing that the corresponding product is equal to
the $2n\times 2n$ identity matrix.
\end{proof}

\section{The factorization formulas}
\label{section3}

In this section we provide a factorization formula for the full-line matrix Schr\"odinger equation \eqref{1.1}, by
relating the matrix-valued scattering coefficients corresponding to the potential $V$
appearing in \eqref{1.1} to the matrix-valued scattering coefficients corresponding to the fragments of $V.$
We also present an alternate version of
the factorization formula, which is equivalent to the original version.

We already know that $f_{\rm{l}}(k,x)$ and $f_{\rm{l}}(-k,x)$ are both $n\times n$ matrix-valued solutions to
\eqref{1.1}, and from \eqref{2.1} we conclude that their combined $2n$ columns  form a fundamental set of 
column-vector solutions to \eqref{1.1} when $k\in\mathbb R\setminus\{0\}.$ Hence, we can express $f_{\rm{r}}(k,x)$ 
as a linear combination of
those $2n$ columns, and we get
\begin{equation}\label{3.1}
f_{\rm{r}}(k,x)= f_{\rm{l}}(k,x)\, R(k) \,T_{\rm{r}}(k)^{-1}+ f_{\rm{l}}(-k,x) \,T_{\rm{r}}(k)^{-1}, \qquad k \in \mathbb R ,
\end{equation}
where the coefficient matrices are obtained by letting $x\to+\infty$ in \eqref{3.1} and using
\eqref{2.1} and \eqref{2.4}. Note that we have included
$k=0$ in \eqref{3.1} by using the continuity of $f_{\rm{l}}(k,x)$ at $k=0$ for
each fixed $x\in\mathbb R.$
Similarly, $f_{\rm{r}}(k,x)$ and $f_{\rm{r}}(-k,x)$ are both $n\times n$ matrix-valued solutions to
\eqref{1.1}, and from \eqref{2.2} we conclude that their combined $2n$ columns form a fundamental set of 
column-vector solutions to \eqref{1.1}
 when $k\in\mathbb R\setminus\{0\}.$ Thus, we have
\begin{equation}\label{3.2}
f_{\rm{l}}(k,x)= f_{\rm{r}}(k,x) \,L(k)\,T_{\rm{l}}(k)^{-1}+ f_{\rm{r}}(-k,x)\, T_{\rm{l}}(k)^{-1}, \qquad k \in \mathbb R,
\end{equation}
where we have obtained the coefficient matrices by letting $x\to-\infty$ in \eqref{3.2}
and by using \eqref{2.2} and \eqref{2.3}.
Note that we can write \eqref{3.1} and \eqref{3.2}, respectively, as
\begin{equation}\label{3.3}
f_{\rm{l}}(-k,x)= f_{\rm{r}}(k,x)\, T_{\rm{r}}(k)- f_{\rm{l}}(k,x) \,R(k), \qquad k \in \mathbb R ,
\end{equation}
\begin{equation}\label{3.4}
f_{\rm{r}}(-k,x)= f_{\rm{l}}(k,x)\, T_{\rm{l}}(k)- f_{\rm{r}}(k,x) \,L(k), \qquad k \in \mathbb R.
\end{equation}

Using \eqref{3.2} in \eqref{2.12} and comparing the result with \eqref{2.13}, we see that the matrices $F_{\rm{l}}(k,x)$ and $F_{\rm{r}}(k,x)$ 
defined in \eqref{2.12} and \eqref{2.13}, respectively, are related to each other as
\begin{equation}
\label{3.5}
F_{\rm{l}}(k,x)=F_{\rm{r}}(k,x)\,\Lambda(k),\qquad k\in\mathbb R,
\end{equation}
where $\Lambda(k)$ is the matrix defined in \eqref{2.20}.
We remark that, even though $\Lambda(k)$ has
the behavior $O(1/k)$ as $k\to 0,$ by the continuity the equality in
\eqref{3.5} holds also at $k=0.$
In a similar way, by using 
\eqref{3.1} in \eqref{2.13} and comparing the result with \eqref{2.12},
we obtain
\begin{equation}
\label{3.6}
F_{\rm{r}}(k,x)=F_{\rm{l}}(k,x)\,\Sigma(k),\qquad k\in\mathbb R,
\end{equation}
where $\Sigma(k)$ is the matrix defined in \eqref{2.21}.
By \eqref{2.15} and \eqref{2.16}, we know that
the matrices $F_{\rm{l}}(k,x)$ and $F_{\rm{r}}(k,x)$ are invertible
when $k\in\mathbb R\setminus\{ 0 \}.$
Thus, from \eqref{3.5} and \eqref{3.6} we conclude that 
$\Lambda(k)$ and $\Sigma(k)$ are inverses of each other for each $k\in\mathbb R\setminus\{ 0 \},$ i.e. we have
\begin{equation}
\label{3.7}
 \Lambda(k)\,\Sigma(k)=\Sigma(k)\,\Lambda(k)=\begin{bmatrix}I&0\\
0&I\end{bmatrix},\qquad k\in\mathbb R,
\end{equation}
where the result in \eqref{3.7} holds also at $k=0$ by the continuity.
We already know from \eqref{2.19} that the determinants
of the matrices $\Lambda(k)$ and $\Sigma(k)$ are both equal to $1.$
We remark that \eqref{3.7} yields a wealth of relations among the left and right matrix-valued scattering coefficients, which
are similar to those given in \eqref{2.49}, \eqref{2.52}, and \eqref{2.53}.

Using \eqref{3.3} and \eqref{3.4} in \eqref{2.14}, we get
\begin{equation}\label{3.8}
G(-k,x)= G(k,x) \begin{bmatrix}- R(k)& T_{\rm{l}}(k)\\
\noalign{\medskip}
T_{\rm{r}}(k)& -L(k) \end{bmatrix},  \qquad k \in \mathbb R.
\end{equation}
We can write \eqref{3.8} in terms of the scattering matrix $S(k)$ appearing in \eqref{2.5} as
\begin{equation}\label{3.9}
G(-k,x) = G(k,x) \,J \,S(k)\, J \,Q, \qquad k \in \mathbb R,
\end{equation}
where $Q$ is the $2n\times 2n$ constant matrix defined in \eqref{2.51} and 
$J$ is the $2n\times 2n$ involution matrix defined as
\begin{equation}
\label{3.10}
J:=\begin{bmatrix}I&0\\
0&-I\end{bmatrix}.
\end{equation}
Note that $J$ is equal to its own inverse.
By taking the determinants of both sides of \eqref{3.9} and using \eqref{2.17} and
the fact that $\det[J]=(-1)^n$ and $\det[Q]=(-1)^n,$ we obtain the determinant of the scattering matrix as
\begin{equation}
\label{3.11}
\det\left[S(k)\right]= \frac{\det\left[T_{\rm{r}}(k)\right]}{\det\left[T_{\rm{r}}(-k)\right]}, \qquad k \in \mathbb R.
\end{equation}
Because of \eqref{2.18}, we can write \eqref{3.11} also as
\begin{equation*}
%\label{3.12}
\det\left[S(k)\right]= \frac{\det\left[T_{\rm{l}}(k)\right]}{\det\left[T_{\rm{l}}(-k)\right]}, \qquad k \in \mathbb R.
\end{equation*}
From \eqref{2.18} and \eqref{2.49} we see that
\begin{equation}
\label{3.13}
\det\left[T_{\rm{l}}(-k)\right]=\left(\det\left[T_{\rm{l}}(k)\right]\right)^*,\quad
\det\left[T_{\rm{r}}(-k)\right]=\left(\det\left[T_{\rm{r}}(k)\right]\right)^*,\qquad k \in \mathbb R,
\end{equation}
where we recall that an asterisk is used to denote complex conjugation.
Hence, \eqref{3.11} and \eqref{3.13} imply that
\begin{equation}
\label{3.14}
\det\left[S(k)\right]= \frac{\det\left[T_{\rm{l}}(k)\right]}{\left(\det\left[T_{\rm{l}}(k)\right]\right)^\ast}, \qquad k \in \mathbb R.
\end{equation}

The next proposition indicates how 
the relevant quantities related to the full-line Schr\"odinger equation 
\eqref{1.1} are affected when the potential is shifted by $b$ units to the right, i.e. when we replace $V(x)$ in \eqref{1.1} by $V^{(b)}(x)$ defined as
\begin{equation}\label{3.15}
V^{(b)}(x):= V(x+b), \qquad b \in \mathbb R.
\end{equation}
We use the superscript $(b)$ to denote the corresponding transformed quantities.
The result will be useful in showing that the factorization formulas \eqref{3.45} and \eqref{3.46} remain unchanged if the potential is decomposed into two pieces at any fragmentation point
instead the fragmentation point $x=0$ used in \eqref{1.4}.

\begin{proposition}
\label{proposition3.1}
Consider the full-line $n\times n$ matrix Schr\"odinger equation \eqref{1.1} with the $n\times n$ matrix
potential $V$ satisfying \eqref{1.2} and \eqref{1.3}. Under the transformation $ V(x)\mapsto V^{(b)}(x)$
described in \eqref{3.15}, the  quantities relevant to \eqref{1.1}
are transformed as follows:

\begin{enumerate}

\item[\text{\rm(a)}] The $n\times n$ Jost solutions $f_{\rm{l}}(k,x)$ and $f_{\rm{r}}(k,x)$ are transformed 
into $f_{\rm{l}}^{(b)}(k,x)$ and $f_{\rm{r}}^{(b)}(k,x),$ respectively, where we have defined
\begin{equation}
\label{3.16} f_{\rm{l}}^{(b)}(k,x):=  e^{-ikb }f_{\rm{l}}(k,x+b),\quad f_{\rm{r}}^{(b)}(k,x):=  e^{ikb}  f_{\rm{r}}(k,x+b).
\end{equation}

\item[\text{\rm(b)}] The $n\times n$ matrix-valued left and right transmission coefficients
appearing in \eqref{2.3} and \eqref{2.4} remain unchanged, i.e. we have
\begin{equation}
\label{3.17} T_{\rm{l}}^{(b)}(k)= T_{\rm{l}}(k),\quad T_{\rm{r}}^{(b)}(k)= T_{\rm{r}}(k).
\end{equation}
The  $n\times n$ matrix-valued left and right reflection coefficients $L(k)$ and
$R(k)$ appearing in \eqref{2.3} and \eqref{2.4},
respectively, are transformed into $L^{(b)}(k)$ and $R^{(b)}(k),$ which are defined as
\begin{equation}
\label{3.18}  L^{(b)}(k):=L(k)\,e^{-2ikb}, \quad R^{(b)}(k) :=R(k)\,e^{2ikb}.
\end{equation}

\item[\text{\rm(c)}] The $2n\times 2n$ transition matrix  $\Lambda(k)$ appearing in \eqref{2.20} is transformed into
$\Lambda^{(b)}(k)$ given by
\begin{equation}
\label{3.19}   \Lambda^{(b)}(k):=  \begin{bmatrix}
e^{ikb} I&0
\\
\noalign{\medskip}
0&e^{-ikb} I\end{bmatrix} \Lambda(k)\begin{bmatrix}
e^{-ikb} I&0
\\
\noalign{\medskip}
0&e^{ikb} I\end{bmatrix}.
\end{equation}
The $2n\times 2n$ transition matrix $\Sigma(k)$ appearing in \eqref{2.21} is transformed into
$\Sigma^{(b)}(k)$ given by
\begin{equation}
\label{3.20}   
\Sigma^{(b)}(k):=  \begin{bmatrix}
e^{ikb} I&0
\\
\noalign{\medskip}
0&e^{-ikb} I\end{bmatrix} \Sigma(k)\begin{bmatrix}
e^{-ikb} I&0
\\
\noalign{\medskip}
0&e^{ikb} I\end{bmatrix} .
\end{equation}

\end{enumerate}
\end{proposition}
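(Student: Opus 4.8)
The plan is to prove the three parts in the order stated, since each one feeds the next, and to observe at the outset that the entire argument is a matter of propagating the scalar phase factor $e^{\mp ikb}$ through the spatial asymptotics; no new analytic input beyond the uniqueness of the Jost solutions and the defining expansions \eqref{2.1}--\eqref{2.4} is needed. For part (a), I would first check that the functions on the right-hand sides of \eqref{3.16} actually solve the shifted equation. Writing $g(x):=e^{-ikb}f_{\rm{l}}(k,x+b)$ and differentiating twice, the fact that $f_{\rm{l}}(k,\cdot)$ solves \eqref{1.1} evaluated at the point $x+b$ gives at once $-g''+V(x+b)\,g=k^2 g$, i.e. $g$ solves \eqref{1.1} with $V$ replaced by $V^{(b)}$ in \eqref{3.15}. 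Next I would verify the normalization: substituting $x\mapsto x+b$ into \eqref{2.1} and multiplying by $e^{-ikb}$ yields $g(x)=e^{ikx}[I+o(1)]$ and $g'(x)=ik\,e^{ikx}[I+o(1)]$ as $x\to+\infty$, which is exactly the asymptotics characterizing $f_{\rm{l}}^{(b)}(k,x)$. Since the left Jost solution is uniquely determined by the equation together with this normalization, the first identity in \eqref{3.16} follows, and the second is obtained in the identical way from \eqref{2.2}. The essential point is that the prefactors $e^{\mp ikb}$ are chosen precisely to cancel the phase introduced by the shift and restore the standard normalization.

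For part (b), I would read off the transformed coefficients from the asymptotics at the opposite spatial infinity. Inserting the first identity of \eqref{3.16} into the $x\to-\infty$ expansion \eqref{2.3} gives $f_{\rm{l}}^{(b)}(k,x)=e^{ikx}T_{\rm{l}}(k)^{-1}+e^{-ikx}\,e^{-2ikb}L(k)\,T_{\rm{l}}(k)^{-1}+o(1)$, and comparing this term by term with the defining expansion \eqref{2.3} written for the shifted problem forces $T_{\rm{l}}^{(b)}(k)=T_{\rm{l}}(k)$ from the coefficient of $e^{ikx}$ and then $L^{(b)}(k)=L(k)\,e^{-2ikb}$ from the coefficient of $e^{-ikx}$. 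The right-hand quantities are handled symmetrically using \eqref{2.4} and the second identity of \eqref{3.16}, producing $T_{\rm{r}}^{(b)}(k)=T_{\rm{r}}(k)$ and $R^{(b)}(k)=R(k)\,e^{2ikb}$. These are exactly \eqref{3.17} and \eqref{3.18}.

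For part (c), I would substitute the results of part (b) into the definitions \eqref{2.20} and \eqref{2.21} of the transition matrices for the shifted potential, using in addition the evaluations at $-k$, namely $L^{(b)}(-k)=e^{2ikb}L(-k)$ and $R^{(b)}(-k)=e^{-2ikb}R(-k)$. A block-by-block inspection then shows that $\Lambda^{(b)}(k)$ is obtained from $\Lambda(k)$ by inserting the factor $e^{2ikb}$ in the $(1,2)$ block and $e^{-2ikb}$ in the $(2,1)$ block while leaving the diagonal blocks untouched, and likewise for $\Sigma^{(b)}(k)$. This is precisely the effect of the two-sided multiplication in \eqref{3.19} and \eqref{3.20}, as one confirms by multiplying out the product; note that the right factor $\operatorname{diag}(e^{-ikb}I,\,e^{ikb}I)$ is the inverse of the left factor $\operatorname{diag}(e^{ikb}I,\,e^{-ikb}I)$, so \eqref{3.19} and \eqref{3.20} are genuine conjugations, which is consistent with the determinant values $\det[\Lambda(k)]=\det[\Sigma(k)]=1$ recorded in \eqref{2.19} being preserved.

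I do not anticipate a genuine obstacle: the whole proof is a phase-tracking computation, and the only care required is in the signs of the exponents (keeping straight $e^{-ikb}$ versus $e^{ikb}$ on the two Jost solutions, and the resulting $e^{\mp 2ikb}$ on the two reflection coefficients) and in the behavior at $k=0$. The transition matrices carry an $O(1/k)$ singularity as $k\to 0$, exactly as noted after \eqref{2.34}, so the identities \eqref{3.19}--\eqref{3.20} are first established for $k\in\mathbb R\setminus\{0\}$ and then extended to $k=0$ through the continuity convention \eqref{1.6}--\eqref{1.8}; since both sides carry the same $O(1/k)$ factor, no difficulty arises there either.
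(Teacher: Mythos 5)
Your proposal is correct and follows essentially the same route as the paper's proof: part (a) by checking that the shifted, rephased functions solve the shifted equation and carry the defining asymptotics \eqref{2.1}--\eqref{2.2} (hence equal the Jost solutions by uniqueness), part (b) by inserting \eqref{3.16} into the expansions \eqref{2.3}--\eqref{2.4} at the opposite spatial infinity and matching coefficients of $e^{\pm ikx}$, and part (c) by substituting \eqref{3.17}--\eqref{3.18} into the shifted analogs of \eqref{2.20}--\eqref{2.21}. Your added remarks (the explicit block-by-block verification of the conjugation and the $k=0$ continuity point) are sound but not essential differences.
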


\begin{proof}
The matrix $f_{\rm{l}}^{(b)}(k,x)$ defined in
the first equality of \eqref{3.16} is the transformed left Jost solution because it
satisfies the transformed matrix-valued Schr\"odinger equation
\begin{equation}
\label{3.21}
-\psi''(k,x)+V(x+b)\,\psi(k,x)=k^2\,\psi(k,x),\qquad x\in\mathbb R,
\end{equation}  
and is asymptotic to $e^{ikx}[I+o(1)]$ as $x\to +\infty.$
Similarly, the matrix $f_{\rm{r}}^{(b)}(k,x)$ defined in
the second equality of \eqref{3.16} is the transformed right Jost solution because it
satisfies \eqref{3.21} and
is asymptotic to $e^{-ikx}[I+o(1)]$ as $x\to -\infty.$
Thus, the proof of (a) is complete.
Using $f_{\rm{l}}^{(b)}(k,x)$ in the analog of \eqref{2.3}, as $x\to-\infty$ we have
\begin{equation}
\label{3.22}
f_{\rm{l}}^{(b)}(k,x)=e^{ikx}\,[T_{\rm{l}}^{(b)}(k)]^{-1}+e^{-ikx} L^{(b)}(k)\,[T_{\rm{l}}^{(b)}(k)]^{-1}+o(1).
\end{equation}
Using the right-hand side of the first equality of \eqref{3.16} in \eqref{3.22} and comparing the result
with \eqref{2.3} we get the first equalities
of \eqref{3.17} and \eqref{3.18}, respectively.
Similarly, using $f_{\rm{r}}^{(b)}(k,x)$ in the analog of \eqref{2.4}, as $x\to+\infty$ we get
\begin{equation}
\label{3.23}
f_{\rm{r}}^{(b)}(k,x)=e^{-ikx}\,[T_{\rm{r}}^{(b)}(k)]^{-1}+e^{ikx} R^{(b)}(k)\,[T_{\rm{r}}^{(b)}(k)]^{-1}+o(1).
\end{equation}
Using the right-hand side of the second equality of \eqref{3.16} in \eqref{3.23} and comparing the result
with \eqref{2.4}, we obtain the second equalities
of \eqref{3.17} and \eqref{3.18}, respectively.
Hence, the proof of (b) is also complete.
Using \eqref{3.17} and \eqref{3.18} in the analogs of
\eqref{2.20} and \eqref{2.21} corresponding to the shifted potential
$V^{(b)},$ we obtain \eqref{3.19} and \eqref{3.20}.
\end{proof}

When the potential $V$ in \eqref{1.1} satisfies \eqref{1.2} and \eqref{1.3}. 
let us decompose it as in \eqref{1.4}.
For the left fragment $V_1$ and the right fragment $V_2$
defined in \eqref{1.5}, let us use
the subscripts $1$ and $2,$ respectively, to denote the corresponding relevant quantities.
Thus, analogous to \eqref{2.5} we define the $2n\times 2n$ scattering matrices
$S_1(k)$ and $S_2(k)$ corresponding to $V_1$ and $V_2,$ respectively, as
\begin{equation}
\label{3.24}
S_1(k):=\begin{bmatrix}T_{\rm{l}1}(k)&R_1(k) \\ 
\noalign{\medskip}
 L_1(k)& T_{\rm{r}1}(k)\end{bmatrix},\quad S_2(k):=\begin{bmatrix}T_{\rm{l}2}(k)&R_2(k) \\ 
\noalign{\medskip}
 L_2(k)& T_{\rm{r}2}(k)\end{bmatrix},\qquad k\in\mathbb R,
\end{equation}
where $T_{\rm{l}1}(k)$ and $T_{\rm{l}2}(k)$ are the respective left
transmission coefficients,
$T_{\rm{r}1}(k)$ and $T_{\rm{r}2}(k)$ are the respective right
transmission coefficients,
$L_1(k)$ and $L_2(k)$ are the respective left
reflection
coefficients, and
$R_1(k)$ and $R_2(k)$ are the respective right
reflection
coefficients.
In terms of the scattering coefficients for the respective fragments, we use
$\Lambda_1(k)$ and $\Lambda_2(k)$ as in \eqref{2.20} and
use
$\Sigma_1(k)$ and $\Sigma_2(k)$
as in \eqref{2.21} to denote
the transition matrices corresponding to the left and
right potential fragments $V_1$ and $V_2.$ Thus, we have
\begin{equation}
\label{3.25}
\Lambda_1(k):=\begin{bmatrix}T_{\rm{l}1}(k)^{-1}&L_1(-k)\,T_{\rm{l}1}(-k)^{-1} \\
\noalign{\medskip}
L_1(k)\,T_{\rm{l}1}(k)^{-1}&T_{\rm{l}1}(-k)^{-1}\end{bmatrix},\qquad k\in\mathbb R \setminus\{0\},\end{equation}
\begin{equation}
\label{3.26}
 \Lambda_2(k):=\begin{bmatrix}T_{\rm{l}2}(k)^{-1}&L_2(-k)\,T_{\rm{l}2}(-k)^{-1} \\
\noalign{\medskip}
L_2(k)\,T_{\rm{l}2}(k)^{-1}&T_{\rm{l}2}(-k)^{-1}\end{bmatrix},\qquad k\in\mathbb R\setminus\{0\},
\end{equation}
\begin{equation}
\label{3.27}
\Sigma_1(k):=\begin{bmatrix}T_{\rm{r}1}(-k)^{-1}&R_1(k)\,T_{\rm{r}1}(k)^{-1} \\
\noalign{\medskip}
R_1(-k)\,T_{\rm{r}1}(-k)^{-1}&T_{\rm{r}1}(k)^{-1}\end{bmatrix},\qquad k\in\mathbb R\setminus\{0\},\end{equation}
\begin{equation}
\label{3.28}
 \Sigma_2(k):=\begin{bmatrix}T_{\rm{r}2}(-k)^{-1}&R_2(k)\,T_{\rm{r}2}(k)^{-1} \\
\noalign{\medskip}
R_2(-k)\,T_{\rm{r}2}(-k)^{-1}&T_{\rm{r}2}(k)^{-1}\end{bmatrix},\qquad k\in\mathbb R\setminus\{0\}.
\end{equation}

In preparation for the proof of our factorization formula, in the next proposition we express the value at $x=0$ of
the matrix $F_{\rm{l}}(k,x)$ defined in \eqref{2.12} in terms of the left scattering coefficients for the right fragment $V_2,$ and
similarly we 
express the value at $x=0$ of the matrix
$F_{\rm{r}}(k,x)$ defined in
\eqref{2.13} 
in terms of the right scattering coefficients  for the left fragment $V_1.$

\begin{proposition}
\label{proposition3.2}
Consider the full-line $n\times n$ matrix Schr\"odinger equation \eqref{1.1}, where the potential $V$ satisfies
\eqref{1.2} and \eqref{1.3} and is fragmented as in \eqref{1.4} into the left fragment
$V_1$ and the right fragment $V_2$ defined in \eqref{1.5}.
We then have the following:

\begin{enumerate}

\item[\text{\rm(a)}] For $k\in\mathbb R,$ the $2n\times 2n$ matrix $F_{\rm{l}}(k,x)$ defined in \eqref{2.12} satisfies
\begin{equation}
\label{3.29}
F_{\rm{l}}(k,0)=\begin{bmatrix}\left[I+L_2(k)\right]T_{\rm{l}2}(k)^{-1}&\left[I+L_2(-k)\right]T_{\rm{l}2}(-k)^{-1}\\
\noalign{\medskip}
ik\left[I-L_2(k)\right]T_{\rm{l}2}(k)^{-1}&-ik\left[I-L_2(-k)\right]T_{\rm{l}2}(-k)^{-1}\end{bmatrix},
\end{equation}
where $T_{\rm{l}2}(k)$ and $L_2(k)$ are the left transmission and left reflection coefficients, respectively,
for the right fragment $V_2.$

\item[\text{\rm(b)}] For $k\in\mathbb R,$ the $2n\times 2n$ matrix $F_{\rm{r}}(k,x)$ defined in \eqref{2.13} satisfies
\begin{equation}
\label{3.30}
F_{\rm{r}}(k,0)=\begin{bmatrix}\left[I+R_1(-k)\right] T_{\rm{r}1}(-k)^{-1}&\left[I+R_1(k)\right] T_{\rm{r}1}(k)^{-1}\\
\noalign{\medskip}
ik\left[I-R_1(-k)\right] T_{\rm{r}1}(-k)^{-1}&ik\left[-I+R_1(k)\right] T_{\rm{r}1}(k)^{-1}\end{bmatrix},
\end{equation}
where $T_{\rm{r}1}(k)$ and $R_1(k)$ are the right transmission and right reflection coefficients, respectively,
for the left fragment $V_1.$

\item[\text{\rm(c)}] The matrix $F_{\rm{l}}(k,0)$ appearing in \eqref{3.29} can be written as a matrix 
product as
\begin{equation}
\label{3.31}
F_{\rm{l}}(k,0)=q_9\, q_{10}\,q_{11},\qquad k\in\mathbb R,
\end{equation}
where we have defined
\begin{equation}
\label{3.32}
q_9:=\begin{bmatrix}I&0\\
0&ikI\end{bmatrix}\begin{bmatrix}I&-I\\
0&I\end{bmatrix}
\begin{bmatrix}2I&0\\
0&I\end{bmatrix}
\begin{bmatrix}I&0\\
0&-I\end{bmatrix}
\begin{bmatrix}I&0\\
-I&I\end{bmatrix},\qquad k\in\mathbb R,
\end{equation}
\begin{equation}
\label{3.33}
q_{10}:=\begin{bmatrix}
I&L_2(-k)
\\
\noalign{\medskip}
0&I\end{bmatrix}\begin{bmatrix}
I-L_2(-k)\,L_2(k)&0
\\
\noalign{\medskip}
0&I\end{bmatrix},\qquad k\in\mathbb R,
\end{equation}
\begin{equation}
\label{3.34}
q_{11}:=\begin{bmatrix}
I&0
\\
\noalign{\medskip}
L_2(k)&I\end{bmatrix}\begin{bmatrix}
T_{\rm{l}2}(k)^{-1}&0
\\
\noalign{\medskip}
0&T_{\rm{l}2}(-k)^{-1}\end{bmatrix},\qquad k\in\mathbb R\setminus\{0\}.
\end{equation}
In fact, we have
\begin{equation}
\label{3.35}
q_9=\begin{bmatrix}I&I\\
ikI&-ikI\end{bmatrix},\qquad k\in\mathbb R,
\end{equation}
\begin{equation}
\label{3.36}
q_{10}\,q_{11}=\Lambda_2(k), \qquad k\in\mathbb R\setminus\{0\},
\end{equation}
where $\Lambda_2(k)$ is the transition matrix given in \eqref{3.26} 
corresponding to the right potential fragment $V_2.$

\item[\text{\rm(d)}] The matrix $F_{\rm{r}}(k,0)$ appearing in \eqref{3.30} can be written as a matrix product as
\begin{equation}
\label{3.37}
F_{\rm{r}}(k,0)=q_9\, q_{12}\,q_{13},\qquad k\in\mathbb R,
\end{equation}
where $q_9$ is the matrix defined in \eqref{3.32} and we have let
\begin{equation}
\label{3.38}
q_{12}:=\begin{bmatrix}
I&R_1(k)
\\
\noalign{\medskip}
0&I\end{bmatrix}\begin{bmatrix}
I-R_1(k)\,R_1(-k)&0
\\
\noalign{\medskip}
0&I\end{bmatrix},\qquad k\in\mathbb R,
\end{equation}
\begin{equation}
\label{3.39}
q_{13}:=\begin{bmatrix}
I&0
\\
\noalign{\medskip}
R_1(-k)&I\end{bmatrix}\begin{bmatrix}
T_{\rm{r}1}(-k)^{-1}&0
\\
\noalign{\medskip}
0&T_{\rm{r}1}(k)^{-1}\end{bmatrix},\qquad k\in\mathbb R\setminus\{0\}.
\end{equation}
In fact, we have
\begin{equation}
\label{3.40}
\quad q_{12}\,q_{13}=\Sigma_1(k),\qquad k\in\mathbb R\setminus\{0\},
\end{equation}
where $\Sigma_1(k)$ is 
the transition matrix in \eqref{3.27} corresponding to the left potential fragment $V_1.$

\end{enumerate}

\end{proposition}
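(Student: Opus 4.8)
The plan is to establish (a) and (b) by a half-line identification of the full-line Jost solutions with the Jost solutions of the fragments, and then to obtain (c) and (d) by direct block-matrix multiplication. For part (a), the starting point is that $V(x)=V_2(x)$ for $x\geq 0$ by \eqref{1.5}. First I would observe that, for $x\geq 0,$ the integral representation \eqref{4.6} for $f_{\rm{l}}(k,x)$ involves only the values $V(y)$ with $y\geq x\geq 0,$ where $V=V_2;$ hence on $[0,\infty)$ the solution $f_{\rm{l}}(k,\cdot)$ satisfies the same Volterra equation as the left Jost solution of $V_2,$ and by uniqueness $f_{\rm{l}}(k,x)=f_{\rm{l}2}(k,x)$ for $x\geq 0.$ The decisive step is then to use that $V_2$ vanishes on $x<0$: there $f_{\rm{l}2}(k,\cdot)$ solves the free equation $-\psi''=k^2\psi$ and is therefore an exact linear combination of $e^{ikx}$ and $e^{-ikx},$ so the asymptotic scattering relation \eqref{2.3} written for $V_2$ holds \emph{exactly}, not merely modulo $o(1),$ on all of $(-\infty,0].$ Evaluating this exact plane-wave expression and its $x$-derivative at $x=0,$ together with the corresponding identities at $-k,$ furnishes the four block entries; substituting them into the definition \eqref{2.12} yields \eqref{3.29}.

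Part (b) is entirely symmetric: for $x\leq 0$ one has $V=V_1,$ so the integral representation \eqref{4.7} identifies $f_{\rm{r}}(k,x)$ with the right Jost solution $f_{\rm{r}1}(k,x)$ of $V_1$ on $(-\infty,0],$ and since $V_1$ vanishes on $x>0$ the relation \eqref{2.4} written for $V_1$ holds exactly on $[0,\infty);$ evaluating at $x=0$ and inserting into \eqref{2.13} produces \eqref{3.30}. I would flag the passage from the asymptotic relations \eqref{2.3}--\eqref{2.4} to exact identities at the fragmentation point as the one genuine subtlety of the proof: it is legitimate precisely because each fragment vanishes identically on the complementary half-line, forcing the relevant Jost solution to be a true plane-wave combination there, with no error term to discard at $x=0.$

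Parts (c) and (d) are purely computational. For (c) I would first multiply out, from the right, the five elementary block factors defining $q_9$ in \eqref{3.32} to confirm the collapsed form \eqref{3.35}. Next I would compute $q_{10}\,q_{11}$ from \eqref{3.33}--\eqref{3.34}, where the upper-triangular factor of $q_{10}$ and the lower-triangular factor of $q_{11}$ combine so that the $L_2(-k)\,L_2(k)$ contributions cancel in the $(1,1)$ block, reproducing the entries of $\Lambda_2(k)$ in \eqref{3.26} and hence \eqref{3.36}; forming $q_9\,\Lambda_2(k)$ and checking it block by block against \eqref{3.29} then gives \eqref{3.31}. The argument for (d) has the identical structure: computing $q_{12}\,q_{13}$ from \eqref{3.38}--\eqref{3.39} yields $\Sigma_1(k)$ of \eqref{3.27}, establishing \eqref{3.40}, after which $q_9\,\Sigma_1(k)$ matches \eqref{3.30} and gives \eqref{3.37}. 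Throughout, the value $k=0$ is included by the continuity convention \eqref{1.6}--\eqref{1.8}, since factors such as $T_{\rm{l}2}(k)^{-1}$ and $T_{\rm{r}1}(k)^{-1}$ behave as $O(1/k)$ near the origin.
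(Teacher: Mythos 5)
Your proposal is correct and follows essentially the same route as the paper: identify $f_{\rm{l}}(k,\cdot)$ with $f_{\rm{l}2}(k,\cdot)$ on $[0,+\infty)$ and $f_{\rm{r}}(k,\cdot)$ with $f_{\rm{r}1}(k,\cdot)$ on $(-\infty,0],$ use the vanishing of each fragment on the complementary half line to turn the asymptotic relations \eqref{2.3}--\eqref{2.4} into exact plane-wave identities there, evaluate at $x=0$ to get \eqref{3.29} and \eqref{3.30}, and then confirm (c) and (d) by direct block multiplication together with the continuity convention \eqref{1.6}--\eqref{1.8} at $k=0.$ The only cosmetic difference is that you justify the identification of the Jost solutions via the Volterra equation \eqref{4.6}, whereas the paper invokes uniqueness of the solution of \eqref{1.1} satisfying the asymptotics \eqref{2.1}; both are sound.
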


\begin{proof}
We remark that, although the matrices $q_{11}$ and $q_{13}$ behave as $O(1/k)$ as $k\to 0,$ 
the equalities in
\eqref{3.31} and \eqref{3.37} hold also at $k=0$ by the continuity.
Let us use $f_{\rm{l}1}(k,x)$ and $f_{\rm{r}1}(k,x)$ to denote the left and right Jost solutions, respectively, 
corresponding to the potential fragment $V_1.$ Similarly, let us use
$f_{\rm{l}2}(k,x)$ and $f_{\rm{r}2}(k,x)$ to denote the left and right Jost solutions, respectively, corresponding to the potential
fragment $V_2.$ Since $f_{\rm{l}}(k,x)$ and $f_{\rm{l}2}(k,x)$ both satisfy \eqref{1.1} and the asymptotics \eqref{2.1}, we have
\begin{equation}
\label{3.41}
f_{\rm{l}2}(k,x)=f_{\rm{l}}(k,x),\quad f'_{\rm{l}2}(k,x)=f'_{\rm{l}}(k,x),\qquad x\in[0,+\infty).
\end{equation}
Similarly, since $f_{\rm{r}}(k,x)$ and $f_{\rm{r}1}(k,x)$ both satisfy \eqref{1.1} and the asymptotics \eqref{2.2}, we have
\begin{equation}
\label{3.42}
f_{\rm{r}1}(k,x)=f_{\rm{r}}(k,x),\quad f'_{\rm{r}1}(k,x)=f'_{\rm{r}}(k,x),\qquad x\in(-\infty,0].
\end{equation}
From \eqref{1.5}, \eqref{2.3}, \eqref{2.4}, \eqref{3.41}, and \eqref{3.42}, we get
\begin{equation}
\label{3.43}
f_{\rm{l}2}(k,x)=e^{ikx}\, T_{\rm{l}2}(k)^{-1}+e^{-ikx} L_2(k)\,T_{\rm{l}2}^{-1}(k), \qquad x\in(-\infty,0],
\end{equation}
\begin{equation}
\label{3.44}
f_{\rm{r}1}(k,x)=e^{-ikx}\, T_{\rm{r}1}(k)^{-1}+e^{ikx} R_1(k)\,T_{\rm{r}1}^{-1}(k), \qquad x\in[0,+\infty).
\end{equation}
Comparing \eqref{3.41} and \eqref{3.43} at $x=0$ and using the result in \eqref{2.12},
we establish \eqref{3.29}, which completes the proof of (a).
Similarly, comparing  
\eqref{3.42} and \eqref{3.44} at $x=0$ and using the result in \eqref{2.13},
we establish \eqref{3.30}. Hence, the proof of (b) is also complete.
We can confirm \eqref{3.31} directly by evaluating the matrix product
on its right-hand side with the help of \eqref{3.32}--\eqref{3.34}. Similarly,
\eqref{3.35} can directly be confirmed by evaluating the matrix products
on the right-hand sides of \eqref{3.32}--\eqref{3.34}. Thus, the proof of (c) is complete.
Let us now prove (d).
We can verify \eqref{3.37} directly by evaluating the matrix product
on its right-hand side with the help of  \eqref{3.35}, \eqref{3.38}, and \eqref{3.39}.
In the same manner,
\eqref{3.40} can be directly confirmed by evaluating the matrix products
on the right-hand sides of \eqref{3.38} and \eqref{3.39} and by comparing the result with
 \eqref{3.27}. Hence, the proof of (d) is also complete.
\end{proof}

In the next theorem we present our factorization formula corresponding to the potential fragmentation given in \eqref{1.4}.

\begin{theorem}
\label{theorem3.3} Consider the full-line $n\times n$ matrix Schr\"odinger equation \eqref{1.1} with the
potential $V$ satisfying \eqref{1.2} and \eqref{1.3}. Let $V_1$ and $V_2$ denote
the left and right fragments of $V$ described in \eqref{1.4} and \eqref{1.5}. 
Let $\Lambda(k),$ $\Lambda_1(k),$ and $\Lambda_2(k)$ be the $2n\times 2n$
transition matrices defined in \eqref{2.20}, \eqref{3.25}, and \eqref{3.26}
corresponding to $V,$ $V_1,$ and $V_2,$ respectively.
Similarly, let $\Sigma(k),$ $\Sigma_1(k),$ and $\Sigma_2(k)$ denote the $2n\times 2n$
transition matrices defined in \eqref{2.21}, \eqref{3.27}, and \eqref{3.28}
corresponding to $V,$ $V_1,$ and $V_2,$ respectively.
Then, we have the following:

\begin{enumerate}

\item[\text{\rm(a)}]  The transition matrix $\Lambda(k)$ is equal to the ordered matrix product 
$\Lambda_1(k)\,\Lambda_2(k),$ i.e. we have
\begin{equation}
\label{3.45}\Lambda(k)=\Lambda_1(k)\,\Lambda_2(k),\qquad k\in\mathbb R\setminus\{0\}.
\end{equation}
\item[\text{\rm(b)}] The factorization formula \eqref{3.45} can also be expressed in terms of the transition matrices 
$\Sigma(k),$ $\Sigma_1(k),$ and $\Sigma_2(k)$ as
 \begin{equation}
\label{3.46}
\Sigma(k)=\Sigma_2(k)\,\Sigma_1(k),\qquad k\in\mathbb R\setminus\{0\}.
\end{equation}

\end{enumerate}

\end{theorem}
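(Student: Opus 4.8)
The plan is to evaluate the transition relations \eqref{3.5} and \eqref{3.6} at the fragmentation point $x=0$ and to feed in the factored expressions for $F_{\rm{l}}(k,0)$ and $F_{\rm{r}}(k,0)$ supplied by Proposition~\ref{proposition3.2}. First I would observe that, combining \eqref{3.31} with \eqref{3.36}, the matrix $F_{\rm{l}}(k,0)$ factors as $F_{\rm{l}}(k,0)=q_9\,\Lambda_2(k)$, while combining \eqref{3.37} with \eqref{3.40} gives $F_{\rm{r}}(k,0)=q_9\,\Sigma_1(k)$, both valid for $k\in\mathbb R\setminus\{0\}$. Since \eqref{3.5} holds for every $x\in\mathbb R$, in particular at $x=0$ it reads $F_{\rm{l}}(k,0)=F_{\rm{r}}(k,0)\,\Lambda(k)$, and substituting the two factorizations turns this into
\begin{equation*}
q_9\,\Lambda_2(k)=q_9\,\Sigma_1(k)\,\Lambda(k),\qquad k\in\mathbb R\setminus\{0\}.
\end{equation*}

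The matrix $q_9$ given in \eqref{3.35} has determinant $(-2ik)^n$, so it is invertible for $k\neq 0$ and may be cancelled from the left, leaving $\Lambda_2(k)=\Sigma_1(k)\,\Lambda(k)$. To isolate $\Lambda(k)$ I would invoke the inverse relation \eqref{3.7}, which applies verbatim to the fragment $V_1$ because $V_1$ again satisfies \eqref{1.2} and \eqref{1.3}; thus $\Lambda_1(k)\,\Sigma_1(k)=I$, i.e. $\Sigma_1(k)=\Lambda_1(k)^{-1}$. Premultiplying $\Lambda_2(k)=\Sigma_1(k)\,\Lambda(k)$ by $\Lambda_1(k)$ then yields $\Lambda_1(k)\,\Lambda_2(k)=\Lambda(k)$, which is exactly \eqref{3.45} and settles part~(a).

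For part~(b) I would avoid repeating the argument with \eqref{3.6} and instead simply invert \eqref{3.45}. Taking inverses gives $\Lambda(k)^{-1}=\Lambda_2(k)^{-1}\,\Lambda_1(k)^{-1}$, and by \eqref{3.7} together with its fragment versions one has $\Lambda(k)^{-1}=\Sigma(k)$ and $\Lambda_j(k)^{-1}=\Sigma_j(k)$ for $j=1,2$; this immediately produces $\Sigma(k)=\Sigma_2(k)\,\Sigma_1(k)$, which is \eqref{3.46}. The computation is short once Proposition~\ref{proposition3.2} is in hand — the conceptual crux, already absorbed into that proposition, is that evaluating the $x$-independent transition relations at the junction $x=0$ is precisely what couples the full-line transition matrix to those of the two fragments, since there the full-potential Jost solutions agree with the corresponding fragment Jost solutions by \eqref{3.41} and \eqref{3.42}. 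The only genuine point requiring care is therefore the \emph{order} of the factors: because $q_9$ multiplies $\Lambda_2(k)$ and $\Sigma_1(k)$ from the left in the two factorizations, while \eqref{3.5} places $\Lambda(k)$ to the right of $F_{\rm{r}}(k,0)$, the cancellation forces the specific ordering $\Lambda_1(k)\,\Lambda_2(k)$ (left fragment first) and, dually, $\Sigma_2(k)\,\Sigma_1(k)$ (right fragment first); reversing either product would be incorrect. The exclusion of $k=0$ is likewise dictated solely by the invertibility of $q_9$.
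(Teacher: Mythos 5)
Your proposal is correct and follows essentially the same route as the paper's own proof: evaluate \eqref{3.5} at $x=0$, insert the factorizations of $F_{\rm{l}}(k,0)$ and $F_{\rm{r}}(k,0)$ from Proposition~\ref{proposition3.2}, cancel the invertible factor $q_9$, and use $\Sigma_1(k)^{-1}=\Lambda_1(k)$ from \eqref{3.7} to obtain \eqref{3.45}, then invert \eqref{3.45} and apply \eqref{3.7} again to get \eqref{3.46}. Your explicit computation of $\det[q_9]=(-2ik)^n$ and your remark that \eqref{3.7} applies to the fragment $V_1$ because it also satisfies \eqref{1.2} and \eqref{1.3} are small additions of care, not deviations.
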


\begin{proof}
Evaluating \eqref{3.5} at $x=0,$ we get
\begin{equation}
\label{3.47}
F_{\rm{l}}(k,0)=F_{\rm{r}}(k,0)\,\Lambda(k),\qquad k\in\mathbb R.
\end{equation}
Using \eqref{3.31} and \eqref{3.37} on the left and right-hand sides of \eqref{3.47},
respectively, we obtain
\begin{equation}
\label{3.48}
q_9\,q_{10}\,q_{11}=q_9\,q_{12}\,q_{13}\,\Lambda(k),\qquad k\in\mathbb R\setminus\{0\}.
\end{equation}
 From \eqref{3.35} we see that $q_9$ is invertible when $k\in\mathbb R\setminus\{0\}.$ Thus, using 
 \eqref{3.36} and \eqref{3.40} in \eqref{3.48}, we get
\begin{equation*}
%\label{3.49}
\Lambda_2(k)=\Sigma_1(k)\,\Lambda(k),\qquad k\in\mathbb R\setminus\{0\},
\end{equation*}
or equivalently 
 \begin{equation}
\label{3.50}
\Sigma_1(k)^{-1}\,\Lambda_2(k)=\Lambda(k),\qquad k\in\mathbb R\setminus\{0\}.
\end{equation}
From \eqref{3.7}
we already know that $\Sigma_1(k)^{-1}=\Lambda_1(k),$ and hence \eqref{3.50} yields \eqref{3.45}.
Thus, the proof of (a) is complete.
Taking the matrix inverses of both sides of \eqref{3.45} and then making use of \eqref{3.7}, we
obtain \eqref{3.46}. Thus, the proof of (b) is also complete.
\end{proof}

The following theorem shows that the factorization formulas \eqref{3.45} and \eqref{3.46}
also hold if the potential $V$ in \eqref{1.1} is decomposed into $V_1$ and $V_2$
by choosing the fragmentation point anywhere on the real axis, not necessarily at $x=0.$

\begin{theorem}
\label{theorem3.4} Consider the full-line $n\times n$ matrix Schr\"odinger equation \eqref{1.1} with the
potential $V$ satisfying \eqref{1.2} and \eqref{1.3}. Let $V_1$ and $V_2$ denote
the left and right fragments of $V$ described as in \eqref{1.4}, but \eqref{1.5} replaced with
\begin{equation}
\label{3.51}
V_1(x):=
\begin{cases}V(x),\qquad x < b,\\
\noalign{\medskip}
0,\qquad x>b,
\end{cases}
\quad
V_2(x):=
\begin{cases}0,\qquad x < b,\\
\noalign{\medskip}
V(x),\qquad x>b,  
\end{cases}
\end{equation}
where $b$ is a fixed real constant.
Let $\Lambda(k),$ $\Lambda_1(k),$ and $\Lambda_2(k)$ appearing 
 in \eqref{2.20}, \eqref{3.25}, \eqref{3.26}, respectively, be the
 transition matrices 
corresponding to $V,$ $V_1,$ and $V_2,$ respectively.
Similarly, let $\Sigma(k),$ $\Sigma_1(k),$ and $\Sigma_2(k)$ appearing 
 in \eqref{2.21}, \eqref{3.27}, \eqref{3.28}
 be the
 transition matrices 
corresponding to $V,$ $V_1,$ and $V_2,$ respectively.
Then, we have the following:

\begin{enumerate}

\item[\text{\rm(a)}]  The transition matrix $\Lambda(k)$ is equal to the ordered matrix product 
$\Lambda_1(k)\,\Lambda_2(k),$ i.e. we have
\begin{equation}
\label{3.52}\Lambda(k)=\Lambda_1(k)\,\Lambda_2(k),\qquad k\in\mathbb R\setminus\{0\}.
\end{equation}
\item[\text{\rm(b)}] The factorization formula \eqref{3.52} can also be expressed in terms of the transition matrices 
$\Sigma(k),$ $\Sigma_1(k),$ and $\Sigma_2(k)$ as
 \begin{equation}
\label{3.53}
\Sigma(k)=\Sigma_2(k)\,\Sigma_1(k),\qquad k\in\mathbb R\setminus\{0\}.
\end{equation}

\end{enumerate}

\end{theorem}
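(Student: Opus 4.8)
The plan is to reduce Theorem~\ref{theorem3.4} to the already-established case of Theorem~\ref{theorem3.3}, in which the fragmentation point is $x=0$, by means of the shift transformation analyzed in Proposition~\ref{proposition3.1}. First I would pass to the shifted potential $V^{(b)}(x):=V(x+b)$ as in \eqref{3.15}, observing that the shift turns the fragmentation point $x=b$ into the fragmentation point $x=0$. A direct check from \eqref{3.51} and \eqref{3.15} shows that the left and right fragments of $V^{(b)}$ at $x=0$ are exactly $V_1^{(b)}$ and $V_2^{(b)}$, where $V_j^{(b)}(x):=V_j(x+b)$: indeed, for $x<0$ one has $V_1^{(b)}(x)=V_1(x+b)=V(x+b)=V^{(b)}(x)$, while $V_1^{(b)}(x)=0$ for $x>0$, and symmetrically for $V_2^{(b)}$. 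Each $V_j$ is again a selfadjoint Faddeev-class potential, being a restriction of $V$, so Proposition~\ref{proposition3.1} applies to $V$, $V_1$, and $V_2$ individually.

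Next I would apply Theorem~\ref{theorem3.3} to the shifted potential $V^{(b)}$, whose $x=0$ fragments are $V_1^{(b)}$ and $V_2^{(b)}$, to obtain
\begin{equation*}
\Lambda^{(b)}(k)=\Lambda_1^{(b)}(k)\,\Lambda_2^{(b)}(k),\qquad k\in\mathbb R\setminus\{0\},
\end{equation*}
where $\Lambda^{(b)}$, $\Lambda_1^{(b)}$, $\Lambda_2^{(b)}$ are the transition matrices of $V^{(b)}$, $V_1^{(b)}$, $V_2^{(b)}$, respectively. Then I would invoke Proposition~\ref{proposition3.1}(c), applied separately to $V$, $V_1$, and $V_2$, which expresses each shifted transition matrix as a conjugate $\Lambda^{(b)}(k)=D(k)\,\Lambda(k)\,D(k)^{-1}$, and likewise $\Lambda_j^{(b)}(k)=D(k)\,\Lambda_j(k)\,D(k)^{-1}$, where $D(k)$ is the diagonal matrix $\mathrm{diag}(e^{ikb}I,\,e^{-ikb}I)$ appearing in \eqref{3.19}, which is invertible for $k\in\mathbb R\setminus\{0\}$. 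Substituting these into the factorization for $V^{(b)}$ gives
\begin{equation*}
D(k)\,\Lambda(k)\,D(k)^{-1}=D(k)\,\Lambda_1(k)\,D(k)^{-1}\,D(k)\,\Lambda_2(k)\,D(k)^{-1},
\end{equation*}
in which the inner factors $D(k)^{-1}D(k)$ collapse to the identity. Cancelling the outer factors $D(k)$ and $D(k)^{-1}$, which is legitimate for $k\in\mathbb R\setminus\{0\}$, yields $\Lambda(k)=\Lambda_1(k)\,\Lambda_2(k)$, which is \eqref{3.52}; the equality at $k=0$ then follows from the continuity convention of \eqref{1.6}--\eqref{1.8}, exactly as in the proof of Theorem~\ref{theorem3.3}. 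For part (b), I would take matrix inverses of \eqref{3.52} and use \eqref{3.7}, applied to each fragment, to replace $\Lambda_j(k)^{-1}$ by $\Sigma_j(k)$ and $\Lambda(k)^{-1}$ by $\Sigma(k)$, obtaining $\Sigma(k)=\Sigma_2(k)\,\Sigma_1(k)$, which is \eqref{3.53}.

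I do not expect a serious obstacle, since the analytic content is carried entirely by Theorem~\ref{theorem3.3} and Proposition~\ref{proposition3.1}, and the remaining argument is purely algebraic. The one point requiring genuine care is the bookkeeping identification that the $x=0$ fragments of the shifted potential $V^{(b)}$ coincide with the shifts $V_1^{(b)}$ and $V_2^{(b)}$ of the $x=b$ fragments of $V$; this must be verified directly from the definitions \eqref{1.5}, \eqref{3.51}, and \eqref{3.15}. A secondary point, already routine in this paper, is the cancellation of the conjugating factors $D(k)$, which relies on the invertibility of $D(k)$ away from $k=0$ together with the standard continuity argument used to extend the identities to $k=0$.
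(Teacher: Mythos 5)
Your proposal is correct and follows essentially the same route as the paper: shift the potential so the fragmentation point moves to $x=0$, apply Theorem~\ref{theorem3.3} to the shifted potential and its fragments, and then undo the shift using the conjugation formulas of Proposition~\ref{proposition3.1}(c), cancelling the diagonal factors $\mathrm{diag}(e^{ikb}I,\,e^{-ikb}I)$. The only cosmetic difference is that you obtain \eqref{3.53} by inverting \eqref{3.52} via \eqref{3.7}, whereas the paper carries the $\Sigma$-factorization through the same shift argument in parallel; both are immediate.
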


\begin{proof} Let us translate the potential $V$ and its fragments $V_1$ and $V_2$ as in \eqref{3.15}. The
shifted potentials satisfy
\begin{equation*}
%\label{3.54}
V^{(b)}(x)= V_1^{(b)}(x)+ V_2^{(b)}(x), \qquad x \in \mathbb R,
\end{equation*}
where we have defined
\begin{equation*}
%\label{3.55}
V^{(b)}(x):= V(x+b),\quad
V_1^{(b)}(x):= V_1(x+b),\quad
V_2^{(b)}(x):= V_2(x+b), \qquad x \in \mathbb R.
\end{equation*}
Note that the shifted potential fragments $V_1^{(b)}$ and
$V_2^{(b)}$ correspond to the fragments of
$V^{(b)}$ with the fragmentation point $x=0,$ i.e. we have
\begin{equation*}
%\label{3.56}
V_1^{(b)}(x)=
\begin{cases}V^{(b)}(x),\qquad x < 0,\\
\noalign{\medskip}
0,\qquad x>0,
\end{cases}
\quad
V_2^{(b)}(x)= 
\begin{cases}0,\qquad x < 0,\\
\noalign{\medskip}
V^{(b)}(x),\qquad x>0.
\end{cases}
\end{equation*}
Since the shifted potential $V^{(b)}$ is fragmented at $x=0,$ we can apply Theorem~\ref{theorem3.3} to $V^{(b)}.$
Let us use $T_{\rm{l}j}^{(b)}(k),$ $T_{\rm{r}j}^{(b)}(k),$ $L_j^{(b)}(k),$ $R_j^{(b)}(k),$
$\Lambda_j^{(b)}(k),$ and $\Sigma_j^{(b)}(k)$ to denote
the left and right transmission coefficients,  the left and right
reflections coefficients, and  the transition matrices,  respectively,
for the shifted potentials $V^{(b)}_j(x)$ with $j=1$ and $j=2.$
In analogy with \eqref{3.25}--\eqref{3.28}, for $k\in\mathbb R\setminus\{0\}$ we have
\begin{equation*}
%\label{3.57}
\Lambda_j^{(b)}(k):=\begin{bmatrix}[T_{\rm{l}j}^{(b)}(k)]^{-1}&L_j^{(b)}(-k) \,[T_{\rm{l}j}^{(b)}(-k)]^{-1} \\
\noalign{\medskip}
L_j^{(b)}(k)\,[T^{(b)}_{\rm{l}j}(k)]^{-1}&[T_{\rm{l}j}^{(b)}(-k)]^{-1}\end{bmatrix},\qquad j=1,2,
\end{equation*}
\begin{equation*}
%\label{3.58}
\Sigma_j^{(b)}(k):=\begin{bmatrix}[T_{\rm{r}j}^{(b)}(-k)]^{-1}&R_j^{(b)}(k)\,[T_{\rm{r}j}^{(b)}(k)]^{-1} \\
\noalign{\medskip}
R_j^{(b)}(-k)\,[T_{\rm{r}j}^{(b)}(-k)]^{-1}&[T_{\rm{r}j}^{(b)}(k)]^{-1}\end{bmatrix},\qquad j=1,2.
\end{equation*}
From Theorem~\ref{theorem3.3} we have
\begin{equation}
\label{3.59}\Lambda^{(b)}(k)=\Lambda_1^{(b)}(k)\,\Lambda_2^{(b)}(k),\quad
\Sigma^{(b)}(k)=\Sigma_2^{(b)}(k)\,\Sigma_1^{(b)}(k),
\qquad k\in\mathbb R\setminus\{0\}.
\end{equation}
Using \eqref{3.19} and \eqref{3.20} in \eqref{3.59}, after
some minor simplification we get 
\eqref{3.52} and \eqref{3.53}.
\end{proof}

The result of Theorem~\ref{theorem3.4} can easily be extended from two fragments to any finite number of fragments. This is because any existing fragment
can be decomposed into further subfragments by applying the factorization formulas \eqref{3.52} and \eqref{3.53}
to each fragment and to its subfragments. Since a proof can be obtained by using an induction on the number of fragments,
we state the result as a corollary without a proof.

\begin{corollary}
\label{corollary3.5} 
Consider the full-line $n\times n$ matrix Schr\"odinger equation \eqref{1.1} with the matrix-valued
potential $V$ satisfying \eqref{1.2} and \eqref{1.3}. Let $S(k),$ $\Lambda(k),$ and 
$\Sigma(k)$ defined in \eqref{2.5}, \eqref{2.20}, and \eqref{2.21}, respectively
be the corresponding scattering matrix and the transition matrices,
with $T_{\rm{l}}(k),$ $T_{\rm{r}}(k),$ $L(k),$ and $R(k)$
denoting the corresponding left and right transmission coefficients and the left and right reflection coefficients, respectively.
Assume that $V$ is partitioned into $P+1$ fragments 
$V_j$ at the fragmentation points $b_j$ with $1\le j\le P$ as
\begin{equation*}
%\label{3.60}
V(x)= \sum_{j=1}^{P+1} V_j(x),
\end{equation*}
where $P$ is any fixed positive integer and 
\begin{equation*}
%\label{3.61}
V_j(x):=
\begin{cases}V(x),\qquad x \in (b_{j-1}, b_j),\\
\noalign{\medskip}
0,\qquad x\notin  (b_{j-1}, b_ j), 
\end{cases}
\end{equation*}
with $b_0:=-\infty,$ $b_{P+1}:=+\infty,$ and $b_j<b_{j+1}.$ Let
$T_{\rm{l}j}(k),$
$T_{\rm{r}j}(k),$
$L_j(k),$
$R_j(k)$ be the corresponding 
left and right transmission coefficients and the left and right reflection coefficients, respectively,
for the potential fragment $V_j.$ Let 
$S_j(k),$
$\Lambda_j(k),$ and
$\Sigma_j(k)$ denote the  
scattering matrix and the transition matrices for the corresponding
fragment $V_j,$ which are defined as
\begin{equation*}
%\label{3.62}
S_{j}(k):=\begin{bmatrix}T_{\rm{l}j}(k)&R_j(k) \\ 
\noalign{\medskip}
 L_{ j }(k)& T_{\rm{r} j}(k)\end{bmatrix}, \qquad k\in\mathbb R,
\end{equation*}
\begin{equation*}
%\label{3.63}
\Lambda_j(k):=\begin{bmatrix}T_{\rm{l}j}(k)^{-1}&L_j(-k)\,T_{\rm{l}j}(-k)^{-1} \\
\noalign{\medskip}
L_j(k)\,T_{\rm{l}j}(k)^{-1}&T_{\rm{l}j}(-k)^{-1}\end{bmatrix},\qquad k\in\mathbb R \setminus\{0\},
\end{equation*}
\begin{equation*}
%\label{3.64}
\Sigma_j(k):=\begin{bmatrix}T_{\rm{r}j}(-k)^{-1}&R_j(k)\,T_{\rm{r}j}(k)^{-1} \\
\noalign{\medskip}
R_j(-k)\,T_{\rm{r}j}(-k)^{-1}&T_{\rm{r}j}(k)^{-1}\end{bmatrix},\qquad k\in\mathbb R \setminus\{0\}.
\end{equation*}
Then, the transition matrices $\Lambda(k)$ and $\Sigma(k)$ for the whole potential $V$
are expressed as ordered matrix products of the corresponding
transition matrices for the fragments as
\begin{equation*}
%\label{3.65}
\Lambda(k)=\Lambda_1(k)\, \Lambda_2(k)\cdots\Lambda_P(k)\,\Lambda_{P+1}(k),\qquad k\in\mathbb R\setminus\{0\},
\end{equation*}
\begin{equation*}
%\label{3.66}
\Sigma(k)=\Sigma_{P+1}(k)\, \Sigma_P(k)\cdots\Sigma_2(k)\,\Sigma_1(k),\qquad k\in\mathbb R \setminus\{0\}.
\end{equation*}
\end{corollary}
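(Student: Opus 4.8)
The plan is to establish the two formulas by induction on the number $P$ of fragmentation points, using Theorem~\ref{theorem3.4} both as the base case and as the engine driving the inductive step, and then to obtain the $\Sigma$-factorization from the $\Lambda$-factorization simply by inverting the ordered product. For the base case $P=1$ the potential is split at the single point $b_1$ into the two fragments $V_1$ and $V_2$, and the identity $\Lambda(k)=\Lambda_1(k)\,\Lambda_2(k)$ is exactly Theorem~\ref{theorem3.4}(a).

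For the inductive step I would assume the $\Lambda$-factorization for every full-line potential partitioned at $P$ fragmentation points, and then treat a potential $V$ partitioned at the $P+1$ points $b_1<\dots<b_{P+1}$ into the $P+2$ fragments $V_1,\dots,V_{P+2}$. The decisive move is to first split $V$ at the single point $b_{P+1}$ into
\[
W_1(x):=\begin{cases} V(x), & x<b_{P+1},\\ 0, & x>b_{P+1},\end{cases}
\qquad
W_2(x):=\begin{cases} 0, & x<b_{P+1},\\ V(x), & x>b_{P+1},\end{cases}
\]
so that $V=W_1+W_2$. Applying Theorem~\ref{theorem3.4}(a) with fragmentation point $b_{P+1}$ gives $\Lambda(k)=\Lambda_{W_1}(k)\,\Lambda_{W_2}(k)$, where $\Lambda_{W_1}$ and $\Lambda_{W_2}$ are the transition matrices of the full-line potentials $W_1$ and $W_2$. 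I would then observe that $W_1$ is itself a full-line potential inheriting \eqref{1.2} and \eqref{1.3} from $V$, partitioned at the $P$ points $b_1,\dots,b_P$ into precisely the fragments $V_1,\dots,V_{P+1}$; the inductive hypothesis therefore yields $\Lambda_{W_1}(k)=\Lambda_1(k)\cdots\Lambda_{P+1}(k)$. Since $W_2$ coincides with the single fragment $V_{P+2}$, one has $\Lambda_{W_2}(k)=\Lambda_{P+2}(k)$, and combining the three identities produces the desired $(P+2)$-fold ordered product, completing the induction.

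The $\Sigma$-factorization requires no separate induction. By \eqref{3.7} each fragment satisfies $\Sigma_j(k)=\Lambda_j(k)^{-1}$ and the whole potential satisfies $\Sigma(k)=\Lambda(k)^{-1}$; hence taking the inverse of $\Lambda(k)=\Lambda_1(k)\cdots\Lambda_{P+1}(k)$ and reversing the order of the factors gives $\Sigma(k)=\Sigma_{P+1}(k)\cdots\Sigma_1(k)$ directly.

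The step deserving the most care is purely a bookkeeping one: verifying that the fragments of the auxiliary potential $W_1$ are literally the same potentials $V_1,\dots,V_{P+1}$ appearing in the corollary, and that each transition matrix $\Lambda_j$ depends only on $V_j$ viewed as a standalone full-line potential and not on the surrounding decomposition. This is automatic, since each $V_j$ is defined in the statement as a genuine full-line potential (compactly supported for interior indices and half-line supported at the two ends) satisfying \eqref{1.2} and \eqref{1.3}, so Proposition~\ref{proposition2.2} applies to each $V_j$ and assigns it the same $\Lambda_j$ in every context. Once this identification is checked, no genuine obstacle remains and the induction is routine.
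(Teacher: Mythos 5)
Your proposal is correct and is essentially the argument the paper has in mind: the paper states this corollary without proof, remarking only that it follows by induction on the number of fragments via repeated application of Theorem~\ref{theorem3.4}, which is exactly the induction you carry out (splitting at the last fragmentation point and invoking the inductive hypothesis on the truncated potential $W_1$). Your shortcut of obtaining the $\Sigma$-formula by inverting the $\Lambda$-product via \eqref{3.7}, rather than running a parallel induction, is a harmless and clean simplification of the same approach.
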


In Theorem~\ref{theorem3.4} the transition matrix for a potential on the full line is expressed as a matrix product of
the transition matrices for the left and right potential fragments.
In the next theorem, we express the scattering coefficients 
of a potential on the full line in terms of the scattering coefficients of the two potential fragments.

\begin{theorem}
\label{theorem3.6} Consider the full-line $n\times n$ matrix Schr\"odinger equation \eqref{1.1} with the
potential $V$ satisfying \eqref{1.2} and \eqref{1.3}. Assume that $V$ is fragmented at an arbitrary point $x=b$ into the two pieces $V_1$ and $V_2$
as described in \eqref{1.4} and \eqref{3.51}. Let $S(k)$ given in \eqref{2.5} be the scattering matrix for the potential $V,$ and let $S_1(k)$ and $S_2(k)$
given in \eqref{3.24} be the scattering matrices corresponding to the potential fragments $V_1$ and
$V_2,$ respectively. Then, for $k\in\mathbb R$
the scattering coefficients in $S(k)$ are related to the right scattering coefficients
in $S_1(k)$ and the left scattering coefficients in $S_2(k)$ as
\begin{equation}
\label{3.67}
T_{\rm{l}}(k)=T_{\rm{l}2}(k)\left[I-R_1(k)\,L_2(k)\right]^{-1} T_{\rm{r}1}(-k)^\dagger,
\end{equation}
\begin{equation}
\label{3.68}
L(k)=[T_{\rm{r}1}(k)^\dagger]^{-1}\left[L_2(k)-R_1(-k)\right]\left[I-R_1(k)\,L_2(k)\right]^{-1} T_{\rm{r}1}(-k)^\dagger,
\end{equation}
\begin{equation}
\label{3.69}
T_{\rm{r}}(k)=T_{\rm{r}1}(k)\left[I-L_2(k)\,R_1(k)\right]^{-1} T_{\rm{l}2}(-k)^\dagger,
\end{equation}
\begin{equation}
\label{3.70}
R(k)=T_{\rm{l}2}(k)\left[I-R_1(k)\,L_2(k)\right]^{-1}  \left[R_1(k)-L_2(-k)\right] T_{\rm{l}2}(-k)^{-1},
\end{equation}
where we recall that $I$ is the $n\times n$ identity matrix and the dagger denotes the matrix adjoint.
\end{theorem}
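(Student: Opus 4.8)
The plan is to read the four identities off the factorization formula of Theorem~\ref{theorem3.4}, reducing each transmission/reflection pair to a coupled system of two $n\times n$ matrix equations and then solving that system, with the final simplification supplied by the unitarity relations of the fragments. First I would record two workhorse identities. Each fragment $V_j$ is selfadjoint and lies in the Faddeev class, these properties being inherited from $V$, so every relation of Section~\ref{section2} -- in particular \eqref{2.49} and the unitarity relations \eqref{2.52}--\eqref{2.53} -- holds for $V_1$ and $V_2$ separately, and by \eqref{3.7} we have $\Lambda_1(k)=\Sigma_1(k)^{-1}$ and $\Sigma_2(k)=\Lambda_2(k)^{-1}$. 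Substituting these into the factorizations $\Lambda(k)=\Lambda_1(k)\,\Lambda_2(k)$ and $\Sigma(k)=\Sigma_2(k)\,\Sigma_1(k)$ of Theorem~\ref{theorem3.4} yields
\[
\Sigma_1(k)\,\Lambda(k)=\Lambda_2(k),\qquad \Lambda_2(k)\,\Sigma(k)=\Sigma_1(k),\qquad k\in\mathbb R\setminus\{0\}.
\]
The right-hand matrix of the first identity carries only the left scattering coefficients of $V_2$ through \eqref{3.26}, while $\Sigma_1(k)$ carries only the right scattering coefficients of $V_1$ through \eqref{3.27}, and $\Lambda(k)$ carries the unknowns $T_{\rm{l}}(k)$ and $L(k)$; the second identity plays the symmetric role for $T_{\rm{r}}(k)$ and $R(k)$.

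Next I would equate the first block columns of $\Sigma_1(k)\,\Lambda(k)=\Lambda_2(k)$ using the explicit block forms \eqref{2.20}, \eqref{2.21}, \eqref{3.26}, and \eqref{3.27}. Writing $X:=T_{\rm{l}}(k)^{-1}$ and $Z:=L(k)\,T_{\rm{l}}(k)^{-1}$, this gives two $n\times n$ equations; eliminating $Z$ produces
\[
\left[I-R_1(k)\,R_1(-k)\right]T_{\rm{r}1}(-k)^{-1}X=\left[I-R_1(k)\,L_2(k)\right]T_{\rm{l}2}(k)^{-1}.
\]
Equating the second block columns of $\Lambda_2(k)\,\Sigma(k)=\Sigma_1(k)$ and setting $W:=T_{\rm{r}}(k)^{-1}$, $U:=R(k)\,T_{\rm{r}}(k)^{-1}$ gives the analogous pair, whose elimination produces $\left[I-L_2(k)\,L_2(-k)\right]T_{\rm{l}2}(-k)^{-1}W=\left[I-L_2(k)\,R_1(k)\right]T_{\rm{r}1}(k)^{-1}$. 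The invertibility of $T_{\rm{l}}(k)$ and $T_{\rm{r}}(k)$ for $k\in\mathbb R\setminus\{0\}$ noted below \eqref{2.4a} justifies the solving, and the invertibility of the factors $I-R_1(k)L_2(k)$ and $I-L_2(k)R_1(k)$ follows a posteriori once the transmission coefficients have been extracted.

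The decisive step is to convert these solutions into the stated forms, and this is where the fragment unitarity relations enter and where the main difficulty lies. From the $(2,2)$ block of \eqref{2.52} applied to $V_1$ and the $(1,1)$ block applied to $V_2$, together with \eqref{2.49}, I would extract
\[
T_{\rm{r}1}(-k)^\dagger\,T_{\rm{r}1}(-k)=I-R_1(k)\,R_1(-k),\qquad T_{\rm{l}2}(-k)^\dagger\,T_{\rm{l}2}(-k)=I-L_2(k)\,L_2(-k),
\]
and their $k\mapsto -k$ companions. Substituting $T_{\rm{r}1}(-k)^{-1}=\left[I-R_1(k)R_1(-k)\right]^{-1}T_{\rm{r}1}(-k)^\dagger$ into the displayed equation for $X$ collapses the left-hand resolvent and yields \eqref{3.67} at once, and the symmetric substitution yields \eqref{3.69}.

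For the reflection coefficients, obtained as $L(k)=ZX^{-1}$ and $R(k)=UW^{-1}$, the naive expressions carry \emph{mismatched} resolvent factors, for instance $\left[I-R_1(k)R_1(-k)\right]^{-1}$ standing beside $\left[I-R_1(k)L_2(k)\right]^{-1}$, and the hard part is to collapse them into the single resolvent of \eqref{3.68} and \eqref{3.70}. I expect to do this with the push-through relation $A(I-BA)^{-1}=(I-AB)^{-1}A$ and the factoring identity $\left[I-R_1(k)L_2(k)\right]R_1(k)=R_1(k)\left[I-L_2(k)R_1(k)\right]$; together with $T_{\rm{r}1}(k)^\dagger T_{\rm{r}1}(k)=I-R_1(-k)R_1(k)$ these reduce the bracket $\left[I-R_1(-k)R_1(k)\right]\bigl(L_2(k)\left[I-R_1(k)L_2(k)\right]^{-1}-R_1(-k)\left[I-R_1(k)R_1(-k)\right]^{-1}\bigr)$ to $\left[L_2(k)-R_1(-k)\right]\left[I-R_1(k)L_2(k)\right]^{-1}$, which is exactly \eqref{3.68}; the computation for \eqref{3.70} is entirely parallel. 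Finally I would invoke the continuity convention \eqref{1.6}--\eqref{1.8} to extend all four identities from $k\in\mathbb R\setminus\{0\}$ to $k=0$.
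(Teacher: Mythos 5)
Your proposal is correct --- I verified the elimination equations and the resolvent-collapsing identity --- but it is organized genuinely differently from the paper's proof, and the comparison is instructive. The paper works directly with the block entries of $\Lambda(k)=\Lambda_1(k)\,\Lambda_2(k)$: it reads off the $(1,1)$ and $(2,1)$ entries, uses the \emph{off-diagonal} fragment unitarity relation \eqref{3.72} (the $(2,1)$ entry of \eqref{2.52} for $V_1$) together with \eqref{2.49} to trade the left coefficients $L_1(k),T_{\rm{l}1}(k)$ for the right ones $R_1(k),T_{\rm{r}1}(k)$, and thereby reaches \eqref{3.67} and \eqref{3.68} without any mismatched resolvents ever appearing; \eqref{3.69} then comes for free from the symmetry $T_{\rm{r}}(k)=T_{\rm{l}}(-k)^\dagger,$ and \eqref{3.70} from the unitarity of the \emph{full} scattering matrix via $R(k)=-T_{\rm{l}}(k)\,L(-k)\,T_{\rm{l}}(-k)^{-1}$ in \eqref{3.82}--\eqref{3.83}. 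You instead premultiply the factorizations into the forms $\Sigma_1(k)\,\Lambda(k)=\Lambda_2(k)$ and $\Lambda_2(k)\,\Sigma(k)=\Sigma_1(k),$ so that only the coefficient families present in the final formulas occur at all, solve the two resulting block systems by elimination, and collapse the mismatched resolvents using the \emph{diagonal} unitarity entries such as $T_{\rm{r}1}(-k)^\dagger\,T_{\rm{r}1}(-k)=I-R_1(k)\,R_1(-k)$ together with push-through identities; the identity you need, $(I-ba)\left[c\,(I-ac)^{-1}-b\,(I-ab)^{-1}\right]=(c-b)(I-ac)^{-1}$ with $a=R_1(k),$ $b=R_1(-k),$ $c=L_2(k),$ does hold, since $(I-ba)\,b=b\,(I-ab)$ and $(I-ba)\,c-b\,(I-ac)=c-b$; the companion identity needed for \eqref{3.70} holds by the same computation. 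What the paper's route buys is economy of matrix algebra: the resolvent-collapsing step never arises. What your route buys is uniformity and self-containment: all four formulas come from the same elimination machinery, and in particular \eqref{3.70} is obtained purely from the factorization and the fragments' unitarity, without invoking unitarity of the full $S(k)$. In both proofs the invertibility of $I-R_1(k)\,L_2(k)$ and $I-L_2(k)\,R_1(k)$ is legitimately read off a posteriori (each equals a product of invertible matrices by the derived equation), and both extend to $k=0$ by the continuity convention \eqref{1.6}--\eqref{1.8}.
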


\begin{proof}
From the $(1,1)$ entry in \eqref{3.52}, for 
$k\in\mathbb R\setminus\{0\}$
 we have
 \begin{equation}
\label{3.71}T_{\rm{l}}(k)^{-1}=T_{\rm{l}1}(k)^{-1} \,T_{\rm{l}2}(k)^{-1} +L_1(-k)\,T_{\rm{l}1}(-k)^{-1}\,L_2(k)\,T_{\rm{l}2}(k)^{-1}.
\end{equation}
 From the $(2,1)$ entry of \eqref{2.52} we know that 
\begin{equation}
\label{3.72}
R_1(k)^\dagger\,T_{\rm{l}1}(k)+ T_{\rm{r}1}(k)^\dagger\,L_1(k)=0,\qquad k\in\mathbb R.
\end{equation}
Multiplying both sides of \eqref{3.72} by $[T_{\rm{r}1}^\dagger(k)]^{-1}$ from the left and 
by $T_{\rm{l}1}(k)^{-1}$ from the right, we get
\begin{equation}\label{3.73}
[T_{\rm{r}1}(k)^\dagger]^{-1} R_1^\dagger(k) + L_1(k) \,T_{\rm{l}1}(k)^{-1}=0, \qquad k \in \mathbb R\setminus\{0\}.
\end{equation}
In \eqref{3.73}, after replacing $k$ by $-k$ we obtain
\begin{equation}\label{3.74}
L_1(-k) \, T_{\rm{l}1}(-k)^{-1}=-[T_{\rm{r}1}(-k)^\dagger]^{-1} R_1(-k)^\dagger, \qquad k \in \mathbb R\setminus\{0\}.
\end{equation}
Next, using \eqref{3.74} and the second equality of \eqref{2.49} in \eqref{3.71}, for $k\in\mathbb R\setminus\{0\}$ we get
\begin{equation}
\label{3.75}T_{\rm{l}}(k)^{-1}=T_{\rm{l}1}(k)^{-1} \,T_{\rm{l}2}(k)^{-1}-[T_{\rm{r}1}(-k)^\dagger]^{-1} 
 R_1(k)\,L_2(k)\,T_{\rm{l}2}(k)^{-1}.
\end{equation}
Using the third equality of \eqref{2.49} in the first term
on the right-hand side of \eqref{3.75}, for $k\in\mathbb R\setminus\{0\}$ we have
\begin{equation}
\label{3.76}T_{\rm{l}}(k)^{-1}=[T_{\rm{r}1}(-k)^\dagger]^{-1} 
T_{\rm{l}2}(k)^{-1}  -[T_{\rm{r}1}(-k)^\dagger]^{-1}  R_1(k)\,L_2(k)\,T_{\rm{l}2}(k)^{-1}.
\end{equation}
Factoring the right-hand side of \eqref{3.76} and then taking the inverses of both sides
of the resulting equation,
we obtain \eqref{3.67}.
Let us next prove \eqref{3.68}. From the $(2,1)$ entry of \eqref{3.52}, for $k\in\mathbb R\setminus\{0\}$ we have
\begin{equation}\label{3.77}
L(k) \,T_{\rm{l}}(k)^{-1}= L_1(k) \,T_{\rm{l}1}(k)^{-1}\, T_{\rm{l}2}(k) ^{-1} +T_{\rm{l}1}(-k)^{-1}\,L_2(k)\,
T_{\rm{l}2}(k)^{-1}.
\end{equation}
In \eqref{3.74} by replacing $k$ by $-k$ and using the resulting equality in \eqref{3.77}, when $k\in\mathbb R\setminus\{0\}$ we get
\begin{equation}\label{3.78}
L(k) \,T_{\rm{l}}(k)^{-1}= -[T_{\rm{r}1}(k)^\dagger]^{-1} R_1(k)^\dagger\, T_{\rm{l}2}(k) ^{-1} +T_{\rm{l}1}(-k)^{-1}\,L_2(k)\,
T_{\rm{l}2}(k)^{-1}.
\end{equation}
Next, using the third equality of \eqref{2.49} in the second term on the right-hand side of \eqref{3.78}, for $k\in\mathbb R\setminus\{0\}$ we obtain
\begin{equation*}
%\label{3.79}
L(k) \,T_{\rm{l}}(k)^{-1}= -[T_{\rm{r}1}(k)^\dagger]^{-1} R_1(k)^\dagger\, T_{\rm{l}2}(k) ^{-1} +[T_{\rm{r}1}(k)^\dagger]^{-1}L_2(k)\,
T_{\rm{l}2}(k)^{-1},
\end{equation*}
which is equivalent to
\begin{equation}\label{3.80}
L(k) \,T_{\rm{l}}(k)^{-1}=[T_{\rm{r}1}(k)^\dagger]^{-1} \left[L_2(k)-R_1(k)^\dagger\right]
T_{\rm{l}2}(k)^{-1},  \qquad k\in\mathbb R\setminus\{0\}.
\end{equation}
Using the second equality of \eqref{2.49} in \eqref{3.80}, we have
\begin{equation}\label{3.81}
L(k) \,T_{\rm{l}}(k)^{-1}=[T_{\rm{r}1}(k)^\dagger]^{-1} \left[L_2(k)-R_1(-k)\right]
T_{\rm{l}2}(k)^{-1},  \qquad k\in\mathbb R\setminus\{0\}.
\end{equation}
Then, multiplying \eqref{3.81} from the right by the respective sides of \eqref{3.67}, we obtain \eqref{3.68}.
Let us now prove \eqref{3.69}. From the third equality of \eqref{2.49} we have
$T_{\rm{r}}(k)=T_{\rm{l}}(-k)^\dagger.$ Thus, by taking the matrix adjoint of both sides of \eqref{3.67}, 
then replacing $k$ by $-k$ in the resulting equality, and then using the first two
equalities of \eqref{2.49}, we get \eqref{3.69}.
Let us finally prove \eqref{3.70}. From the $(1,2)$ entry of \eqref{2.53}, we have
\begin{equation}
\label{3.82}
R(k)=-T_{\rm{l}}(k)\,L(k)^\dagger\,[T_{\rm{r}}(k)^\dagger]^{-1},\qquad k\in\mathbb R.
\end{equation}
Using the first and third equalities of \eqref{2.49} in \eqref{3.82}, we get
\begin{equation}
\label{3.83}R(k)=-T_{\rm{l}}(k)\,L(-k)\,T_{\rm{l}}(-k)^{-1},\qquad k\in\mathbb R.
\end{equation}
Then, on the right-hand side of \eqref{3.83}, we replace $T_{\rm{l}}(k)$ by the right-hand side of \eqref{3.67} and 
we also replace $L(-k)\,[T_{\rm{l}}(-k)]^{-1}$ by the right-hand side of \eqref{3.81} after the substitution $k\mapsto -k.$
After simplifying the resulting modified version of \eqref{3.83}, we 
obtain \eqref{3.70}.
\end{proof}

\section{The matrix-valued scattering coefficients}
\label{section4}

The choice $n=1$ in \eqref{1.1} corresponds to the scalar case. In the scalar case, 
the potential $V$ satisfying \eqref{1.2} and \eqref{1.3} is real valued and 
the corresponding left and right transmission coefficients
are equal to each other. However, when $n\ge 2$ the matrix-valued
transmission coefficients are in general not equal to each other. In the matrix case, as seen from \eqref{2.45} 
we have 
\begin{equation}
\label{4.1}
T_{\rm{l}}(-k^\ast)^\dagger=T_{\rm{r}}(k),\qquad k\in\overline{\mathbb C^+}.
\end{equation}
On the other hand, from \eqref{2.18} we know that
the determinants of the left and right transmission coefficients are always equal to each other for
$n\ge 1.$
In this section, we first provide some relevant properties of the scattering coefficients
for \eqref{1.1}, and then we present some explicit examples demonstrating the unequivalence of
the matrix-valued left and right transmission coefficients.

The following theorem summarizes the large $k$-asymptotics of the scattering
coefficients for the full-line matrix Schr\"odinger equation. 

\begin{theorem}
\label{theorem4.1} 
Consider the full-line matrix Schr\"odinger equation \eqref{1.1} with the $n\times n$ matrix
potential $V$ satisfying \eqref{1.2} and \eqref{1.3}.  The large $k$-asymptotics of the corresponding 
$n\times n$ matrix-valued scattering coefficients are given by
\begin{equation}
\label{4.2}
T_{\rm{l}}(k)=I+\displaystyle\frac{1}{2ik}\displaystyle\int_{-\infty}^\infty dx\, V(x)+O\left(\displaystyle\frac{1}{k^2}\right),\qquad k\to\infty
\text{\rm{ in }}\overline{\mathbb C^+},
\end{equation}
\begin{equation}
\label{4.3}
T_{\rm{r}}(k)=I+\displaystyle\frac{1}{2ik}\displaystyle\int_{-\infty}^\infty dx\, V(x)+O\left(\displaystyle\frac{1}{k^2}\right),\qquad k\to\infty
\text{\rm{ in }}\overline{\mathbb C^+},
\end{equation}
\begin{equation}
\label{4.4}
L(k)=\displaystyle\frac{1}{2ik}\displaystyle\int_{-\infty}^\infty dx\, V(x)\,e^{2ikx} +O\left(\displaystyle\frac{1}{k^2}\right),\qquad k\to\pm\infty
\text{\rm{ in }} \mathbb R,
\end{equation}
\begin{equation}
\label{4.5}
R(k)=\displaystyle\frac{1}{2ik}\displaystyle\int_{-\infty}^\infty dx\, V(x)\,e^{-2ikx} +O\left(\displaystyle\frac{1}{k^2}\right),\qquad k\to\pm\infty
\text{\rm{ in }} \mathbb R.
\end{equation}

\end{theorem}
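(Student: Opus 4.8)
The plan is to read off the large-$k$ behavior of all four scattering coefficients directly from the integral representations \eqref{4.6} and \eqref{4.7} of the Jost solutions. Introduce the normalized left Jost solution $m_{\rm{l}}(k,x):=e^{-ikx}f_{\rm{l}}(k,x)$, so that \eqref{4.6} becomes the Volterra equation
\[m_{\rm{l}}(k,x)=I+\frac{1}{2ik}\int_x^\infty dy\,\left[e^{2ik(y-x)}-1\right]V(y)\,m_{\rm{l}}(k,y).\]
The first step is the uniform bound $\sup_{x\in\mathbb R}|m_{\rm{l}}(k,x)-I|=O(1/k)$ as $k\to\infty$ in $\overline{\mathbb C^+}$. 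For $k\in\overline{\mathbb C^+}$ and $y\ge x$ one has $|e^{2ik(y-x)}|=e^{-2\,\text{Im}(k)(y-x)}\le 1$, hence $|e^{2ik(y-x)}-1|\le 2$; feeding this into the Volterra iteration and applying a Gronwall-type estimate gives $\sup_x|m_{\rm{l}}(k,x)|\le \exp(\|V\|_1/|k|)$, and then, directly from the displayed equation, $\sup_x|m_{\rm{l}}(k,x)-I|\le |k|^{-1}\|V\|_1\exp(\|V\|_1/|k|)=O(1/k)$, where $\|V\|_1:=\int_{-\infty}^\infty|V(y)|\,dy$. The same argument applied to \eqref{4.7} yields the analogous bound for $m_{\rm{r}}(k,x):=e^{ikx}f_{\rm{r}}(k,x)$.

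Next I would extract the coefficients. Splitting the exponential in the Volterra equation into its $e^{ikx}$ and $e^{-ikx}$ parts and letting $x\to-\infty$, so that both integrals extend over all of $\mathbb R$, gives
\[f_{\rm{l}}(k,x)=e^{ikx}\left[I-\frac{1}{2ik}\int_{-\infty}^\infty dy\,V(y)\,m_{\rm{l}}(k,y)\right]+e^{-ikx}\left[\frac{1}{2ik}\int_{-\infty}^\infty dy\,e^{2iky}V(y)\,m_{\rm{l}}(k,y)\right]+o(1).\]
Comparing with \eqref{2.3} identifies
\[T_{\rm{l}}(k)^{-1}=I-\frac{1}{2ik}\int_{-\infty}^\infty dy\,V(y)\,m_{\rm{l}}(k,y),\qquad L(k)\,T_{\rm{l}}(k)^{-1}=\frac{1}{2ik}\int_{-\infty}^\infty dy\,e^{2iky}V(y)\,m_{\rm{l}}(k,y).\]
Inserting $m_{\rm{l}}(k,y)=I+O(1/k)$ and using $\int|V|<\infty$ to absorb the correction into an $O(1/k^2)$ term yields $T_{\rm{l}}(k)^{-1}=I-\frac{1}{2ik}\int V+O(1/k^2)$; inverting via the Neumann series $(I+A)^{-1}=I-A+\cdots$ gives \eqref{4.2}, and multiplying the second identity on the right by $T_{\rm{l}}(k)=I+O(1/k)$ gives \eqref{4.4}.

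The formulas \eqref{4.3} and \eqref{4.5} follow by the identical computation applied to \eqref{4.7}, with $m_{\rm{r}}$ in place of $m_{\rm{l}}$ and the roles of $\pm\infty$ interchanged; alternatively \eqref{4.3} can be deduced from \eqref{4.2} through the symmetry $T_{\rm{r}}(k)=T_{\rm{l}}(-k^\ast)^\dagger$ of \eqref{4.1} together with $V^\dagger=V$, since replacing $k$ by $-k^\ast$ and taking the matrix adjoint sends the coefficient of $\int V$ back to $1/(2ik)$. The main obstacle is entirely in the estimates rather than in any structural subtlety: one must secure the uniform-in-$x$ bound $m_{\rm{l}}(k,x)=I+O(1/k)$ and then carry the remainder bookkeeping carefully through the matrix inversion and through the product $L(k)\,T_{\rm{l}}(k)^{-1}\cdot T_{\rm{l}}(k)$, verifying that the $O(1/k)$ correction in $m_{\rm{l}}$ contributes only at order $O(1/k^2)$ after integration against the $L^1$ function $V$. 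No oscillation or cancellation beyond the mere boundedness of $\int e^{\pm 2iky}V(y)\,dy$ is required, so only the integrability of $V$ enters these leading-order estimates.
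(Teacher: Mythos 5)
Your proof is correct and follows essentially the same route as the paper: the paper likewise obtains $e^{-ikx}f_{\rm{l}}(k,x)=I+O(1/k)$ and $e^{ikx}f_{\rm{r}}(k,x)=I+O(1/k)$ by successive approximation on \eqref{4.6}--\eqref{4.7}, and then inserts these into the integral representations for $T_{\rm{l}}(k)^{-1}$, $T_{\rm{r}}(k)^{-1}$, $L(k)\,T_{\rm{l}}(k)^{-1}$, and $R(k)\,T_{\rm{r}}(k)^{-1}$ (which it cites from \cite{AKV2001} rather than re-deriving, as you do, by letting $x\to-\infty$ and comparing with \eqref{2.3}). The only difference is one of bookkeeping: you make explicit the Volterra/Gronwall bound, the Neumann-series inversion, and the product $L(k)\,T_{\rm{l}}(k)^{-1}\cdot T_{\rm{l}}(k)$, all of which the paper leaves implicit.
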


\begin{proof}
By solving \eqref{4.6} and \eqref{4.7}
via the method of successive approximation, for each fixed $x\in\mathbb R$ we obtain the large $k$-asymptotics of the 
Jost solutions as
\begin{equation}
\label{4.8}
e^{-ikx}\,f_{\rm{l}}(k,x)=I+O\left(\displaystyle\frac{1}{k}\right),\qquad k\to\infty
\text{ \rm{in} }\overline{\mathbb C^+},
\end{equation}
\begin{equation}
\label{4.9}
e^{ikx}f_{\rm{r}}(k,x)=I+O\left(\displaystyle\frac{1}{k}\right),\qquad k\to\infty
\text{ \rm{in} }\overline{\mathbb C^+}.
\end{equation}
The integral representations involving the scattering coefficients are obtained
from \eqref{4.6} and \eqref{4.7} with the help of 
\eqref{2.3} and \eqref{2.4}. As listed in (2.10)--(2.13) of
\cite{AKV2001}, we have
\begin{equation}
\label{4.10}
T_{\rm{l}}(k)^{-1}=I-\displaystyle\frac{1}{2ik}\displaystyle\int_{-\infty}^\infty dx\, V(x)\,e^{-ikx}f_{\rm{l}}(k,x),
\end{equation}
\begin{equation}
\label{4.11}
T_{\rm{r}}(k)^{-1}=I-\displaystyle\frac{1}{2ik}\displaystyle\int_{-\infty}^\infty dx\, V(x)\,e^{ikx}f_{\rm{r}}(k,x),
\end{equation}
\begin{equation}
\label{4.12}
L(k)\,T_{\rm{l}}(k)^{-1}=\displaystyle\frac{1}{2ik}\displaystyle\int_{-\infty}^\infty dx\, V(x)\,e^{ikx}f_{\rm{l}}(k,x),
\end{equation}
\begin{equation}
\label{4.13}
R(k)\,T_{\rm{r}}(k)^{-1}=\displaystyle\frac{1}{2ik}\displaystyle\int_{-\infty}^\infty dx\, V(x)\,e^{-ikx}f_{\rm{r}}(k,x).
\end{equation}
With the help of \eqref{4.8} and \eqref{4.9}, from \eqref{4.10}--\eqref{4.13}
we obtain \eqref{4.2}--\eqref{4.5} in their appropriate domains. 
\end{proof}

We observe that it is impossible to
tell the unequivalence of $T_{\rm{l}}(k)$ and $T_{\rm{r}}(k)$ from the large $k$-limits
given in \eqref{4.2} and \eqref{4.3}. However, as the next example shows,
the small $k$-asymptotics
of $T_{\rm{l}}(k)$ and $T_{\rm{r}}(k)$ may be used to see their unequivalence in the matrix
case with $n\ge 2.$

\begin{example}
\label{example4.2}
\normalfont
Consider the full-line Schr\"odinger equation \eqref{1.1}
when the potential $V$ is a $2\times 2$ matrix and fragmented as in \eqref{1.4} and \eqref{1.5},
where the fragments $V_1$ and $V_2$ are compactly supported and given by
\begin{equation}
\label{4.14}
\begin{cases}
V_1(x)=\begin{bmatrix}3&-2+i\\
%\noalign{\medskip}
-2-i&-5\end{bmatrix},\qquad -2<x<0,\\
\noalign{\medskip}
V_2(x)=\begin{bmatrix}2&1+i\\
%\noalign{\medskip}
1-i&-2\end{bmatrix},\qquad 0<x<1,\end{cases}
\end{equation}
with the understanding that $V_1$ vanishes when
$x\not\in(-2,0)$ and that $V_2$ vanishes
when $x\not\in(0,1).$
Thus, the support of the potential $V$ is confined to the interval $(-2,1).$
Since $V_2$ vanishes when $x>1,$ as seen from \eqref{2.1}
the corresponding Jost solution $f_{\rm{l}2}(k,x)$ is equal to $e^{ikx}I$ there.
The evaluation of the corresponding scattering coefficients for the potential specified in \eqref{4.14} is not
trivial. A relatively efficient way for that evaluation is accomplished as follows.
For $0<x<1$ we construct the Jost solution $f_{\rm{l}2}(k,x)$
corresponding to the constant $2\times 2$ matrix $V_2$ by diagonalizing $V_2,$ obtaining the 
corresponding eigenvalues and eigenvectors of $V_2,$ 
and then constructing the general solution to \eqref{1.1} with the help of those eigenvalues and eigenvectors.
Next, by using the continuity of $f_{\rm{l}2}(k,x)$ and $f'_{\rm{l}2}(k,x)$ at the point $x=1,$ we construct
$f_{\rm{l}2}(k,x)$ explicitly when $0<x<1.$ Then, by using \eqref{3.43} and its derivative at $x=0$ we obtain
the corresponding $2\times 2$ scattering coefficients $T_{\rm{l}2}(k)$ and $L_2(k).$ By using a similar procedure, we construct
the right Jost solution
$f_{\rm{r}1}(k,x)$ corresponding to the potential $V_1$ as well as the corresponding 
$2\times 2$ scattering coefficients $T_{\rm{r}1}(k)$ and $R_1(k).$
Next, we construct the
transmission coefficients $T_{\rm{l}}(k)$ and $T_{\rm{r}}(k)$
corresponding to the whole potential $V$ by using \eqref{3.67}
and \eqref{3.69}, respectively.
In fact, with the help of the symbolic software Mathematica, we evaluate
$T_{\rm{l}}(k)$ and $T_{\rm{r}}(k)$ explicitly in a closed form.
However, the corresponding explicit expressions are extremely lengthy, and hence it is not
feasible to display them here.
Instead, we present the small $k$-limits of
$T_{\rm{l}}(k)$ and $T_{\rm{r}}(k),$ which shows that
$T_{\rm{l}}(k)\not\equiv T_{\rm{r}}(k).$ We have
\begin{equation}
\label{4.16}
T_{\rm{l}}(k)^{-1}=\begin{bmatrix}a_{\rm{l}}(k)&b_{\rm{l}}(k)\\
\noalign{\medskip}
c_{\rm{l}}(k)&d_{\rm{l}}(k)\end{bmatrix}+O(k^3),\qquad k\to 0 \text{\rm{ in }} \overline{\mathbb C^+},
\end{equation}
\begin{equation}
\label{4.17}
T_{\rm{r}}(k)^{-1}=\begin{bmatrix}a_{\rm{r}}(k)&b_{\rm{r}}(k)\\
\noalign{\medskip}
c_{\rm{r}}(k)&d_{\rm{r}}(k)\end{bmatrix}+O(k^3),\qquad k\to 0 \text{\rm{ in }} \overline{\mathbb C^+},
\end{equation}
where we have defined
\begin{equation}
\label{4.18}
\begin{split}
a_{\rm{l}}(k):=&-\displaystyle\frac{7.0565\overline{2} -74.135\overline{2}\,i}{k} - \left(133.84\overline{4} + 
   14.352\overline{2}\,i\right) \\
   &+ \left(19.075\overline{6} - 170.82\overline{7} \,i\right) k + \left(160.95\overline{5} + 18.172\overline{9}\,i\right) k^2,
   \end{split}
\end{equation}
\begin{equation}
\label{4.19}
\begin{split}
b_{\rm{l}}(k):=&-\displaystyle\frac{19.983\overline{9} -22.417\overline{6}\,i}{k} - \left(50.518\overline{8} + 39.089\overline{3}\,i\right) \\
   &+ \left(51.28\overline{9} - 68.661\overline{5}\,i\right) k + \left(65.051\overline{6} + 48.408\overline{9}\,i\right) k^2,
   \end{split}
\end{equation}
\begin{equation}
\label{4.20}
\begin{split}
c_{\rm{l}}(k):=&\displaystyle\frac{10.575\overline{2} - 15.217\overline{2} \,i}{k} + \left(26.026\overline{5} + 19.952\overline{9}\,i\right) \\
   &- \left(25.854\overline{8} -33.075\overline{2} \,i\right) k - \left(31.870\overline{5} + 24.178\overline{3} \,i\right) k^2,
   \end{split}
\end{equation}
\begin{equation}
\label{4.21}
\begin{split}
d_{\rm{l}}(k):=&\displaystyle\frac{7.0565\overline{2} - 2.5552\overline{8}\,i}{k} + \left(7.2733\overline{5} + 14.352\overline{2}\,i\right) \\
   &- \left(19.075\overline{6} - 11.165\overline{9} \,i\right) k - \left(10.490\overline{3} + 18.172\overline{9}\,i\right) k^2,
   \end{split}
\end{equation}
\begin{equation}
\label{4.22}
\begin{split}
a_{\rm{r}}(k):=&\displaystyle\frac{7.0565\overline{2} +74.135\overline{2}\,i}{k} - \left(133.84\overline{4} -
   14.352\overline{2}\,i\right) \\
   &- \left(19.075\overline{6} + 170.82\overline{7}\,i\right) k + \left(160.95\overline{5} - 18.172\overline{9}\,i\right) k^2,
   \end{split}
\end{equation}
\begin{equation}
\label{4.23}
\begin{split}
b_{\rm{r}}(k):=&-\displaystyle\frac{10.575\overline{2} +15.217\overline{2}\,i}{k} + \left(26.026\overline{5} -19.952\overline{9}\,i\right) \\
   &+\left(25.854\overline{8} +33.075\overline{2}\,i\right) k - \left(31.870\overline{5} - 24.178\overline{3}\,i\right) k^2,
   \end{split}
\end{equation}
\begin{equation}
\label{4.24}
\begin{split}
c_{\rm{r}}(k):=&\displaystyle\frac{19.983\overline{9} + 22.417\overline{6}\,i}{k} - \left(50.518\overline{8} - 39.089\overline{3}\,i\right) \\
   &-\left(51.28\overline{9} +68.661\overline{5}\,i\right) k + \left(65.051\overline{6} - 48.408\overline{9}\,i\right) k^2,
   \end{split}
\end{equation}
\begin{equation}
\label{4.25}
\begin{split}
d_{\rm{r}}(k):=&-\displaystyle\frac{7.0565\overline{2} + 2.5552\overline{8}\,i}{k} + \left(7.2733\overline{5} - 14.352\overline{2} \,i\right) \\
   &+\left(19.075\overline{6} +11.165\overline{9}\,i\right) k - \left(10.490\overline{3} - 18.172\overline{9}\,i\right) k^2.
   \end{split}
\end{equation}
Note that we use an overbar on a digit to denote a roundoff on that digit.
From \eqref{4.16}--\eqref{4.25} we see that $T_{\rm{l}}(k)\not\equiv T_{\rm{r}}(k),$ and in fact we confirm 
\eqref{4.1} up to $O(k^3)$ as $k\to 0.$

\end{example}

In Example~\ref{example4.2}, we have illustrated that the matrix-valued left and right transmission coefficients are not
equal to each other in general, and that has been done when the potential is selfadjoint but not real. In order to demonstrate that
the unequivalence of $T_{\rm{l}}(k)$ and $T_{\rm{r}}(k)$ is not caused because the potential
is complex valued, we would like to 
demonstrate that, in general, when $n\ge 2$ we have
$T_{\rm{l}}(k)\not\equiv T_{\rm{r}}(k)$ even when the matrix potential $V$ in \eqref{1.1} is real valued. Suppose
that, in addition to
\eqref{1.2} and \eqref{1.3},
the potential $V$ is real valued, i.e. we have
\begin{equation*}
%\label{4.26}
V(x)^\ast=V(x),\qquad k\in\mathbb R,
\end{equation*}
where we recall that we use an asterisk to denote complex conjugation.
Then, from \eqref{1.1} we see that if $\psi(k,x)$ is a solution to
\eqref{1.1} then $\psi(\pm k^\ast,x)^\ast$ is also a solution. In particular, using \eqref{2.1} and \eqref{2.2}
we observe that, when the potential is real valued, we have
\begin{equation}
\label{4.27}
f_{\rm{l}}(-k^\ast,x)^\ast =f_{\rm{l}}(k,x),\quad 
f_{\rm{r}}(-k^\ast ,x)^\ast =f_{\rm{r}}(k,x),\qquad k\in\overline{\mathbb C^+}.
\end{equation}
Then, using \eqref{4.27}, from  \eqref{2.3} and \eqref{2.5} we obtain
\begin{equation}
\label{4.28}
T_{\rm{l}}(-k^\ast)^\ast=T_{\rm{l}}(k),\quad T_{\rm{r}}(-k^\ast)^\ast=T_{\rm{r}}(k),\qquad k\in\overline{\mathbb C^+},
\end{equation}
\begin{equation}
\label{4.29}
R(-k)^\ast=R(k),\quad L(-k)^\ast=L(k),\qquad k\in\mathbb R.
\end{equation}
Comparing \eqref{4.1} and \eqref{4.28}, we have
\begin{equation}
\label{4.30}
T_{\rm{l}}(k)^t=T_{\rm{r}}(k),\qquad k\in\overline{\mathbb C^+},
\end{equation}
and similarly, by comparing the first two equalities in \eqref{2.49} with \eqref{4.29} we get
\begin{equation*}
%\label{4.31}
R(k)^t=R(k),\quad L(k)^t=L(k),\qquad k\in\mathbb R,
\end{equation*}
where we use the superscript $t$ to denote the matrix transpose.
We remark that, in the scalar case, since the potential $V$ in \eqref{1.1} is real valued, 
the equality of the left and right transmission coefficients directly follow from \eqref{4.30}.
Let us mention that, if the matrix potential is real valued, then \eqref{3.67}--\eqref{3.70}
of Theorem~\ref{theorem3.6}
yield
\begin{equation}
\label{4.32}
T_{\rm{l}}(k)=T_{\rm{l}2}(k)\left[I-R_1(k)\,L_2(k)\right]^{-1} T_{\rm{r}1}(k)^t,\qquad k\in\mathbb R,
\end{equation}
\begin{equation*}
%\label{4.33}
L(k)=[T_{\rm{r}1}(k)^\dagger]^{-1}\left[L_2(k)-R_1(k)^\ast\right]\left[I-R_1(k)\,L_2(k)\right]^{-1} T_{\rm{r}1}(k)^t,
\qquad  k\in\mathbb R,
\end{equation*}
\begin{equation}
\label{4.34}
T_{\rm{r}}(k)=T_{\rm{r}1}(k)\left[I-L_2(k)\,R_1(k)\right]^{-1} T_{\rm{l}2}(k)^t,\qquad k\in\mathbb R,
\end{equation}
\begin{equation*}
%\label{4.35}
R(k)=T_{\rm{l}2}(k)\left[I-R_1(k)\,L_2(k)\right]^{-1}  \left[R_1(k)-L_2(k)^\ast\right] T_{\rm{l}2}(-k)^{-1},
\qquad k\in\mathbb R.
\end{equation*}

In the next example, we illustrate the unequivalence of the left and right
transmission coefficients when the matrix potential $V$ in \eqref{1.1} is real valued.
As in the previous example, we construct $T_{\rm{l}}(k)$ and $T_{\rm{r}}(k)$ in terms of the
scattering coefficients corresponding to the fragments $V_1$ and $V_2$
specified in \eqref{1.4} and \eqref{1.5}, and then we check the unequivalence of the
resulting expressions.

\begin{example}
\label{example4.3}
\normalfont
Consider the full-line Schr\"odinger equation \eqref{1.1} with the matrix potential $V$ 
consisting of the fragments $V_1$ and $V_2$
as in \eqref{1.4} and \eqref{1.5}, where those fragments are real valued and given by
\begin{equation}
\label{4.36}
V_1(x)=\begin{bmatrix}3&2\\
%\noalign{\medskip}
2&c\end{bmatrix},\qquad -2<x<0,
\end{equation}
\begin{equation}
\label{4.37}
\quad V_2(x)=\begin{bmatrix}2&1\\
%\noalign{\medskip}
1&1\end{bmatrix},\qquad 0<x<1.
\end{equation}
The parameter $c$ appearing in \eqref{4.36} is a real constant, and it is understood that
the support of $V_1$ is the interval $x\in(-2,0)$ and the support
of $V_2$ is $x\in(0,1).$
The procedure to evaluate the corresponding scattering coefficients is not trivial.
As described in Example~\ref{example4.2}, we evaluate
the scattering coefficients corresponding to $V_1$ and $V_2$
by diagonalizing each of the constant $2\times 2$ matrices
$V_1$ and $V_2$ and by constructing the corresponding
Jost solutions $f_{\rm{r}1}(k,x)$ and $f_{\rm{l}2}(k,x).$
We then use \eqref{4.32} and \eqref{4.34} to obtain
$T_{\rm{l}}(k)$ and $T_{\rm{r}}(k).$
As in Example~\ref{example4.2} we have prepared a Mathematica notebook
to evaluate $T_{\rm{l}}(k)$ and $T_{\rm{r}}(k)$ explicitly 
in a closed form. The resulting expressions are extremely lengthy, and hence
it is not practical to display them in our paper.
Because the potential $V$ is real valued in this example, in order to check
if $T_{\rm{l}}(k)\not\equiv T_{\rm{r}}(k),$ as seen from \eqref{4.30} it is sufficient
to check whether the $2\times 2$ matrix $T_{\rm{l}}(k)$ is symmetric or not.
We remark that it is also possible to
evaluate $T_{\rm{l}}(k)$ directly by constructing the Jost solution $f_{\rm{l}}(k,x)$ when $x<0$ and then by using \eqref{2.3}.
However, the evaluations in that case are much more involved, and Mathematica is not capable of carrying out the computations 
properly unless a more powerful computer is used.
This indicates the usefulness of the factorization formula in evaluating the matrix-valued scattering coefficients for the full-line matrix  Schr\"odinger equation.
For various values of the parameter $c,$ by evaluating the difference $T_{\rm{l}}(k)-T_{\rm{l}}(k)^t$ at any $k$-value 
we confirm that $T_{\rm{l}}(k)$
is not a symmetric matrix. At any particular $k$-value, we have $T_{\rm{l}}(k)=T_{\rm{l}}(k)^t$ if and only if 
the matrix norm of
$T_{\rm{l}}(k)-T_{\rm{l}}(k)^t$ is zero. 
Using Mathematica, we are able to plot that matrix norm as a function of $k$ in the interval $k\in[0,b]$
for any positive $b.$ We then observe that that norm is strictly positive, and hence
we are able to confirm that in general we have $T_{\rm{l}}(k)\not\equiv T_{\rm{r}}(k)$ when $n\ge 2.$ One other reason
for us to use the parameter $c$ in \eqref{4.36} is the following.
The value of $c$ affects the number of eigenvalues for the full-line Schr\"odinger
operator with the specified potential $V,$ and hence by using various different values of $c$ as input
we can check the unequivalence of $T_{\rm{l}}(k)$ and $T_{\rm{r}}(k)$ as the number of eigenvalues changes.
The eigenvalues occur at
the $k$-values on the positive imaginary axis in the complex plane where the
determinant of $T_{\rm{l}}(k)$ has poles. Thus, we are able to identify the eigenvalues
by locating the zeros of $\det[T_{\rm{l}}(i\kappa)^{-1}]$ when $\kappa>0.$ Let us recall that we use an overbar on a 
digit to indicate a roundoff. We find that the numerically approximate value
$c=1.1372\overline{5}$ corresponds to an exceptional case, where the number of eigenvalues
changes by $1.$ For example, when
$c>1.1372\overline{5}$ we observe that there are no eigenvalues and that
there is exactly one eigenvalue when $0<c<1.1372\overline{5}.$
 When $c=0$ we have an eigenvalue at
$k=0.55\overline{1}i,$ the eigenvalue shifts to $k=0.069\overline{5}i$ when $c=1.$ When
$c=1.3$ we do not have any eigenvalues and the zero of $\det[T_{\rm{l}}(k)^{-1}]$ occurs on the negative
imaginary axis at
$k=-0.79\overline{4}i.$
We repeat our examination of the unequivalence of
$T_{\rm{l}}(k)$ and $T_{\rm{r}}(k)$ by also changing other 
entries of the matrices $V_1(x)$ and $V_2(x)$ appearing in \eqref{4.36} and \eqref{4.37}, respectively.
For example, by letting
\begin{equation}
\label{4.38}
V_1(x)=\begin{bmatrix}3&-2\\
%\noalign{\medskip}
-2&-5\end{bmatrix},\qquad -2<x<0,
\end{equation}
\begin{equation}
\label{4.39}
\quad V_2(x)=\begin{bmatrix}2&1\\
%\noalign{\medskip}
1&-2\end{bmatrix},\qquad 0<x<1,
\end{equation}
we evaluate $T_{\rm{l}}(k)$ using \eqref{4.32}.
In this case we observe that
there are three eigenvalues occurring when
\begin{equation*}
%\label{4.40}
k=0.00563\overline{5}\,i,\quad k=1.273\overline{7}\,i,\quad k=2.0880\overline{2}\,i,
\end{equation*}
and we still observe that $T_{\rm{l}}(k)\not\equiv T_{\rm{r}}(k)$ in this case. 
In fact, using $V_1$ and $V_2$ appearing in
\eqref{4.38} and \eqref{4.39}, respectively, as input,
we observe that the corresponding transmission coefficients satisfy
\begin{equation*}
%\label{4.41}
2ik\,T_{\rm{l}}(k)^{-1}=\begin{bmatrix}-130.26\overline{7}&28.398\overline{4}\\
\noalign{\medskip}
-43.55\overline{5}&9.4650\overline{8}\end{bmatrix}+O(k),\qquad k\to 0,
\end{equation*}
\begin{equation*}
%\label{4.42}
2ik\,T_{\rm{r}}(k)^{-1}=\begin{bmatrix}-130.26\overline{7}&-43.55\overline{5}\\
\noalign{\medskip}
28.398\overline{4}&9.4650\overline{8}\end{bmatrix}+O(k),\qquad k\to 0,
\end{equation*}
confirming that we cannot have $T_{\rm{l}}(k)\equiv T_{\rm{r}}(k).$ In Figure~\ref{figure4.1} 
we present the plot of the matrix norm of $T_{\rm{l}}(k)-T_{\rm{r}}(k)$
as a function of $k,$ which also shows that $T_{\rm{l}}(k)\not\equiv T_{\rm{r}}(k).$

\end{example}

\begin{figure}[!ht]
\centering
\includegraphics[width=12cm,height=10cm,keepaspectratio]{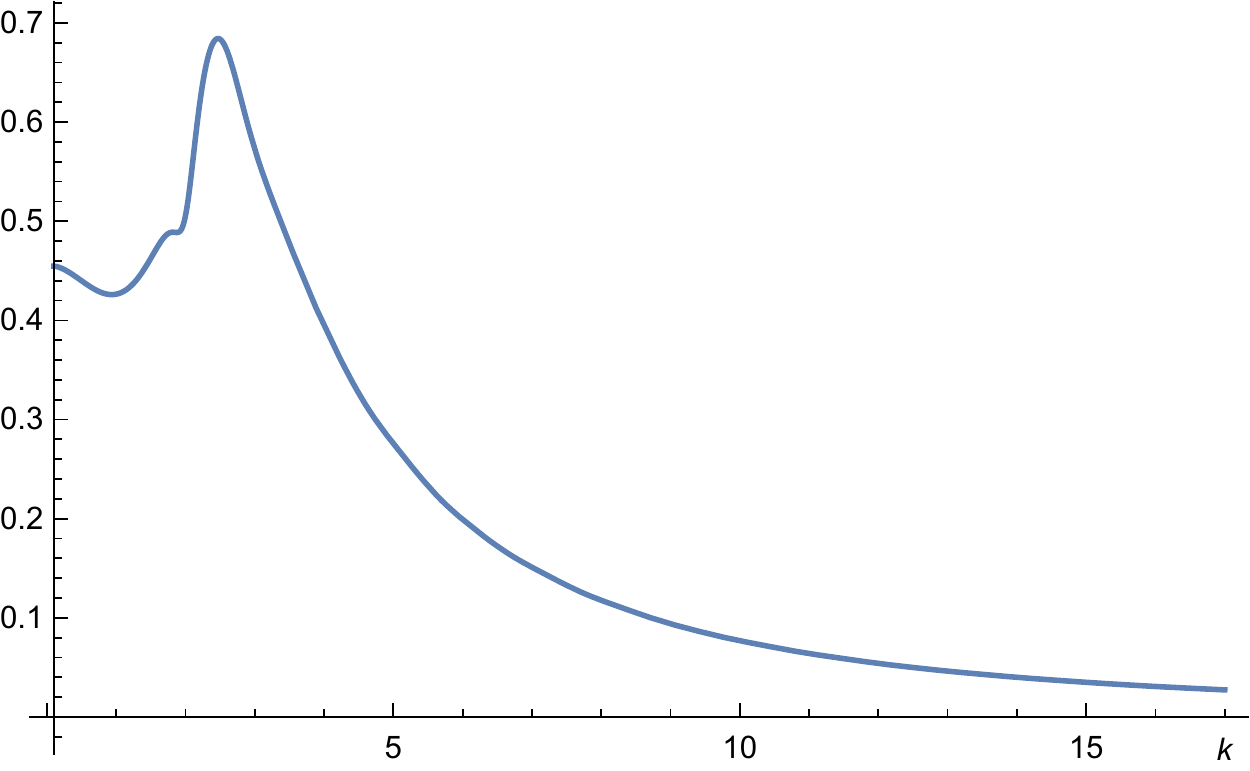}
\caption{The matrix norm $|T_{\rm{l}}(k)-T_{\rm{r}}(k)|$ vs $k$  in Example~\ref{example4.3} for the potential 
fragmented as in \eqref{1.4} and \eqref{1.5} with 
$V_1$ and $V_2$ given in \eqref{4.38} and \eqref{4.39}, respectively.}
\label{figure4.1}
\end{figure}

\section{The connection to the half-line Schr\"odinger operator}
\label{section5}

In this section we explore an important connection between 
the full-line $n\times n$ matrix Schr\"odinger equation \eqref{1.1} and
the half-line $2n\times 2n$ matrix Schr\"odinger equation
\begin{equation}
\label{5.1}
-\phi''+\mathbf V(x)\,\phi=k^2\phi,\qquad x\in\mathbb R^+,
\end{equation}
where $\mathbb R^+:=(0,+\infty)$ and 
the potential $\mathbf V$ is a $2n\times 2n$ matrix-valued
function of $x.$ The connection will be made by choosing the potential
$\mathbf V$ in terms of the fragments $V_1$ and $V_2$ appearing in \eqref{1.4} and \eqref{1.5} for the full-line potential $V$ in an appropriate way
and also by supplementing \eqref{5.1} with an appropriate boundary condition. 
To make a distinction between the quantities relevant to the full-line Schr\"odinger equation
\eqref{1.1} and the quantities relevant to the half-line
Schr\"odinger equation \eqref{5.1}, we use boldface to denote some of quantities
related to \eqref{5.1}.

Before making the connection between the half-line and full-line Schr\"odinger operators, we 
first provide a summary of some basic relevant facts related to \eqref{5.1}.
Since our interest in the half-line Schr\"odinger operator is restricted to its connection to the
full-line Schr\"odinger operator, we consider \eqref{5.1} when the matrix
potential has size $2n\times 2n,$ where $n$ is the positive integer related to the matrix size $n\times n$ of
the potential $V$ in \eqref{1.1}.
We refer the reader to
\cite{AW2021} for the analysis of \eqref{5.1} when the size of the matrix potential $\mathbf V$ is 
$n\times n,$ where $n$ can be chosen as any positive integer.

We now present some basic relevant facts related to \eqref{5.1}
by assuming that the half-line $2n\times 2n$ matrix potential $\mathbf V$ in \eqref{5.1} satisfies
\begin{equation}
\label{5.2}
\mathbf V(x)^\dagger=\mathbf V(x),\qquad x\in\mathbb R^+,
\end{equation}
\begin{equation}
\label{5.3}
\int_0^\infty dx\,(1+x)\,|\mathbf V(x)|<+\infty.
\end{equation}
To construct the half-line matrix Schr\"odinger operator related to \eqref{5.1},
we supplement \eqref{5.1}
with the general selfadjoint boundary condition
\begin{equation}
\label{5.4}
-B^\dagger \phi(0)+A^\dagger \phi'(0)=0,
\end{equation}
where $A$ and $B$ are two constant $2n\times 2n$ matrices satisfying
\begin{equation}
\label{5.5}
A^\dagger A+B^\dagger B>0,\quad B^\dagger A=A^\dagger B.
\end{equation}
We recall that a matrix is positive (also called positive definite) when all its eigenvalues are positive.

It is possible to express the $2n$ boundary conditions listed in \eqref{5.4} in an uncoupled form.
We refer the reader to Section~3.4 of \cite{AW2021} for the explicit steps to transform from any pair $(A,B)$ of
boundary matrices appearing in the general selfadjoint boundary condition described
in \eqref{5.4} and \eqref{5.5} to the diagonal boundary matrix pair
$(\tilde A,\tilde B)$ given by
\begin{equation}\label{5.6}
\begin{cases}
\tilde A=-\text{\rm{diag}}\left\{\sin\theta_1,\sin\theta_2,\cdots,\sin\theta_{2n}\right\},\\
\tilde B=\text{\rm{diag}}\left\{\cos\theta_1,\cos\theta_2,\cdots,\cos\theta_{2n}\right\},
\end{cases}
\end{equation}
for some appropriate real parameters $\theta_j\in(0,\pi].$ It can directly be verified that the matrix pair $(\tilde A,\tilde B)$ satisfies the two
equalities in \eqref{5.5} and that \eqref{5.4} is equivalent to the uncoupled system of $2n$ equations given by
\begin{equation}\label{5.7}
\left(\cos\theta_j\right)\phi_j(0)+\left(\sin\theta_j\right)\phi'_j(0)
=0,\qquad 1\le j\le 2n.
\end{equation}
%where $Y_{j}$ is either the $j-$th column of the matrix $Y,$ or the
%$j-$th entry of the vector $Y.$ 
We remark that the case $\theta_{j}=\pi$
corresponds to the Dirichlet boundary condition and the case $\theta_{j}=\pi/2$ corresponds to the Neumann boundary condition. 
Let us use $n_{\text{\rm D}}$ and $n_{\text{\rm N}}$ to denote the number of
Dirichlet and Neumann boundary conditions in \eqref{5.7}, and let us use
$n_{\text{\rm M}}$ to denote the number of mixed boundary conditions in \eqref{5.7},
where a mixed boundary condition occurs when $\theta\in(0,\pi/2)$ or $\theta\in(\pi/2,\pi).$
It is clear that we have
\begin{equation*}
%\label{5.8}
n_{\text{\rm D}}+n_{\text{\rm N}}+n_{\text{\rm M}}=2n.
\end{equation*}
In Sections~3.3 and 3.5 of \cite{AW2021} we have constructed
a selfadjoint realization of the matrix Schr\"odinger operator $ -d^2/dx^2+ \mathbf V(x)$ with the boundary condition 
described in \eqref{5.4} and \eqref{5.5}, and we have
used $H_{A,B,\mathbf V}$ to denote it.
In Section~3.6 of \cite{AW2021} we have shown that
that selfadjoint realization with the boundary matrices
$(A,B)$ and the selfadjoint realization with the boundary matrices
$(\tilde A,\tilde B)$ are related to each other as
\begin{equation}\label{5.9}
H_{A,B, \mathbf V}= M H_{\tilde{A}, \tilde{B}, M^\dagger\mathbf V M}  M^\dagger.
\end{equation}
for some $2n\times 2n$ unitary matrix $M.$

In Section~2.4 of \cite{AW2021} we have established a unitary transformation between the
half-line $2n\times 2n$ matrix Schr\"odinger operator and the full-line
$n\times n$ matrix Schr\"odinger operator by choosing the boundary matrices $A$ and $B$ in
\eqref{5.4} appropriately so that 
the full-line potential $V$ at $x=0$ includes a point interaction. Using that unitary transformation, in \cite{W2022} the
half-line physical solution to \eqref{5.1} and the full-line physical solutions to \eqref{1.1} are related to each other,
and also the corresponding half-line scattering matrix and full-line scattering matrix are related to each other.
In this section of our current paper, in the absence of a point interaction, via a unitary transformation
 we are interested in analyzing the connection between various half-line quantities
and the corresponding full-line quantities.

By proceeding as in Section~2.4  of \cite{AW2021}, we establish our unitary operator
$\mathbf U$ from $L^2(\mathbb R^+;\mathbb C^{2n})$ onto $L^2(\mathbb R;\mathbb C^n)$ 
as follows.
We decompose any complex-valued, square-integrable column vector $\phi(x)$ with $2n$ components
into two pieces of column vectors $\phi_+(x)$ and $\phi_-(x),$ each with $n$ components, as
\begin{equation*}
%\label{5.9a}
\phi(x)=:\begin{bmatrix}\phi_+(x)\\
\noalign{\medskip}
\phi_-(x)\end{bmatrix},\qquad x\in\mathbb R^+.
\end{equation*}
Then, our unitary transformation $\mathbf U$ maps $\phi(x)$ onto
the complex-valued, square-integrable column vector $\psi(x)$ with $n$ components in such a way that
\begin{equation*}
%\label{5.9b}
\psi(x)=\begin{cases}\phi_+(x),\qquad x>0,\\
\noalign{\medskip}
\phi_-(-x),\qquad x<0.
\end{cases}
\end{equation*}
We use the decomposition of the full-line $n\times n$ matrix potential $V$
into the potential fragments $V_1$ and $V_2$ as described in \eqref{1.4} and \eqref{1.5}. We choose the 
half-line $2n\times 2n$ matrix potential $\mathbf V$ in terms of $V_1$ and $V_2$ 
by letting
\begin{equation}\label{5.10}
\mathbf V(x)=\begin{bmatrix}V_2(x)&0\\
\noalign{\medskip}
0&V_1(-x)\end{bmatrix},\qquad x\in\mathbb R^+.
\end{equation}
The inverse transformation $\mathbf U^{-1},$ which is equivalent to $\mathbf U^\dagger,$ maps $\psi(x)$ onto
$\phi(x)$ via
\begin{equation*}
%\label{5.11}
\phi(x)=\begin{bmatrix}\psi(x)\\
\noalign{\medskip}
\psi(-x)\end{bmatrix},\qquad x\in\mathbb R^+.
\end{equation*}
Under the action of the unitary transformation $\mathbf U,$ the half-line Hamiltonian
$H_{A,B, \mathbf V}$ appearing on the left-hand side of \eqref{5.9} is unitarily transformed
onto the full-line Hamiltonian $H_V,$ where $H_V$ is related to $H_{A,B, \mathbf V}$ as
\begin{equation}\label{5.12}
H_V=\mathbf U H_{A,B, \mathbf V}\mathbf U^\dagger,
\end{equation}
and its domain $D[H_V]$ is given by
\begin{equation*}
%\label{5.13}
D[H_V]=\left\{\psi\in L^2(\mathbb R;\mathbb C^n): \mathbf U^\dagger\psi\in D[H_{A,B, \mathbf V}]\right\},
\end{equation*}
where $D[H_{A,B, \mathbf V}]$ denotes the domain of $H_{A,B, \mathbf V}.$
The operator $H_V$ specified in \eqref{5.12} is a 
selfadjoint realization in $L^2(\mathbb R;\mathbb C^n)$ of
the formal differential operator
$-d^2/dx^2+ V(x),$ where $V$ is the full-line potential 
appearing in \eqref{1.4} and satisfying \eqref{1.2} and \eqref{1.3}.

The boundary condition \eqref{5.4} at $x=0$ 
of $\mathbb R^+$ satisfied by the functions in $D[H_{A,B,\mathbf V}]$ implies that the functions in 
$D[H_V]$ themselves satisfy a transmission condition at $x=0$ of the full line $\mathbb R.$ In order to determine
 that transmission condition, we express the boundary
 matrices $A$ and $B$ appearing in \eqref{5.4} in terms of $n\times 2n$ block matrices
 $A_1,$ $A_2,$ $B_1,$ and $B_2$ as
 \begin{equation}\label{5.14}
A=:\begin{bmatrix} A_1\\
\noalign{\medskip}
A_2\end{bmatrix},\quad
B=:\begin{bmatrix} B_1\\
\noalign{\medskip}
B_2\end{bmatrix}.
\end{equation}
Using \eqref{5.14} in \eqref{5.4} we see that any function
$\psi(x)$ in $D[H_V]$ satisfies 
the transmission condition at $x=0$ given by
\begin{equation}\label{5.15}
-B_1^\dagger\,\psi(0^+)-B_2^\dagger\,\psi(0^-)+A_1^\dagger\,\psi'(0^+)-A_2^\dagger\,\psi(0^-)=0. 
\end{equation}
For example, let us choose the boundary matrices
$A$ and $B$ 
as
\begin{equation}\label{5.16}
A=\begin{bmatrix} 0&I\\
\noalign{\medskip}
0&I\end{bmatrix},\quad
B=\begin{bmatrix} -I&0\\
\noalign{\medskip}
I&0\end{bmatrix},
\end{equation}
where we recall that $I$ is the $n\times n$ identity matrix
and $0$ denotes the $n\times n$ zero matrix.
It can directly be verified that the matrices $A$ and $B$ appearing
in \eqref{5.16} satisfy the two matrix equalities in \eqref{5.5}.
Then, the transmission condition at $x=0$ of $\mathbb R$ given in \eqref{5.15} is equivalent to
the two conditions 
\begin{equation*}
%\label{5.17}
\psi(0^+)=\psi(0^-),\quad \psi'(0^+)=\psi'(0^-), 
\end{equation*}
which indicate that the functions $\psi$ and
$\psi'$ are continuous at $x=0.$
In this case, $H_V$ is the standard matrix Schr\"odinger operator 
on the full line without a point interaction.
In the special case when the boundary
matrices $A$ and $B$ are chosen as
in \eqref{5.16}, as shown in Proposition~5.9 of \cite{W2022}, 
the corresponding transformed boundary matrices $\tilde A$ and $\tilde B$ in 
\eqref{5.6} yield precisely $n$ Dirichlet and $n$ Neumann boundary conditions.

We now introduce some relevant quantities
for the half-line Schr\"odinger equation \eqref{5.1} with the boundary condition
\eqref{5.4}. We assume that the half-line potential $\mathbf V$ is chosen as in \eqref{5.10}, where
the full-line potential $V$ satisfies \eqref{1.2} and \eqref{1.3}. Hence,
$\mathbf V$ 
satisfies
\eqref{5.2} and \eqref{5.3}.
As already mentioned, we use
boldface to denote some of the half-line quantities in order to make a contrast with
the corresponding full-line quantities. For example, $\mathbf V$ is the half-line $2n\times 2n$ matrix potential
whereas $V$ is the full-line $n\times n$ matrix potential,
$\mathbf S(k)$ is the half-line $2n\times 2n$ scattering matrix
while $S(k)$ is the full-line $2n\times 2n$ scattering matrix, $\mathbf f(k,x)$
is the half-line $2n\times 2n$ matrix-valued Jost solution whereas $f_{\rm{l}}(k,x)$ and $f_{\rm{r}}(k,x)$ are
the full-line $n\times n$ matrix-valued Jost solutions, $\mathbf \Phi(k,x)$ is the half-line $2n\times 2n$ matrix-valued
regular solution, the quantity $\mathbf \Psi(k,x)$ denotes the half-line $2n\times 2n$ matrix-valued
physical solution whereas
$\Psi_{\rm{l}}(k,x)$ and $\Psi_{\rm{r}}(k,x)$ are the full-line $n\times n$ matrix-valued
physical solutions. We use $I$ for the 
$n\times n$ identity matrix and use $\mathbf I$ for the $2n\times 2n$ identity matrix.
We recall that $J$ denotes the $2n\times 2n$ constant matrix defined in \eqref{3.10}
and it should not be confused with the half-line 
$2n\times 2n$ Jost matrix $\mathbf J(k).$

The Jost solution $\mathbf f(k,x)$ is the solution to \eqref{5.1} satisfying the spacial asymptotics
\begin{equation}
\label{5.18}
\mathbf f(k,x)=e^{ikx}\left[\mathbf I
+o(1)\right],\quad \mathbf f'(k,x)=ik\,e^{ikx}\left[\mathbf I+
o(1)\right],\qquad x\to+\infty.
\end{equation}
In terms of the $2n\times 2n$ boundary matrices $A$ and $B$ appearing in \eqref{5.4} and the Jost solution $\mathbf f(k,x),$
the $2n\times 2n$ Jost matrix $\mathbf J(k)$ is defined as
\begin{equation}
\label{5.19}
\mathbf J(k):=\mathbf f(-k^*,0)^\dagger B-\mathbf f'(-k^*,0)^\dagger A,\qquad k\in\overline{\mathbb C^+},
\end{equation}
where $-k^*$ is used because $\mathbf J(k)$ has \cite{AW2021} an analytic extension in $k$ from 
$\mathbb R$ to $\mathbb C^+$
and  $\mathbf J(k)$ is continuous in $\overline{\mathbb C^+}.$ 
The half-line $2n\times 2n$ scattering matrix $\mathbf S(k)$ is defined in terms of the 
Jost matrix $\mathbf J(k)$ as
\begin{equation}
\label{5.20}
\mathbf S(k):=-\mathbf J(-k)\, \mathbf J(k)^{-1},\qquad k\in\mathbb R.
\end{equation}
When 
the potential $\mathbf V$ satisfies \eqref{5.2} and \eqref{5.3}, in the so-called exceptional case the matrix
$\mathbf J(k)^{-1}$ has a singularity at $k=0$ even though the limit of the right-hand side
of \eqref{5.20} exists when $k\to 0.$ Thus, $\mathbf S(k)$ is continuous 
in $k\in\mathbb R$ including $k=0.$ 
 It is known that $\mathbf S(k)$ satisfies 
\begin{equation*}
%\label{5.21}
\mathbf S(k)^{-1}=\mathbf S(k)^\dagger=\mathbf S(-k),\qquad k\in\mathbb R.
\end{equation*}
We refer the reader to 
Theorem~3.8.15 of \cite{AW2021} regarding the small $k$-behavior of
$\mathbf J(k)^{-1}$ and $\mathbf S(k).$

The $2n\times 2n$ matrix-valued physical solution $\mathbf\Psi(k,x)$ to \eqref{5.1} is defined in terms of the
Jost solution $\mathbf f(k,x)$ and the scattering matrix $\mathbf S(k)$ as
\begin{equation}
\label{5.22}
\mathbf\Psi(k,x):=\mathbf f(-k,x)+\mathbf f(k,x)\,\mathbf S(k).
\end{equation}
It is known \cite{AW2021} that $\mathbf\Psi(k,x)$ has a meromorphic extension from $k\in\mathbb R$ to
$k\in\mathbb C^+$ and it also satisfies the boundary condition \eqref{5.4}, i.e. we have
\begin{equation}
\label{5.23}
-B^\dagger \mathbf\Psi(k,0)+A^\dagger \mathbf\Psi'(k,0)=0.
\end{equation}
There is also a particular $2n\times 2n$ matrix-valued solution $\mathbf\Phi(k,x)$ to \eqref{5.1}
satisfying the initial conditions
\begin{equation*}
%\label{5.24}
\mathbf\Phi(k,0)=A,\quad \mathbf\Phi'(k,0)=B.
\end{equation*}
Because $\mathbf \Phi(k,x)$ is entire in $k$ for each fixed $x\in\mathbb R^+,$
it is usually called the regular solution.
The physical solution $\mathbf\Psi(k,x)$ and the regular solution $\mathbf\Phi(k,x)$ are related to each other 
via the Jost matrix $\mathbf J(k)$ as
\begin{equation*}
%\label{5.25}
\mathbf\Psi(k,x)=-2ik\,\mathbf\Phi(k,x)\,\mathbf J(k)^{-1}.
\end{equation*}

In the next theorem, we consider the special case where 
the half-line Schr\"odinger operator $H_{A,B,\mathbf V}$ and the full-line Schr\"odinger operator $H_V$ are related to each other as in \eqref{5.12}, with
the potentials $\mathbf V$ and $V$ being related as in \eqref{1.4} and \eqref{5.10} and with the boundary matrices
$A$ and $B$ chosen as in \eqref{5.16}. We show how the corresponding half-line Jost solution, half-line
physical solution,
half-line scattering matrix, and half-line Jost matrix are related to
the appropriate full-line quantities. We remark that the results in
\eqref{5.28} and \eqref{5.29} have already been proved in \cite{W2022} by using a different method.

\begin{theorem}
\label{theorem5.1} Consider the full-line matrix Schr\"odinger equation \eqref{1.1} with the $n\times n$
matrix potential $V$ satisfying \eqref{1.2} and \eqref{1.3}. Assume that the corresponding full-line
Hamiltonian $H_V$ and the half-line Hamiltonian
$H_{A,B,\mathbf V}$ are unitarily connected as in \eqref{5.12} by relating the half-line $2n\times 2n$ matrix potential $\mathbf V$ to
$V$ as in \eqref{1.4} and \eqref{5.10} and
by choosing the boundary matrices $A$ and $B$ as in \eqref{5.16}.
Then, we have the following:

\begin{enumerate}

\item[\text{\rm(a)}] The half-line $2n\times 2n$ matrix-valued Jost solution $\mathbf f(k,x)$ to \eqref{5.1}
appearing in \eqref{5.18} is related to the full-line $n\times n$ matrix-valued Jost solutions
$f_{\rm{l}}(k,x)$ and $f_{\rm{r}}(k,x)$ appearing in \eqref{2.1} and \eqref{2.2}, respectively, as
\begin{equation}
\label{5.26}
\mathbf f(k,x)=\begin{bmatrix}f_{\rm{l}}(k,x)&0\\
\noalign{\medskip}
0&f_{\rm{r}}(k,-x)\end{bmatrix},\qquad  x\in\mathbb R^+, \quad k \in \overline{\mathbb C^+}.
\end{equation}

\item[\text{\rm(b)}] The half-line $2n\times 2n$ scattering matrix $\mathbf S(k)$ defined in \eqref{5.20} is related to
the full-line $2n\times 2n$ scattering matrix $S(k)$ defined in \eqref{2.5} as 
\begin{equation}\label{5.27}
\mathbf S(k)=S(k)\,Q,\qquad k\in\mathbb R,
\end{equation}
 where $Q$ is the $2n\times 2n$ constant matrix
defined in \eqref{2.51}. Hence, the half-line scattering matrix $\mathbf S(k)$ is related to the full-line
$n\times n$ matrix-valued scattering coefficients as
\begin{equation}
\label{5.28}
\mathbf S(k)=\begin{bmatrix}R(k) &T_{\rm{l}}(k)\\
\noalign{\medskip}
T_{\rm{r}}(k)&L(k)\end{bmatrix},\qquad k\in\mathbb R.
\end{equation}

\item[\text{\rm(c)}] The half-line $2n\times 2n$ physical solution $\mathbf\Psi(k,x)$ defined in \eqref{5.22} is related to the full-line 
$n\times n$ matrix-valued
physical solutions $\Psi_{\rm{l}}(k,x)$ and $\Psi_{\rm{r}}(k,x)$ appearing in \eqref{2.6} as
\begin{equation}
\label{5.29}
\mathbf\Psi(k,x)=\begin{bmatrix}\Psi_{\rm{r}}(k,x) &\Psi_{\rm{l}}(k,x) \\
\noalign{\medskip}
\Psi_{\rm{r}}(k,-x) &\Psi_{\rm{l}}(k,-x) \end{bmatrix},\qquad  x\in\mathbb R^+, \quad k \in \overline{\mathbb C^+}.
\end{equation}

\item[\text{\rm(d)}] The half-line $2n\times 2n$ Jost matrix $\mathbf J(k)$ defined in \eqref{5.19}
and its inverse $\mathbf J(k)^{-1}$ are related to 
the full-line $n\times n$ matrix-valued Jost solutions $f_{\rm{l}}(k,x)$ and $f_{\rm{r}}(k,x)$
appearing in \eqref{2.1} and \eqref{2.2}
and the $n\times n$ matrix-valued transmission coefficients $T_{\rm{l}}(k)$ and $T_{\rm{r}}(k)$
appearing in \eqref{2.3} and \eqref{2.4} as
\begin{equation}
\label{5.30}
\mathbf J(k)=\begin{bmatrix}-f_{\rm{l}}(-k^*,0)^\dagger &-f'_{\rm{l}}(-k^*,0)^\dagger\\
\noalign{\medskip}
f_{\rm{r}}(-k^*,0)^\dagger&f'_{\rm{r}}(-k^*,0)^\dagger\end{bmatrix},\qquad k\in\overline{\mathbb C^+},
\end{equation}
\begin{equation}
\label{5.31}
\mathbf J(k)^{-1}=\displaystyle\frac{1}{2ik} \begin{bmatrix}f'_{\rm{r}}(k,0)&f'_{\rm{l}}(k,0)\\
\noalign{\medskip}
-f_{\rm{r}}(k,0)&-f_{\rm{l}}(k,0))\end{bmatrix}\begin{bmatrix}T_{\rm{r}}(k) &0\\
\noalign{\medskip}
0&T_{\rm{l}}(k)\end{bmatrix},\qquad k\in\overline{\mathbb C^+}\setminus\{0\}.
\end{equation}

\end{enumerate}

\end{theorem}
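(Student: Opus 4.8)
The plan is to prove the four parts in the order (a), (d), (b), (c), since each step feeds the next. The unifying idea is that every half-line object factors through the block structure forced by the diagonal potential \eqref{5.10}, and that all the resulting block identities are precisely the Wronskian relations \eqref{2.40}--\eqref{2.46} evaluated at $x=0$.

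First I would prove (a). The choice \eqref{5.10} makes the $2n\times 2n$ system \eqref{5.1} block diagonal, so it decouples into two $n\times n$ problems. For $x>0$ the full-line potential reduces to $V_2$, so $f_{\rm{l}}(k,x)$ solves $-\psi''+V_2\psi=k^2\psi$ and is asymptotic to $e^{ikx}[I+o(1)]$; likewise, setting $g(x):=f_{\rm{r}}(k,-x)$ and using $V(-x)=V_1(-x)$ for $x>0$, one checks that $g$ solves $-g''+V_1(-x)\,g=k^2g$ and, by \eqref{2.2}, is asymptotic to $e^{ikx}[I+o(1)]$ as $x\to+\infty$. Hence the block-diagonal right-hand side of \eqref{5.26} solves \eqref{5.1} with the Jost asymptotics \eqref{5.18}, and uniqueness of the Jost solution yields \eqref{5.26}.

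Next I would prove (d). Substituting \eqref{5.26} at $x=0$, together with $\mathbf f'(k,0)=\mathrm{diag}(f_{\rm{l}}'(k,0),-f_{\rm{r}}'(k,0))$ (the sign coming from differentiating $f_{\rm{r}}(k,-x)$), into the definition \eqref{5.19} with $A,B$ as in \eqref{5.16} gives \eqref{5.30} after one block multiplication. For \eqref{5.31} I would verify the claimed inverse directly, writing $\mathbf J(k)^{-1}=\tfrac{1}{2ik}\,P D$ with $D=\mathrm{diag}(T_{\rm{r}}(k),T_{\rm{l}}(k))$ and showing $\mathbf J(k)\,P=2ik\,D^{-1}$. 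Each of the four $n\times n$ blocks of $\mathbf J(k)\,P$ is exactly a Wronskian of Jost solutions at $x=0$: the diagonal entries are $-2ik\,T_{\rm{r}}(k)^{-1}$ and $2ik\,T_{\rm{l}}(k)^{-1}$ from \eqref{2.45}--\eqref{2.46}, while the off-diagonal entries are $[f_{\rm{l}}(-k^*,x)^\dagger;f_{\rm{l}}(k,x)]$ and $[f_{\rm{r}}(-k^*,x)^\dagger;f_{\rm{r}}(k,x)]$, which I would show vanish by evaluating them at $x\to+\infty$ and $x\to-\infty$ using \eqref{2.1}--\eqref{2.2}; these are the $\overline{\mathbb C^+}$ analogues of \eqref{2.41} and \eqref{2.43}.

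Then I would prove (b) from $\mathbf S(k)=-\mathbf J(-k)\,\mathbf J(k)^{-1}$ of \eqref{5.20}, valid for $k\in\mathbb R\setminus\{0\}$. Using \eqref{5.30} at $-k$ (with $k\in\mathbb R$) and \eqref{5.31}, the product is again read off from Wronskians at $x=0$: \eqref{2.40} and \eqref{2.42} produce the $T_{\rm{l}}(k)$ and $T_{\rm{r}}(k)$ entries, \eqref{2.44} produces the $R(k)$ entry, and the $L(k)$ entry comes from $[f_{\rm{r}}(k,x)^\dagger;f_{\rm{l}}(k,x)]=-2ik\,L(k)\,T_{\rm{l}}(k)^{-1}$, which I obtain either by a direct $x\to-\infty$ computation or from the general identity $[\varphi^\dagger;\chi]^\dagger=-[\chi^\dagger;\varphi]$ applied to \eqref{2.44}. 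This gives \eqref{5.28}, hence \eqref{5.27}, with $k=0$ handled by continuity as in \eqref{1.6}--\eqref{1.8}. Finally (c) is a short substitution: inserting \eqref{5.26} and \eqref{5.28} into \eqref{5.22} produces the four blocks at once, where the off-diagonal blocks already equal $\Psi_{\rm{l}}(k,x)$ and $\Psi_{\rm{r}}(k,-x)$ by \eqref{2.6}, while the diagonal blocks $f_{\rm{l}}(-k,x)+f_{\rm{l}}(k,x)\,R(k)$ and $f_{\rm{r}}(-k,-x)+f_{\rm{r}}(k,-x)\,L(k)$ are rewritten as $\Psi_{\rm{r}}(k,x)$ and $\Psi_{\rm{l}}(k,-x)$ via the linear-combination formulas \eqref{3.1} and \eqref{3.2}. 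The main obstacle is the bookkeeping in (d): one must carefully track the sign from the reflection $x\mapsto-x$ in the lower Jost block and correctly match each block of $\mathbf J(k)\,P$ with its Wronskian, including the two vanishing ones not literally listed among \eqref{2.40}--\eqref{2.46}.
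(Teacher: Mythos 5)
Your proposal is correct, and parts (a), (c), and the derivation of \eqref{5.30} coincide in substance with the paper's proof; the genuine difference is in part (b) and in the order of the argument. The paper proves (b) before and independently of (d): it evaluates the physical solution \eqref{5.22} and its $x$-derivative at $x=0$, imposes the boundary condition \eqref{5.23} with the matrices \eqref{5.16}, recognizes the resulting relation as $G(-k,0)\,J+G(k,0)\,J\,\mathbf S(k)=0$, and then invokes the full-line identity \eqref{3.9} together with the invertibility of $G(k,0)$ from Theorem~\ref{theorem2.3}(c) to get $\mathbf S(k)=S(k)\,Q$. You instead prove (d) first and obtain (b) purely algebraically from the definition \eqref{5.20}, reading the four blocks of $-\mathbf J(-k)\,\mathbf J(k)^{-1}$ off the Wronskian identities \eqref{2.40}, \eqref{2.42}, \eqref{2.44}, and the adjoint relation $[f_{\rm{r}}(k,x)^\dagger;f_{\rm{l}}(k,x)]=-[f_{\rm{l}}(k,x)^\dagger;f_{\rm{r}}(k,x)]^\dagger=-2ik\,L(k)\,T_{\rm{l}}(k)^{-1}$, which the paper never states but which follows correctly from \eqref{2.44} as you indicate. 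This makes (b) a corollary of (d), keeps the entire proof inside the Wronskian calculus, and avoids appealing to the half-line fact that $\mathbf\Psi$ satisfies \eqref{5.23}; the paper's route, conversely, needs none of the inverse formula \eqref{5.31} to reach \eqref{5.27}. A further small merit of your write-up is that you note the vanishing off-diagonal Wronskians needed for \eqref{5.31} at complex $k$ are the $\overline{\mathbb C^+}$ analogues of \eqref{2.41} and \eqref{2.43}, whereas the paper cites \eqref{2.41} and \eqref{2.43} which it states only for $k\in\mathbb R$; your evaluation at $x\to\pm\infty$ closes that small gap. One bookkeeping slip to fix when you write the details: with your $P$ and $D$, the $(1,1)$ block of $\mathbf J(k)\,P$ is the \emph{negative} of the Wronskian in \eqref{2.45}, hence $+2ik\,T_{\rm{r}}(k)^{-1}$, not $-2ik\,T_{\rm{r}}(k)^{-1}$ as written; this is exactly what is needed for $\mathbf J(k)\,P=2ik\,D^{-1}$, so the target identity and the conclusion are unaffected.
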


\begin{proof}
The Jost solutions $f_{\rm{l}}(k,x)$ and $f_{\rm{r}}(k,x)$ satisfy
\eqref{1.1}. Since $k$ appears as $k^2$ in \eqref{1.1}, 
the quantities 
$f_{\rm{l}}(-k,x),$ and $f_{\rm{r}}(-k,x)$ also satisfy \eqref{1.1}.
Furthermore, $f_{\rm{l}}(k,x)$ and $f_{\rm{r}}(k,x)$ satisfy the respective spacial asymptotics in \eqref{2.1} and \eqref{2.2}.
Then, with the help of \eqref{1.5} and \eqref{5.10} we see that
the half-line Jost solution $\mathbf f(k,x)$ given in \eqref{5.26} satisfies the
half-line Schr\"odinger equation \eqref{5.1} and the spacial asymptotics 
\eqref{5.18}. Thus, the proof of (a) is complete.
Let us now prove (b). We evaluate \eqref{5.22} and
its $x$-derivative at the point $x=0,$ and then we use the result in \eqref{5.23}, where
$A$ and $B$ are the matrices in \eqref{5.16}. This yields
\begin{equation}\label{5.32}
\begin{bmatrix}f_{\rm{l}}(-k,0)&-f_{\rm{r}}(-k,0)\\
\noalign{\medskip}
f'_{\rm{l}}(-k,0)&-f'_{\rm{r}}(-k,0)\end{bmatrix}+
\begin{bmatrix}f_{\rm{l}}(k,0)&-f_{\rm{r}}(k,0)\\
\noalign{\medskip}
f'_{\rm{l}}(k,0)&-f'_{\rm{r}}(k,0)\end{bmatrix}\mathbf S(k)=0.
\end{equation}
Using the matrices $G(k,x)$ and $J$ defined in \eqref{2.14} and
\eqref{3.10}, respectively, we write \eqref{5.32} as
\begin{equation*}
%\label{5.33}
G(-k,0)\,J+G(k,0)\,J\,\mathbf S(k)=0,\qquad k\in\mathbb R,
\end{equation*}
which yields
\begin{equation}\label{5.34}
\mathbf S(k)=-J\,G(k,0)^{-1}\,G(-k,0)\,J,\qquad k\in\mathbb R.
\end{equation}
We remark that the invertibility of
$G(k,0)$ for $k\in\mathbb R\setminus\{0\}$ is assured by
Theorem~\ref{theorem2.3}(c) and that
\eqref{5.34} holds also at $k=0$ as a consequence of the continuity of $\mathbf S(k)$ in $k\in\mathbb R.$
From \eqref{3.9} we have
\begin{equation}\label{5.35}
G(k,0)^{-1}\,G(-k,0)=J\,S(k)\,J\,Q,\qquad k\in\mathbb R,
\end{equation}
where we recall that $S(k)$ is the full-line scattering matrix
in \eqref{2.5} and
$Q$ is the constant matrix in \eqref{2.51}.
Using \eqref{5.35} in \eqref{5.34} we get
\begin{equation}\label{5.36}
\mathbf S(k)=-J\left[J\,S(k)\,J\,Q\right] J,\qquad k\in\mathbb R.
\end{equation}
Using $J^2=I$ and $JQ=-QJ$ in \eqref{5.36}, we obtain \eqref{5.27}. Then, we get \eqref{5.28} by
using \eqref{2.5} in \eqref{5.27}. Thus, the proof of (b) is complete.
 Let us now prove (c). Using \eqref{5.26} and \eqref{5.28} in \eqref{5.22},
 we obtain
 \begin{equation}
\label{5.37}
\mathbf\Psi(k,x)=\begin{bmatrix}f_{\rm{l}}(-k,x)+f_{\rm{l}}(k,x)\,R(k)&f_{\rm{l}}(k,x)\,T_{\rm{l}}(k)\\
\noalign{\medskip}
f_{\rm{r}}(k,-x)\,T_{\rm{r}}(k)&f_{\rm{r}}(-k,-x)+f_{\rm{r}}(k,-x)\,L(k)\end{bmatrix}.
\end{equation}
The use of \eqref{3.3} and \eqref{3.4} in \eqref{5.37} yields
\begin{equation}
\label{5.38}
\mathbf\Psi(k,x)=\begin{bmatrix}f_{\rm{r}}(k,x)\,T_{\rm{r}}(k)&f_{\rm{l}}(k,x)\,T_{\rm{l}}(k)\\
\noalign{\medskip}
f_{\rm{r}}(k,-x)\,T_{\rm{r}}(k)&f_{\rm{l}}(k,-x)\,T_{\rm{l}}(k)\end{bmatrix}.
\end{equation}
Then, using \eqref{2.6} on the right-hand side of \eqref{5.38}, we obtain \eqref{5.29},
which completes the proof of (c).
We now turn to the proof of (d). 
In \eqref{5.19} we use \eqref{5.16}, \eqref{5.26}, and the $x$-derivative of \eqref{5.26}.
This gives us \eqref{5.30}. Finally, by postmultiplying
both sides of \eqref{5.31} with the respective sides of \eqref{5.30}, one can verify that
$\mathbf J(k)\,\mathbf J(k)^{-1}=\mathbf I.$ In the simplification of the left-hand side of
the last equality, one uses \eqref{2.41}, \eqref{2.43}, \eqref{2.45}, and \eqref{2.46}.
Thus, the proof of (d) is complete.
\end{proof}

We recall that, as \eqref{5.12} indicates, the full-line and half-line Hamiltonians
$H_V$ and $H_{A,B,\mathbf V},$ respectively, are unitarily equivalent.
However, as seen from \eqref{5.27}, the corresponding full-line and half-line scattering matrices
$S(k)$ and $\mathbf S(k),$ respectively, are not unitarily equivalent.
For an elaboration on this issue, we refer the reader to \cite{W2022}.

Let us also mention that it is possible to establish \eqref{5.31} without the direct verification used in the proof of 
Theorem~\ref{theorem5.1}. This can be accomplished as follows.
Comparing \eqref{2.14} and \eqref{5.30}, we see that
\begin{equation}\label{5.39}
\mathbf J(k)=-J\,G(-k^\ast,0)^\dagger
,\qquad k\in\overline{\mathbb C^+}.
\end{equation}
By taking the matrix inverse of both sides of \eqref{5.39}, we obtain
\begin{equation}\label{5.40}
\mathbf J(k)^{-1}=-\left[G(-k^\ast,0)^{-1}\right]^\dagger
J
,\qquad k\in\overline{\mathbb C^+}\setminus\{0\}.
\end{equation}
With the help of \eqref{2.14}, \eqref{2.51}, and \eqref{3.10}, we can write \eqref{2.56} as
\begin{equation}\label{5.41}
G(k,x)^{-1}=-\displaystyle\frac{1}{2ik}\,
\begin{bmatrix}T_{\rm{l}}(k)&0\\
\noalign{\medskip}
0&T_{\rm{r}}(k)\end{bmatrix}\,J\,Q\,G(-k^\ast,x)^\dagger \,Q\,J,
\qquad k\in\overline{\mathbb C^+}\setminus\{0\}.
\end{equation}
From \eqref{5.41},  for $k\in\overline{\mathbb C^+}\setminus\{0\}$ we get
\begin{equation}\label{5.42}
\left[G(-k^\ast,x)^{-1}\right]^\dagger=-\displaystyle\frac{1}{2ik}\,J\,Q\,
G(k,x)\,Q\,J\,
\begin{bmatrix}T_{\rm{l}}(-k^\ast)^\dagger&0\\
\noalign{\medskip}
0&T_{\rm{r}}(-k^\ast)^\dagger\end{bmatrix}.
\end{equation}
Using \eqref{4.1} in the last matrix factor on the right-hand side of
\eqref{5.42}, we can write
\eqref{5.42} as
\begin{equation}\label{5.43}
\left[G(-k^\ast,x)^{-1}\right]^\dagger=-\displaystyle\frac{1}{2ik}\,J\,Q\,
G(k,x)\,Q\,J\,
\begin{bmatrix}T_{\rm{r}}(k)&0\\
\noalign{\medskip}
0&T_{\rm{l}}(k)\end{bmatrix},
\qquad k\in\overline{\mathbb C^+}\setminus\{0\}.
\end{equation}
Finally, using \eqref{5.43} on the right-hand side of \eqref{5.40}, we 
obtain 
\begin{equation}\label{5.44}
\mathbf J(k)^{-1}=\displaystyle\frac{1}{2ik}\,J\,Q\,
G(k,0)\,Q\,
\begin{bmatrix}T_{\rm{r}}(k)&0\\
\noalign{\medskip}
0&T_{\rm{l}}(k)\end{bmatrix},
\qquad k\in\overline{\mathbb C^+}\setminus\{0\},
\end{equation}
which is equivalent to \eqref{5.31}.

In the next theorem, we relate the determinant of the half-line $2n\times 2n$ Jost matrix $\mathbf J(k)$ to the determinant of the full-line 
$n\times n$ matrix-valued
transmission coefficient $T_{\rm{l}}(k).$

\begin{theorem}
\label{theorem5.2} 
Consider the full-line matrix Schr\"odinger equation \eqref{1.1} with the $n\times n$ matrix
potential $V$ satisfying \eqref{1.2} and \eqref{1.3}. Assume that the corresponding full-line
Hamiltonian $H_V$ is unitarily connected to the half-line Hamiltonian
$H_{A,B,\mathbf V}$ as in \eqref{5.12} by relating the half-line $2n\times 2n$ matrix potential $\mathbf V$ to
$V$ as in \eqref{1.4} and \eqref{5.10} and
by choosing the boundary matrices $A$ and $B$ as in \eqref{5.16}.
Then, the determinant of the half-line $2n\times 2n$ Jost matrix $\mathbf J(k)$ defined in \eqref{5.19} is related to
the determinant of the 
corresponding full-line $n\times n$ matrix-valued transmission coefficient $T_{\rm{l}}(k)$ appearing in \eqref{2.3} as
\begin{equation}
\label{5.45}
\det[\mathbf J(k)]=\displaystyle\frac{(2ik)^n}{\det[T_{\rm{l}}(k)]},
\qquad k\in\overline{\mathbb C^+}.
\end{equation}
\end{theorem}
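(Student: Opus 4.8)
The plan is to derive \eqref{5.45} purely from the matrix identity \eqref{5.39}, which already expresses the half-line Jost matrix as $\mathbf J(k)=-J\,G(-k^*,0)^\dagger$ for $k\in\overline{\mathbb C^+}$. First I would take determinants of both sides, using multiplicativity of the determinant together with the elementary fact $\det[M^\dagger]=\overline{\det[M]}$ valid for any square matrix $M$, to obtain
\begin{equation*}
\det[\mathbf J(k)]=\det[-J]\,\overline{\det[G(-k^*,0)]},\qquad k\in\overline{\mathbb C^+}.
\end{equation*}
Since $J$ is the matrix in \eqref{3.10}, we have $-J=\mathrm{diag}\{-I,I\}$ and hence $\det[-J]=(-1)^n$, which disposes of the first factor.

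Next I would evaluate $\det[G(-k^*,0)]$ by invoking \eqref{2.17}. Because $-k^*\in\overline{\mathbb C^+}$ whenever $k\in\overline{\mathbb C^+}$, that formula applies at the argument $-k^*$, and combining it with the equality of transmission determinants in \eqref{2.18} gives $\det[G(-k^*,0)]=(2ik^*)^n/\det[T_{\rm{l}}(-k^*)]$. Conjugating this, and using $\overline{2ik^*}=-2ik$ (since $\overline{k^*}=k$) together with $\overline{\det[M]}=\det[M^\dagger]$, I would arrive at
\begin{equation*}
\overline{\det[G(-k^*,0)]}=\frac{(-2ik)^n}{\det[T_{\rm{l}}(-k^*)^\dagger]}.
\end{equation*}

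The decisive simplification then comes from the symmetry relation \eqref{4.1}, which states $T_{\rm{l}}(-k^*)^\dagger=T_{\rm{r}}(k)$ on $\overline{\mathbb C^+}$; applying once more \eqref{2.18} yields $\det[T_{\rm{l}}(-k^*)^\dagger]=\det[T_{\rm{r}}(k)]=\det[T_{\rm{l}}(k)]$. Substituting everything back produces
\begin{equation*}
\det[\mathbf J(k)]=(-1)^n\,\frac{(-2ik)^n}{\det[T_{\rm{l}}(k)]}=\frac{(2ik)^n}{\det[T_{\rm{l}}(k)]},
\end{equation*}
which is exactly \eqref{5.45}.

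I do not expect a genuine obstacle: the argument is a bookkeeping computation once \eqref{5.39} is in hand, and the only points demanding care are the correct tracking of complex conjugation through the substitution $k\mapsto -k^*$ and the sign factor $(-1)^n$ arising from $\det[-J]$. The one subtlety worth a remark concerns the bound-state values on the positive imaginary axis, where $\det[T_{\rm{l}}(k)]$ has poles: there the right-hand side vanishes, consistent with the vanishing of $\det[\mathbf J(k)]$ at the bound states, so \eqref{5.45} is to be read as an identity of meromorphic functions on $\overline{\mathbb C^+}$, holding at $k=0$ in the continuity sense of \eqref{1.6}--\eqref{1.8}.
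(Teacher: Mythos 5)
Your proof is correct: every step checks, including the sign bookkeeping $\det[-J]=(-1)^n$, the observation that $-k^*\in\overline{\mathbb C^+}$ whenever $k\in\overline{\mathbb C^+}$, and the final cancellation $(-1)^n(-2ik)^n=(2ik)^n$. It is, however, a genuinely different route from the paper's. The paper proves \eqref{5.45} by taking determinants of \eqref{5.44}, the explicit formula for $\mathbf J(k)^{-1}$ in terms of $G(k,0)$ and the transmission coefficients, so that \eqref{2.17} can be applied directly at the argument $k$ with no complex conjugation; the only auxiliary facts needed are $Q^2=\mathbf I$ and $\det[J]=(-1)^n$. You instead start from the more primitive identity \eqref{5.39}, $\mathbf J(k)=-J\,G(-k^*,0)^\dagger$, which involves no matrix inverses and holds on all of $\overline{\mathbb C^+}$ (including $k=0$ and the bound-state points), and you pay for that by having to push the substitution $k\mapsto -k^*$ and the conjugation through \eqref{2.17}, which forces the extra ingredients \eqref{2.18} and \eqref{4.1}. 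What your route buys is independence from the inversion formulas \eqref{2.56} and \eqref{5.44} (hence from Theorem~\ref{theorem2.3}(c)), and a starting identity with no excluded points; what the paper's route buys is shorter bookkeeping, since everything in \eqref{5.44} is already evaluated at $k$ and no symmetry relations are needed. Both proofs face the same interpretation issues at $k=0$ and at the poles of $\det[T_{\rm{l}}(k)]$ on the positive imaginary axis, and your closing remark—reading \eqref{5.45} as an identity of meromorphic functions, with the right-hand side vanishing precisely where $\det[\mathbf J(k)]$ vanishes—handles this correctly and in the same spirit as the paper's convention \eqref{1.6}--\eqref{1.8}.
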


\begin{proof}
By taking the determinants of both sides of \eqref{5.44}, we obtain
\begin{equation}\label{5.46}
\displaystyle\frac{1}{\det[\mathbf J(k)]}=\displaystyle\frac{(-1)^n}{(2ik)^{2n}}\,\det[G(k,0)]\,
\det[T_{\rm{l}}(k)]\,\det[T_{\rm{r}}(k)],
\qquad k\in\overline{\mathbb C^+}\setminus\{0\},
\end{equation}
where we have used $Q^2=\mathbf I$ and $\det[J]=(-1)^n,$ 
which follow from \eqref{2.51} and \eqref{3.10}, respectively. 
Using \eqref{2.17} in \eqref{5.46}, we see that \eqref{5.45} holds.
\end{proof}

When the potential $\mathbf V$ satisfies \eqref{5.2} and \eqref{5.3},
from Theorems~3.11.1 and 3.11.6 of \cite{AW2021} we know that
the zeros of $\det[\mathbf J(k)]$ in $\overline{\mathbb C^+}\setminus\{0\}$ 
can only occur on the positive imaginary axis and the number of such zeros is finite.
Assume that the full-line Hamiltonian $H_V$ and the
half-line Hamiltonian
$H_{A,B,\mathbf V}$ 
are connected to each other through the unitary transformation
$\mathbf U$ as described in \eqref{5.12},
where $V$ and $\mathbf V$ are related as in 
\eqref{1.4} and \eqref{5.10} and the boundary matrices are chosen as in \eqref{5.16}.
We then have the following important consequence.
The number of eigenvalues of
$H_V$ coincides with
the number of eigenvalues of $H_{A,B,\mathbf V},$ and
the multiplicities of the corresponding eigenvalues also coincide.
The eigenvalues of
$H_{A,B,\mathbf V}$ occur at the $k$-values on the positive imaginary axis
in the complex $k$-plane where $\det[\mathbf J(k)]$ vanishes, 
and the multiplicity of each of those eigenvalues is equal to the 
order of the corresponding zero of $\det[\mathbf J(k)].$
Thus, as seen from \eqref{5.45}
the eigenvalues of $H_V$ occur on the positive imaginary axis
in the complex $k$-plane where $\det[T_{\rm{l}}(k)]$ has poles, and
the multiplicity of each of those eigenvalues is equal to
the order of the corresponding pole of $\det[T_{\rm{l}}(k)].$
Let us use $N$ to denote the number of such poles without counting their multiplicities, 
and let us assume that
those poles occur at $k=i\kappa_j$ for $1\le j\le N.$
We use $m_j$ to denote the multiplicity of the pole at
$k=i\kappa_j.$ The nonnegative integer $N$ corresponds to the
number of distinct eigenvalues $-\kappa_j^2$ of the corresponding full-line Schr\"odinger
operator $H_V$ associated with \eqref{1.1}.
If $N=0,$ then there are no eigenvalues. Let us use $\mathcal N$ to denote the
number of eigenvalues including the multiplicities. Hence, $\mathcal N$ is related to 
$N$ as
\begin{equation}
\label{5.47}
\mathcal N:=\displaystyle\sum_{j=1}^N  m_j.
\end{equation}
From \eqref{5.45} it is seen that 
the zeros of the determinant of the corresponding
Jost matrix $\mathbf J(k)$ occur at $k=i\kappa_j$
with multiplicity $m_j$ for $1\le j\le N.$
It is known \cite{AW2021} that $\mathbf J(k)$ is analytic in $\mathbb C^+$ and
continuous in $\overline{\mathbb C^+}.$ Thus, from 
\eqref{5.45} we also see that $\det[T_{\rm{l}}(k)]$ cannot vanish in $\overline{\mathbb C^+}\setminus\{0\}.$

The unitary equivalence given in \eqref{5.12} between the half-line matrix Schr\"odinger operator
$H_{A,B,\mathbf V}$ and the full-line matrix Schr\"odinger operator
$H_V$ has other important consequences. Let us comment on the connection 
between the half-line and full-line cases when $k=0.$
From 
Corollary~3.8.16 of \cite{AW2021} it follows that
\begin{equation} \label{5.48}
\det[\mathbf J(k)]= c_1\, k^\mu \left[1+o(1)\right], \qquad k \to 0 \text{\rm{ in }}\overline{\mathbb C^+},
\end{equation} 
where $\mu$ is the geometric multiplicity of the zero eigenvalue of $\mathbf J(0)$ and $c_1$ is a nonzero constant. Furthermore, from Proposition~3.8.18
of \cite{AW2021} we know that $\mu$ coincides with the geometric and algebraic multiplicity of the eigenvalue $+1$ of $\mathbf S(0).$ 
In fact, the nonnegative integer $\mu$ is related to \eqref{5.1} when $k=0,$ i.e. related to the zero-energy Schr\"odinger equation given by
\begin{equation}
\label{5.49}
-\phi''+\mathbf V(x)\,\phi=0,\qquad x\in\mathbb R^+.
\end{equation}
When the $2n\times 2n$ matrix potential $\mathbf V$ satisfies \eqref{5.2} and \eqref{5.3},
from (3.2.157) and Proposition~3.2.6 of \cite{AW2021} it follows that, among any fundamental set
of $4n$ linearly independent column-vector solutions to \eqref{5.49}, exactly $2n$ of
them are bounded and $2n$ are unbounded. In fact, the $2n$ columns of the zero-energy Jost solution $\mathbf f(0,x)$ 
make up the $2n$ linearly independent bounded solutions to \eqref{5.49}.
Certainly, not all of those $2n$ column-vector solutions necessarily satisfy the selfadjoint
boundary condition \eqref{5.4} with the $2n\times 2n$ boundary matrices $A$ and $B$ 
chosen as in \eqref{5.5}. From 
Remark~3.8.10 of \cite{AW2021} it is known that $\mu$ coincides with the number of linearly independent bounded solutions to \eqref{5.49} 
satisfying the boundary condition \eqref{5.4} and that
we have
\begin{equation}\label{5.50}
0\le \mu\le 2n.
\end{equation}
From \eqref{5.45} and \eqref{5.48} we obtain
\begin{equation}\label{5.51}
\det[T_{\rm{l}}(k)]= \frac{1}{c_1}\, (2i)^n\,   k^{n-\mu}\left[1+o(1)\right], \qquad k \to 0\text{\rm{ in }}\overline{\mathbb C^+}.
\end{equation}
We have the following additional remarks.
Let us consider \eqref{1.1} when $k=0,$ i.e. let us consider the full-line zero-energy
Schr\"odinger equation
\begin{equation}
\label{5.52}
-\psi''+V(x)\,\psi=0,\qquad x\in\mathbb R,
\end{equation}
where $V$ is the $n\times n$ matrix potential appearing in \eqref{1.1} and satisfying
\eqref{1.2} and \eqref{1.3}.
We recall that \cite{AKV1996}, 
for the full-line 
Schr\"odinger equation in the scalar case, i.e. when $n=1,$ the generic case occurs when there are no 
bounded solutions to \eqref{5.52} and that the exceptional case occurs when there exists a bounded solution to \eqref{5.52}.
In the $n\times n$ matrix case, let us assume that \eqref{5.52} has $\nu$
linearly independent bounded solutions. It is known \cite{AKV2001} that 
\begin{equation}
\label{5.53}
0\le \nu\le n.
\end{equation}
We can call $\nu$ the {\it degree of exceptionality}. 
From 
the unitary connection \eqref{5.12} it follows that 
the number of linearly independent bounded solutions to \eqref{5.52}
is equal to the number of linearly independent bounded solutions to \eqref{5.49}. Thus, we have
\begin{equation}
\label{5.54}
\mu=\nu.
\end{equation}
As a result, we can write \eqref{5.51} also as
\begin{equation}\label{5.55}
\det[T_{\rm{l}}(k)]= \frac{1}{c_1}\, (2i)^n\,   k^{n-\nu}\left[1+o(1)\right], \qquad k \to 0\text{\rm{ in }}\overline{\mathbb C^+}.
\end{equation}
From \eqref{5.50}, \eqref{5.53}, and \eqref{5.54} we conclude that
\begin{equation}\label{5.56}
0\le \mu\le n.
\end{equation}

The restriction from \eqref{5.50} to \eqref{5.56}, when we have 
the unitary equivalence \eqref{5.12} between the half-line matrix Schr\"odinger operator
$H_{A,B,\mathbf V}$ and the full-line matrix Schr\"odinger operator
$H_V,$ can be made plausible as follows.
As seen from \eqref{5.10}, the $2n\times 2n$ matrix potential $\mathbf V(x)$ is a block diagonal matrix
consisting of the
two $n\times n$ square matrices $V_2(x)$ and $V_1(-x)$ for $x\in\mathbb R^+.$
Thus, we can decompose any column-vector solution $\phi(x)$ to \eqref{5.49} with $2n$ components
into two column vectors each having $n$ components as
\begin{equation}
\label{5.57}
\phi(x)=\begin{bmatrix} \phi_2(x)\\
\noalign{\medskip}
\phi_1(x)\end{bmatrix},\qquad x\in\mathbb R^+.
\end{equation}
Consequently, we can decompose the $2n\times 2n$ matrix system \eqref{5.49} into the two $n\times n$
matrix subsystems
\begin{equation}
\label{5.58}
\begin{cases}
-\phi_2''+V_2(x)\,\phi_2=0,\qquad x\in\mathbb R^+,\\
\noalign{\medskip}
-\phi_1''+V_1(-x)\,\phi_1=0,\qquad x\in\mathbb R^+,
\end{cases}
\end{equation}
in such a way that the two $n\times n$ subsystems given in the first and second lines, respectively, 
in \eqref{5.58} are uncoupled from each other.
We now look for $n\times n$ matrix solutions to each of the two $n\times n$ subsystems in \eqref{5.58}.
Let us first solve the first subsystem in \eqref{5.58}. From (3.2.157) and Proposition~3.2.6
of \cite{AW2021}, we know that there are $n$ linearly independent bounded solutions to the first
subsystem in \eqref{5.58}. With the help of \eqref{5.57} and
the boundary matrices $A$ and $B$ in \eqref{5.16}, we see that the boundary condition
\eqref{5.4} is expressed as two $n\times n$ systems as
\begin{equation}\label{5.59}
\phi_1(0)=\phi_2(0),\quad \phi'_1(0)=-\phi'_2(0).
\end{equation}
Thus, once we have $\phi_2(x)$ at hand, the initial values for $\phi_1(0)$ and $\phi'_1(0)$ are uniquely determined 
from \eqref{5.59}. Then, we solve the second subsystem
in \eqref{5.58} with the already determined initial conditions given in \eqref{5.59}.
Since the first subsystem in \eqref{5.58} can have at most $n$ linearly independent bounded solutions $\phi_2(x)$ and
we have $\phi_1(x)$ uniquely determined using $\phi_2(x),$ from the decomposition \eqref{5.57} we conclude that we
can have at most $n$ linearly independent bounded column-vector solutions $\phi(x)$ to the $2n\times 2n$ system \eqref{5.49}.

In \eqref{3.14} we have expressed the determinant of the scattering matrix $S(k)$ for the full-line Schr\"odinger equation
\eqref{1.1} in terms of the determinant of the left transmission coefficient $T_{\rm{l}}(k).$
When the full-line Hamiltonian $H_V$ and the half-line Hamiltonian $H_{A,B,\mathbf V}$ are related to each other unitarily as in 
\eqref{5.12}, in the next theorem we relate the determinant of
the corresponding half-line scattering matrix $\mathbf S(k)$ to $\det[S(k)].$

\begin{theorem}
\label{theorem5.3} 
Consider the full-line matrix Schr\"odinger equation \eqref{1.1} with the $n\times n$ matrix
potential $V$ satisfying \eqref{1.2} and \eqref{1.3}. Assume that the corresponding full-line
Hamiltonian $H_V$ and the half-line Hamiltonian
$H_{A,B,\mathbf V}$ are unitarily connected as in \eqref{5.12} by relating the half-line $2n\times 2n$ matrix potential $\mathbf V$ to
$V$ as in \eqref{1.4} and \eqref{5.10} and
by choosing the boundary matrices $A$ and $B$ as in \eqref{5.16}.
Then, the determinant of the half-line scattering matrix $\mathbf S(k)$ defined in \eqref{5.20} is related to the determinant
of the full-line scattering matrix $S(k)$ defined in \eqref{2.5} as
\begin{equation}
\label{5.60}
\det\left[\mathbf S(k)\right]=  (-1)^n \det\left[S(k)\right], \qquad k\in\mathbb R,
\end{equation}
and hence we have
\begin{equation}
\label{5.61}
\det\left[\mathbf S(k)\right]=(-1)^{n}\displaystyle\frac{\det[T_{\rm{l}}(k)]}{\left(\det[T_{\rm{l}}(k)]\right)^*},\qquad k\in\mathbb R,
\end{equation}
where we recall that $T_{\rm{l}}(k)$ is the $n\times n$ matrix-valued left transmission coefficient appearing in \eqref{2.3}.

\end{theorem}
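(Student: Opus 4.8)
The plan is to read off both identities directly from relations already established in the excerpt, so that \eqref{5.60} and \eqref{5.61} require essentially no new computation. The starting point is \eqref{5.27} from Theorem~\ref{theorem5.1}(b), which asserts that under the chosen unitary connection the half-line and full-line scattering matrices are related by the simple postmultiplication $\mathbf S(k)=S(k)\,Q$ for $k\in\mathbb R$, where $Q$ is the constant block-swap matrix in \eqref{2.51}. Since the determinant is multiplicative, I would take determinants of both sides to obtain $\det[\mathbf S(k)]=\det[S(k)]\,\det[Q]$.

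The second ingredient is the value $\det[Q]=(-1)^n$, which is already recorded in the text just below \eqref{3.9} (where both $\det[J]=(-1)^n$ and $\det[Q]=(-1)^n$ are used). This follows because $Q$ is the $2n\times 2n$ permutation matrix interchanging the two $n$-dimensional blocks, i.e. it is a product of $n$ transpositions. Substituting this value gives $\det[\mathbf S(k)]=(-1)^n\det[S(k)]$, which is exactly \eqref{5.60}. I would then invoke \eqref{3.14}, which expresses $\det[S(k)]$ as $\det[T_{\rm{l}}(k)]/(\det[T_{\rm{l}}(k)])^\ast$, and insert it into \eqref{5.60} to arrive at \eqref{5.61}.

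There is no genuine obstacle here; the argument is a two-line consequence of \eqref{5.27}, the known value of $\det[Q]$, and \eqref{3.14}. The only points warranting a word of care are bookkeeping ones: one should note that \eqref{5.27} and \eqref{3.14} both hold for all $k\in\mathbb R$ (the former by the continuity of $\mathbf S(k)$ at $k=0$ discussed below \eqref{5.20}, the latter as stated), so that the resulting identities \eqref{5.60} and \eqref{5.61} likewise hold on all of $\mathbb R$ including $k=0$ without any separate limiting argument. Thus the hardest part of the proof is simply confirming that the $(-1)^n$ factor from $\det[Q]$ is correctly propagated through the substitution of \eqref{3.14}.
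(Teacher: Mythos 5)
Your proposal is correct and follows essentially the same route as the paper's own proof: take determinants in \eqref{5.27}, use $\det[Q]=(-1)^n$ (the paper justifies this by noting that interchanging the two row blocks of $Q$ yields the identity matrix, which matches your product-of-$n$-transpositions argument), and then substitute \eqref{3.14}. Nothing is missing.
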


\begin{proof}
The scattering matrices $\mathbf S(k)$ and $S(k)$ are related to each other as  in \eqref{5.27}, where $Q$ is the 
$2n\times 2n$ constant matrix defined in \eqref{2.51}. By interchanging the first and second row blocks in
$Q$ we get the $2n\times 2n$ identity matrix, and hence we have $\det[Q]=(-1)^n.$ Thus, by taking the determinant
of both sides of \eqref{5.27}, we obtain \eqref{5.60}. Then, using \eqref{3.14} on the right-hand side of \eqref{5.60}, we get \eqref{5.61}. 
\end{proof}

\section{Levinson's theorem}
\label{section6}

The main goal in this section is to prove Levinson's theorem for the full-line matrix Schr\"odinger equation \eqref{1.1}. In the scalar case, we recall that Levinson's theorem connects the continuous spectrum and the discrete spectrum to each other,
and it relates the number of discrete eigenvalues including the multiplicities to the
change in the phase of the scattering matrix when the independent variable
$k$ changes from $k=0$ to $k=+\infty.$ In the matrix case, in Levinson's theorem one needs to use the phase of the
determinant of the scattering matrix. In order to prove Levinson's theorem for \eqref{1.1}, we 
exploit the unitary transformation given in \eqref{5.12} relating the half-line and full-line
matrix Schr\"odinger operators and we use Levinson's theorem presented in Theorem~3.12.3 of
\cite{AW2021} for the half-line matrix Schr\"odinger operator.
We also provide an independent proof with the help of the argument principle
applied to the determinant of the matrix-valued left transmission coefficient appearing in \eqref{2.3}.

In preparation for the proof of Levinson's theorem for \eqref{1.1}, in the next theorem
we relate the large $k$-asymptotics of the half-line scattering matrix $\mathbf S(k)$
to the full-line matrix potential $V$ in \eqref{1.1}. Recall that the half-line potential $\mathbf V$ is related
to the full-line potential $V$ as in \eqref{5.10}, where $V_1$ and $V_2$ are the fragments of $V$
appearing in \eqref{1.4} and \eqref{1.5}. We also recall that the boundary matrices
$A$ and $B$ appearing in \eqref{5.16} are used in the boundary condition \eqref{5.4} in the formation
of the half-line scattering matrix $\mathbf S(k).$

\begin{theorem}
\label{theorem6.1} 
Consider the full-line matrix Schr\"odinger equation \eqref{1.1} with the $n\times n$ matrix
potential $V$ satisfying \eqref{1.2} and \eqref{1.3}. Assume that the corresponding full-line
Hamiltonian $H_V$ and the half-line Hamiltonian
$H_{A,B,\mathbf V}$ are unitarily connected by relating the half-line $2n\times 2n$ matrix potential $\mathbf V$ to
$V$ as in \eqref{1.4} and \eqref{5.10} and
by choosing the $2n\times 2n$ boundary matrices $A$ and $B$ as in \eqref{5.16}.
Then, the half-line scattering matrix $\mathbf S(k)$ in \eqref{5.20} has the large $k$-asymptotics
given by
\begin{equation}
\label{6.1}
\mathbf S(k)=\mathbf S_\infty+\displaystyle\frac{\mathbf G_1}{ik}+o\left(\displaystyle\frac{1}{k}\right),\qquad k\to\pm\infty
\text{ \rm{in} }\mathbb R,
\end{equation}
where we have
\begin{equation}
\label{6.2}
\mathbf S_\infty=Q,\quad \mathbf G_1=\displaystyle\frac{1}{2}\begin{bmatrix}0&\int_{-\infty}^\infty dx\,V(x)\\
\noalign{\medskip}
\int_{-\infty}^\infty dx\,V(x)
&0
\end{bmatrix},
\end{equation}
with $Q$ being the constant $2n\times 2n$ matrix defined in \eqref{2.51}.

\end{theorem}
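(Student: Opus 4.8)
The plan is to combine the explicit block structure of the half-line scattering matrix established in Theorem~\ref{theorem5.1}(b) with the large $k$-asymptotics of the full-line scattering coefficients obtained in Theorem~\ref{theorem4.1}. The starting point is the identity \eqref{5.28}, which under the present hypotheses expresses the half-line scattering matrix as
\begin{equation*}
\mathbf S(k)=\begin{bmatrix}R(k)&T_{\rm{l}}(k)\\ T_{\rm{r}}(k)&L(k)\end{bmatrix},\qquad k\in\mathbb R .
\end{equation*}
Consequently, the large $k$-behavior of $\mathbf S(k)$ is determined entirely by that of its four $n\times n$ blocks, and the task reduces to inserting the asymptotics \eqref{4.2}--\eqref{4.5} into this block matrix and collecting terms by powers of $1/k$.

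First I would treat the off-diagonal blocks. By \eqref{4.2} and \eqref{4.3}, both transmission coefficients satisfy $T_{\rm{l}}(k)=T_{\rm{r}}(k)=I+(2ik)^{-1}\int_{-\infty}^\infty dx\,V(x)+O(1/k^2)$ as $k\to\pm\infty$ in $\mathbb R$, so each off-diagonal block contributes the $n\times n$ identity at leading order and $(2ik)^{-1}\int_{-\infty}^\infty dx\,V(x)$ at the next order. Next I would treat the diagonal blocks. The leading terms of \eqref{4.4} and \eqref{4.5} carry the oscillatory integrals $\int_{-\infty}^\infty dx\,V(x)\,e^{\pm 2ikx}$, and the key observation is that these integrals tend to zero as $k\to\pm\infty$ by the Riemann--Lebesgue lemma, since $V\in L^1(\mathbb R)$ by \eqref{1.3}. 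Hence $L(k)=o(1/k)$ and $R(k)=o(1/k)$ as $k\to\pm\infty$, so the diagonal blocks do not contribute at order $1/k$.

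Collecting the leading-order terms then yields $\mathbf S_\infty=\left[\begin{smallmatrix}0&I\\ I&0\end{smallmatrix}\right]$, which is precisely the matrix $Q$ defined in \eqref{2.51}, while the order-$1/k$ contributions occur only in the off-diagonal positions and equal $(2ik)^{-1}\int_{-\infty}^\infty dx\,V(x)$ in each. These assemble into $(ik)^{-1}\mathbf G_1$ with $\mathbf G_1$ as displayed in \eqref{6.2}, and together with the leading term they produce the expansion \eqref{6.1}. The only step requiring genuine care, rather than mere substitution, is the Riemann--Lebesgue estimate for the diagonal blocks: one must verify that the oscillatory integrals in \eqref{4.4} and \eqref{4.5} vanish as $|k|\to\infty$ so that the reflection coefficients are genuinely $o(1/k)$ and leave the off-diagonal structure of $\mathbf G_1$ uncontaminated. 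Everything else is a direct block-by-block matching of powers of $1/k$.
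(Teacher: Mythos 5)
Your proof is correct and takes essentially the same route as the paper: substitute the large-$k$ asymptotics \eqref{4.2}--\eqref{4.5} of the full-line scattering coefficients into the block identity \eqref{5.28} and collect powers of $1/k$. The Riemann--Lebesgue argument you spell out, showing the oscillatory integrals in \eqref{4.4} and \eqref{4.5} make the reflection coefficients $o(1/k)$ so that the diagonal blocks of $\mathbf G_1$ vanish, is exactly the detail the paper leaves implicit, and you handle it correctly.
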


\begin{proof}
We already know that the half-line scattering matrix $\mathbf S(k)$ is related to the full-line
scattering coefficients as in \eqref{5.28}. With the help of \eqref{4.2}--\eqref{4.5} expressing
the large $k$-asymptotics of
the scattering coefficients, from \eqref{5.28} we obtain \eqref{6.1}. 
\end{proof}

\begin{remark}\label{remark6.2}
\normalfont
We have the following comments related to the first equality in \eqref{6.2}.
One can directly evaluate the eigenvalues of the
matrix $Q$ defined in \eqref{2.51} and confirm that it has
the eigenvalue $-1$ with multiplicity $n$ and 
has
the eigenvalue $+1$ with multiplicity $n.$ By Theorem~3.10.6 of \cite{AW2021} we know that the number of 
Dirichlet boundary conditions associated with the boundary matrices $\tilde{A}$ and $\tilde{B}$ in \eqref{5.16} 
is equal to the (algebraic and geometric) multiplicity of the eigenvalue $-1$ of the matrix $\mathbf S_\infty,$  and that the number of 
non-Dirichlet boundary conditions is equal to the (algebraic and geometric) multiplicity of the eigenvalue $+1$ of 
the matrix $\mathbf S_\infty.$ Hence, the first equality in \eqref{6.2} implies that the boundary matrices $\tilde{A}$ and 
$\tilde{B}$ in \eqref{5.16}  correspond to $n$ Dirichlet and $n$ non-Dirichlet boundary conditions. In fact, 
it is already proved in Proposition~5.9 of \cite{W2022} with a different method that the number of
Dirichlet boundary conditions equal to $n,$ and that
the aforementioned $n$ non-Dirichlet boundary conditions
are actually all Neumann boundary conditions.
\end{remark}

Next, we provide a review of some relevant
facts on the small $k$-asymptotics related
to the full-line matrix Schr\"odinger equation \eqref{1.1} with the $n\times n$ matrix
potential $V$ satisfying \eqref{1.2} and \eqref{1.3}. 
We refer the reader to \cite{AKV2001}
for an elaborate analysis including the proofs and further details.

Let us consider the full-line $n\times n$ matrix-valued zero-energy Schr\"odinger equation given in \eqref{5.52}
when the $n\times n$ matrix potential $V$ satisfies \eqref{1.2} and \eqref{1.3}.
There are two $n\times n$ matrix-valued solutions to \eqref{5.52} whose $2n$ columns
form a fundamental set. We use $\phi_{\rm{l}}(x)$ to denote one of those two $n\times n$ matrix-valued solutions, where
$\phi_{\rm{l}}(x)$ satisfies the asymptotic conditions
  \begin{equation*}
%\label{6.3}
\phi_{\rm{l}}(x)=x\left[I+o(1)\right],\quad \phi'_{\rm{l}}(x)=I+o(1),\qquad x\to+\infty.
\end{equation*}
Thus, all the $n$ columns of $\phi_{\rm{l}}(x)$ correspond to unbounded
solutions to \eqref{5.52}. The other $n\times n$ matrix-valued solution to \eqref{5.52} is given by
$f_{\rm{l}}(0,x),$ where $f_{\rm{l}}(k,x)$ is the left Jost solution to \eqref{1.1} appearing in \eqref{2.1}.
As seen from \eqref{2.1}, the function $f_{\rm{l}}(0,x)$ satisfies the asymptotics 
 \begin{equation*}
%\label{6.4}
f_{\rm{l}}(0,x)=I+o(1),\quad f'_{\rm{l}}(0,x)=o(1),\qquad x\to+\infty,
\end{equation*}
and hence $f_{\rm{l}}(0,x)$ remains bounded as $x\to+\infty.$ On the other hand, some or all the $n$ columns
of $f_{\rm{l}}(0,x)$ may be unbounded as $x\to-\infty.$ In fact, we have \cite{AKV2001}
\begin{equation*}
%\label{6.5}
f_{\rm{l}}(0,x)=x\left[\Delta_{\rm{l}}+o(1)\right],\quad f'_{\rm{l}}(0,x)=\Delta_{\rm{l}}+o(1),\qquad x\to-\infty,
\end{equation*}
where $\Delta_{\rm{l}}$ is the $n\times n$ constant matrix defined as
\begin{equation}
\label{6.6}
\Delta_{\rm{l}}:=\displaystyle\lim_{k\to 0} 2ik\,T_{\rm{l}}(k)^{-1},
\end{equation}
with the limit taken from within $\overline{\mathbb C^+}.$

From Section~\ref{section5}, we recall that the degree of exceptionality, denoted by $\nu,$ is defined as
the number of linearly independent bounded column-vector solutions to \eqref{5.52}, and we know that $\nu$ satisfies
\eqref{5.53}. From \eqref{5.53} we see that we have the purely generic case for \eqref{1.1} when $\nu=0$
and we have the purely exceptional case when $\nu=n.$ 
Hence, $n-\nu$ corresponds to 
the {\it degree of genericity} for \eqref{1.1}. Since the value of 
$\nu$
is uniquely determined by only the potential $V$ in \eqref{1.1}, we can also say that the $n\times n$ matrix
potential $V$ is exceptional with degree $\nu.$

We can characterize the value of $\nu$ in various different ways.
For example, $\nu$ corresponds to the geometric multiplicity of the zero eigenvalue of the $n\times n$ matrix $\Delta_{\rm{l}}$
defined in \eqref{6.6}. The value of $\nu$ is equal to
the nullity of the matrix $\Delta_{\rm{l}}.$ 
It is also equal to the number of linearly independent bounded columns of the
zero-energy left Jost solution $f_{\rm{l}}(0,x).$ Thus, the remaining $n-\nu$ columns of
$f_{\rm{l}}(0,x)$ are all unbounded solutions to \eqref{5.52}. Hence, \eqref{5.52}
has $2n-\nu$ linearly independent unbounded column-vector solutions and it has $\nu$ linearly independent bounded
column-vector solutions.

The degree of exceptionality $\nu$ can also be related to the zero-energy right Jost solution $f_{\rm{r}}(0,x)$ to \eqref{5.52}.
As we see from \eqref{2.2}, the function $f_{\rm{r}}(0,x)$ satisfies the asymptotics 
 \begin{equation}
\label{6.7}
f_{\rm{r}}(0,x)=I+o(1),\quad f'_{\rm{r}}(0,x)=o(1),\qquad x\to-\infty,
\end{equation}
and hence $f_{\rm{r}}(0,x)$ remains bounded as $x\to-\infty.$ 
On the other hand, we have 
\begin{equation}
\label{6.8}
f_{\rm{r}}(0,x)=-x\left[\Delta_{\rm{r}}+o(1)\right],\quad f'_{\rm{r}}(0,x)=-\Delta_{\rm{r}}+o(1),\qquad x\to+\infty,
\end{equation}
where $\Delta_{\rm{r}}$ is the constant $n\times n$ matrix defined as
\begin{equation}
\label{6.9}
\Delta_{\rm{r}}:=\displaystyle\lim_{k\to 0} 2ik\,T_{\rm{r}}(k)^{-1},
\end{equation}
with the limit taken from within $\overline{\mathbb C^+}.$
 From \eqref{4.1} it follows that the $n\times n$ matrices
 $\Delta_{\rm{l}}$ and $\Delta_{\rm{r}}$ satisfy
 \begin{equation}
\label{6.10}
\Delta_{\rm{r}}=\Delta_{\rm{l}}^\dagger.
\end{equation}
With the help of \eqref{6.7}--\eqref{6.10} we observe that
the degree of exceptionality $\nu$ is equal to the 
geometric multiplicity of the zero eigenvalue of the $n\times n$ matrix $\Delta_{\rm{r}},$ to
the nullity of $\Delta_{\rm{r}},$ and also to
the number of linearly independent bounded columns of $f_{\rm{r}}(0,x).$ 

One may ask whether the degree of exceptionality for the full-line 
potential $V$ can be determined from the
degrees of exceptionality for its fragments. The answer is no unless
all the fragments are purely exceptional, in which case the full-line potential
must also be exceptional. This answer can be obtained by first considering the scalar case
and then by generalizing it to the matrix case by arguing with a diagonal matrix potential. 
We refer the reader to \cite{A1994} for an explicit example in the scalar case,
where it is demonstrated that the potential $V$ may be generic or exceptional if 
at least one of the fragments is generic. In the next example, we also illustrate
this fact by a different example.

\begin{example}
\normalfont
\label{example6.2a}
We recall that,  in the full-line scalar case, if
the potential $V$ in \eqref{1.1} is real valued and satisfies \eqref{1.3} then generically
 the transmission coefficient $T(k)$ vanishes linearly as $k\to 0$ and
 we have $T(0)\ne 0$ in the exceptional case.  In the generic case we
have $L(0)=R(0)=-1$ and in the exceptional case we have $|L(0)|=|R(0)|<1,$
where we use $L(k)$ and $R(k)$ for the left and right reflection coefficients.
Hence, from \eqref{3.53} we observe that, in the scalar case, if the two fragments of a full-line potential are both exceptional
then the potential itself must be exceptional. 
By using induction, the result can be proved to hold also in the case where the number of
fragments is arbitrary.
In this example, we demonstrate that if at least one of the fragments 
is generic, then the potential itself can be exceptional or generic.
In the full-line scalar case, let us consider the square-well potential of width $a$ and depth $b,$ where
$a$ is a positive parameter and $b$ is a nonnegative parameter. Without any loss of generality,
because the transmission coefficient is not affected by shifting the potential, we can assume that the potential $V$ is 
given by
\begin{equation}
\label{6.a1}
V(x)=
\begin{cases}
-b,\qquad -\displaystyle\frac{a}{2}<x<\displaystyle\frac{a}{2},\\
\noalign{\medskip}
0,\qquad \text{\rm{elsewhere}}.
\end{cases}
\end{equation}
The transmission coefficient corresponding to the scalar potential $V$ in \eqref{6.a1} is given by
\begin{equation}
\label{6.a2}
T(k)=\displaystyle\frac{4k\sqrt{b+k^2}\,e^{-ika}}
{q_{14}+q_{15}+q_{16}},
\end{equation}
where we have defined
%(\sqrt{b+k^2}-k}}{k\left(\sqrt{b+k^2}-k \right)}
\begin{equation*}
%\label{6.a3}
q_{14}:=-b-k^2+k \sqrt{b+k^2},
\end{equation*}
\begin{equation*}
%\label{6.a4}
q_{15}:=\left[-k^2+k \sqrt{b+k^2}\right] \exp\left(ia \sqrt{b+k^2}\right),
\end{equation*}
\begin{equation*}
%\label{6.a5}
q_{16}:=\left[b+2 k^2+2k \sqrt{b+k^2}\right] \exp\left(-ia \sqrt{b+k^2}\right).
\end{equation*}
From \eqref{6.a2} we obtain
\begin{equation}
\label{6.a6}
\displaystyle\frac{1}{T(k)}=\displaystyle\frac{\sqrt{b}}{4k}\left(e^{-ia \sqrt{b}}-1\right)+O(1),\qquad k\to 0.
\end{equation}
Hence, \eqref{6.a6} implies that the exceptional case occurs if and only if $a\sqrt{b}$ is an integer multiple of $2\pi,$
in which case $T(k)$ does not vanish at $k=0.$ Thus, the potential $V$ in \eqref{6.a1} is exceptional if and only if
we have the depth $b$ of the square-well potential is related to the width $a$ as
\begin{equation}
\label{6.a7}
b=\displaystyle\frac{4 p^2 \pi^2}{a^2},
\end{equation}
for some nonnegative integer $p.$
For simplicity, let us use $a=2$ and choose our potential fragments $V_1$ and $V_2$ as
\begin{equation*}
%\label{6.a8}
V_1(x)=
\begin{cases}
-b,\qquad -1<x<c,\\
\noalign{\medskip}
0,\qquad \text{\rm{elsewhere}},
\end{cases}
\quad
V_2(x)=
\begin{cases}
-b,\qquad c<x<1,\\
\noalign{\medskip}
0,\qquad \text{\rm{elsewhere}},
\end{cases}
\end{equation*}
where $c$ is a parameter in $[-1,1]$ so that the potential $V$ in \eqref{6.a1} is given by
\begin{equation}
\label{6.a9}
V(x)=
\begin{cases}
-b,\qquad -1<x<1,\\
\noalign{\medskip}
0,\qquad \text{\rm{elsewhere}}.
\end{cases}
\end{equation}
We remark that the zero potential is exceptional.
With the help of \eqref{6.a7}, we conclude that $V_1$ is exceptional if and only if
 there exists a nonnegative integer
$p_1$ 
satisfying
\begin{equation}
\label{6.a10}
\displaystyle\frac{\sqrt{b}}{2\pi}=\displaystyle\frac{p_1}{1+c},
\end{equation}
 that $V_2$ is exceptional if and only if there exists a nonnegative integer
$p_2$ 
satisfying
\begin{equation}
\label{6.a11}
\displaystyle\frac{\sqrt{b}}{2\pi}=\displaystyle\frac{p_2}{1-c},
\end{equation}
and that
$V$ is exceptional if and only if there exists a nonnegative integer
$p$ 
satisfying
\begin{equation}
\label{6.a12}
\displaystyle\frac{\sqrt{b}}{2\pi}=\displaystyle\frac{p}{2}.
\end{equation}
In fact, \eqref{6.a12} happens if and only if $p=p_1+p_2.$
%\begin{equation}
%\label{6.a13}
%\displaystyle\frac{\sqrt{b}}{2\pi}=\displaystyle\frac{p}{2}.
%\end{equation}
From \eqref{6.a10}--\eqref{6.a12} we have the following conclusions:

\begin{enumerate}
\item[\text{\rm(a)}] If $V_1$ and $V_2$ are both exceptional, then $V$ is also exceptional. This follows
from the restriction $p=p_1+p_2$ that if $p_1$ and $p_2$ are both nonnegative
integers then $p$ must also be a nonnegative integer. As argued earlier, if $V_1$ and $V_2$ 
are both exceptional then $V$ cannot be
generic.

\item[\text{\rm(b}] If the nonnegative integer $p_1$ satisfying \eqref{6.a10} exists but the 
nonnegative integer $p_2$ satisfying \eqref{6.a11} does not exist, then the
nonnegative integer $p$ satisfying \eqref{6.a12} cannot exist because of the restriction $p=p_1+p_2.$
Thus, we can conclude that if $V_1$ is exceptional then $V_2$ and $V$ are either
simultaneously exceptional or simultaneously generic.
Similarly, we can conclude that if $V_2$ is exceptional then $V_1$ and $V$ are either
simultaneously exceptional or simultaneously generic.

\item[\text{\rm(c)}] If both $V_1$ are $V_2$ are generic, then $V$ can be generic or exceptional.
This can be seen easily by arguing that in the equation $p=p_1+p_2,$ it may happen
that $p_1+p_2$ is a nonnegative integer or not a nonnegative integer even though neither $p_1$ nor $p_2$ are nonnegative integers.

\end{enumerate}
\end{example}

Finally in this section, we consider Levinson's theorem for \eqref{1.1}. 
We refer the reader to Theorem~3.12.3 of \cite{AW2021} for Levinson's theorem
for the half-line matrix Schr\"odinger operator with the selfadjoint boundary condition.
With the help of that theorem, we have the following result for the half-line matrix Schr\"odinger operator
associated with \eqref{5.1} and \eqref{5.4}. As mentioned in Remark~\ref{remark6.2}, the result given
in \eqref{6.12} in the next theorem has been proved in \cite{W2022} by using a different method.

\begin{theorem}
\label{theorem6.3} 
Consider the half-line matrix Schr\"odinger operator corresponding to \eqref{5.1}
with the $2n\times 2n$ matrix potential $\mathbf V$ satisfying
\eqref{5.2} and \eqref{5.3}, and with the selfadjoint boundary condition
\eqref{5.4} with the boundary matrices $A$ and $B$ 
satisfying \eqref{5.5}. Let $\mathcal N_{\rm{h}}$ denote the number of
eigenvalues including the multiplicities, $\mathbf S(k)$ be the corresponding scattering matrix
defined as in \eqref{5.20}, and
$\mathbf S_\infty$ be the
constant matrix appearing in \eqref{6.1} and \eqref{6.2}. Then, we have the following:

\begin{enumerate}

\item[\text{\rm(a)}] 
The nonnegative integer $\mathcal N_{\rm{h}}$ is related to the argument of the determinant of $\mathbf S(k)$ as
%to the argument of the determinant of 
\begin{equation}
\label{6.11}
\arg[\det[\mathbf S(0^+)]]-\arg[\det[\mathbf S_{\infty}]]=\pi\left[
2 \,\mathcal N_{\rm{h}}+\mu-2\,n+n_{\text{\rm{D}}}\right],
\end{equation}
where $\arg$ is any single-valued branch of the argument
function, the nonnegative integer $\mu$ is the algebraic and geometric multiplicity of the eigenvalue $+1$ of
$\mathbf S(0),$ and the nonnegative integer $n_{\text{\rm{D}}}$ is the algebraic and geometric multiplicity of the eigenvalue $-1$ of
$\mathbf S_{\infty}.$

\item[\text{\rm(b)}] Assume further that 
the boundary matrices $A$ and $B$ appearing in \eqref{5.4} are chosen
as in \eqref{5.16}.
Then, we have 
\begin{equation}
\label{6.12}
n_{\text{\rm{D}}}=n,
\end{equation}
and hence in that special case, \eqref{6.11} yields
\begin{equation}
\label{6.13}
\arg[\det[\mathbf S(0^+)]]-\arg[\det[\mathbf S_{\infty}]]=\pi\left[
2\, \mathcal N_{\rm{h}}+\mu-n\right].
\end{equation}

\end{enumerate}
\end{theorem}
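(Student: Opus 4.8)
The plan is to obtain part~(a) directly from the general half-line Levinson theorem and then to specialize it in part~(b) by evaluating the integer $n_{\text{\rm{D}}}$ for the boundary matrices in \eqref{5.16}. For part~(a) I would first check that the hypotheses of Theorem~3.12.3 of \cite{AW2021} hold in the present setting: the $2n\times 2n$ potential $\mathbf V$ satisfies \eqref{5.2} and \eqref{5.3}, and the boundary matrices $A$ and $B$ satisfy the selfadjointness conditions \eqref{5.5}. Granting these, Theorem~3.12.3 of \cite{AW2021} furnishes exactly the relation \eqref{6.11}, once the scattering matrix \eqref{5.20} and its limiting value $\mathbf S_\infty$ from \eqref{6.1} are matched with the corresponding objects in \cite{AW2021}. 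Thus part~(a) amounts to a verification of hypotheses and a notational translation, with $\mu$ the multiplicity of the eigenvalue $+1$ of $\mathbf S(0)$ and $n_{\text{\rm{D}}}$ the multiplicity of the eigenvalue $-1$ of $\mathbf S_\infty$.

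For part~(b) the task is to show $n_{\text{\rm{D}}}=n$ when $A$ and $B$ are chosen as in \eqref{5.16}. I would argue as follows. By Theorem~\ref{theorem6.1}, and specifically the first equality in \eqref{6.2}, the limiting matrix satisfies $\mathbf S_\infty=Q$, where $Q$ is the constant $2n\times 2n$ matrix in \eqref{2.51}. As recorded in Remark~\ref{remark6.2}, a direct eigenvalue computation shows that $Q$ has the eigenvalue $-1$ with algebraic and geometric multiplicity $n$ and the eigenvalue $+1$ with multiplicity $n$. Since Theorem~3.10.6 of \cite{AW2021} identifies $n_{\text{\rm{D}}}$ with the multiplicity of the eigenvalue $-1$ of $\mathbf S_\infty$, we conclude $n_{\text{\rm{D}}}=n$, which is \eqref{6.12}. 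Substituting $n_{\text{\rm{D}}}=n$ into \eqref{6.11} and using $2\,\mathcal N_{\rm{h}}+\mu-2n+n=2\,\mathcal N_{\rm{h}}+\mu-n$ then yields \eqref{6.13}.

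The main obstacle is conceptual rather than computational: the entire argument rests on correctly aligning the quantities $\mu$, $n_{\text{\rm{D}}}$, and $\mathbf S_\infty$ between the general framework of \cite{AW2021} and the present specialization, and in particular on the identity $\mathbf S_\infty=Q$. This identity is the one genuinely nontrivial input, for in \eqref{6.2} it is obtained through the full-line scattering asymptotics \eqref{4.2}--\eqref{4.5} fed into \eqref{5.28}, so I would take care to confirm that the $k\to\pm\infty$ limit of \eqref{5.28} returns exactly $Q$ and not merely a matrix conjugate to it. Once $\mathbf S_\infty=Q$ is in hand, the eigenvalue tally of $Q$ is elementary and the remaining algebra is immediate; as a consistency check I would note that $\det[\mathbf S_\infty]=\det[Q]=(-1)^n$, so that $\arg[\det[\mathbf S_\infty]]$ is a well-defined integer multiple of $\pi$, compatible with the right-hand side of \eqref{6.13}.
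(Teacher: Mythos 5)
Your part (a) is the paper's proof: both reduce \eqref{6.11} to Levinson's theorem on the half line (Theorem~3.12.3, equation (3.12.15), of \cite{AW2021}), the only point of care being that the potential here has size $2n\times 2n$. For part (b), however, you take a genuinely different route. The paper never touches the full-line problem: citing (3.10.37) of \cite{AW2021}, it notes that $\mathbf S_\infty$ is unchanged if $\mathbf V$ is replaced by the zero potential, and then computes exactly, from the zero-potential formula (3.7.2) of \cite{AW2021}, that $\mathbf S(k)=-(B+ikA)(B-ikA)^{-1}=Q$ for the matrices in \eqref{5.16}; the eigenvalue count of $Q$ then gives \eqref{6.12}. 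You instead import $\mathbf S_\infty=Q$ from Theorem~\ref{theorem6.1}, i.e.\ through the unitary transformation \eqref{5.12} and the full-line asymptotics \eqref{4.2}--\eqref{4.5} fed into \eqref{5.28}. The paper explicitly acknowledges your route in the remark following its proof (that \eqref{6.17} ``also follows from the first equality in \eqref{6.2} of Theorem~\ref{theorem6.1}''), but deliberately avoids it so that Theorem~\ref{theorem6.3}(b) stands without any reference to the full-line equation \eqref{1.1}; your version buys brevity at the price of making the half-line statement depend on the full-line machinery.

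That dependence creates a genuine scope gap. Theorem~\ref{theorem6.3} assumes only that $\mathbf V$ is a $2n\times 2n$ potential satisfying \eqref{5.2} and \eqref{5.3}; part (b) adds the choice \eqref{5.16} of $A$ and $B$ but places no structural condition on $\mathbf V$. Theorem~\ref{theorem6.1}, which you invoke for $\mathbf S_\infty=Q$, requires $\mathbf V$ to have the block-diagonal form \eqref{5.10} built from a full-line potential; a general selfadjoint $\mathbf V$ (for instance one with nonzero off-diagonal $n\times n$ blocks) is not of that form, so your argument proves \eqref{6.12} only for that subclass. This suffices for the later application in Theorem~\ref{theorem6.4}, where $\mathbf V$ is of the form \eqref{5.10}, but not for Theorem~\ref{theorem6.3} as stated. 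The repair is exactly the fact the paper's proof leads with: by (3.10.37) of \cite{AW2021}, the leading term $\mathbf S_\infty$ depends only on the boundary matrices and not on the potential, so it is enough to verify $\mathbf S_\infty=Q$ for a single potential---say $\mathbf V\equiv 0$, which trivially has the form \eqref{5.10}---after which your eigenvalue tally of $Q$ and your consistency check $\det[\mathbf S_\infty]=\det[Q]=(-1)^n$ go through unchanged.
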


\begin{proof}
We remark that \eqref{6.11} directly follows from (3.12.15) of \cite{AW2021} by using
the fact that the matrix potential $\mathbf V$ has size $2n\times 2n.$ Hence, the proof of (a) is complete.
In order to prove (b), it is sufficient to prove \eqref{6.12}. We can explicitly evaluate the large $k$-asymptotics
of $\mathbf S(k)$ when $A$ and $B$ are chosen as in \eqref{5.16}. From (3.10.37) of \cite{AW2021}
we know that $\mathbf S_{\infty}$ is unchanged if the potential $\mathbf V$ is zero, which is not surprising because 
the potential cannot affect the leading term in the large $k$-asymptotics of the scattering matrix. Thus, 
$\mathbf S_{\infty}$ and its eigenvalues are solely determined by the boundary matrices $A$ and $B.$
When the potential is zero, as seen from (3.7.2) of \cite{AW2021} we have
\begin{equation}
\label{6.14}
\mathbf S(k)=-(B+ik A)(B-ikA)^{-1}.
\end{equation}
When we use $A$ and $B$ given in \eqref{5.16}, we get
\begin{equation}
\label{6.15}
B-ikA=\begin{bmatrix} -I&-ikI\\
\noalign{\medskip}
I&-ik I\end{bmatrix}.
\end{equation}
Using \eqref{6.15} in
\eqref{6.14}
we evaluate the large $k$-asymptotics of $\mathbf S(k)$ given in \eqref{6.14} as
\begin{equation*}
%\label{6.16}
\mathbf S(k)=-\begin{bmatrix}-I&ik I\\
\noalign{\medskip}
I&ik I\end{bmatrix}\left(
-\displaystyle\frac{1}{2}\,
\begin{bmatrix}I&-I\\
\noalign{\medskip}
\displaystyle\frac{I}{ik}&\displaystyle\frac{I}{ik}\end{bmatrix}\right),
\end{equation*}
which yields the exact result
\begin{equation}
\label{6.17}
\mathbf S(k)=Q,
\end{equation}
where we recall that $Q$ is the $2n\times 2n$ constant matrix defined in \eqref{2.51}.
Thus, we have $\mathbf S_{\infty}=Q.$ 
As indicated in Remark~\ref{remark6.2}, the matrix $Q$ has eigenvalue $-1$
with multiplicity $n.$
Hence, \eqref{6.12} holds, and the proof of (b) is complete.
\end{proof}

We remark that \eqref{6.17} also follows from the first equality in \eqref{6.2} of
Theorem~\ref{theorem6.1} by using the unitary transformation \eqref{5.12} between the half-line and full-line Schr\"odinger operators.
However, we have established Theorem~\ref{theorem6.3}(b) without using that unitary transformation
and without making any connection to the full-line Schr\"odinger equation \eqref{1.1}.

Using \eqref{4.2}--\eqref{4.5} in \eqref{2.5}, we see that the full-line
scattering matrix $S(k)$ satisfies
\begin{equation*}
%\label{6.17a}
S_{\infty}=\mathbf I,
\end{equation*}
where we have defined
\begin{equation*}
%\label{6.17b}
S_{\infty}:=\displaystyle\lim_{k\to\pm\infty} S(k).
\end{equation*}

In the next theorem we state and prove Levinson's theorem for
the full-line matrix-valued Schr\"odinger equation \eqref{1.1}. As in Theorem~\ref{theorem6.3}, 
we again use $\arg$ to denote any single-valued branch of the argument function.

\begin{theorem}
\label{theorem6.4} 
Consider the full-line matrix Schr\"odinger operator corresponding to \eqref{1.1} with the $n\times n$ matrix potential
$V$ satisfying \eqref{1.2} and \eqref{1.3}.
Let $S(k)$ be the corresponding $2n\times 2n$ scattering matrix defined in \eqref{2.5}.
Then, the corresponding number $\mathcal N$ of eigenvalues including the multiplicities,
which appears in \eqref{5.47}, is related
to the argument of the determinant of $S(k)$ as
\begin{equation}
\label{6.18}
\arg[\det[S(0^+)]]-\arg[\det[S_{\infty}]]=\pi\left[
2\mathcal N-n+\nu\right],
\end{equation}
where we recall that the nonnegative integer $\nu$ is the degree of exceptionality appearing in \eqref{5.53} and
denoting the number of linearly independent bounded solutions to \eqref{5.52}.
\end{theorem}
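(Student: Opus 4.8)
The plan is to prove \eqref{6.18} in two independent ways, in each case reducing the left-hand side to information carried by the full-line transmission coefficient $T_{\rm{l}}(k)$. The first route passes through the half line via the unitary equivalence \eqref{5.12}, and the second is a self-contained application of the argument principle to $\det[T_{\rm{l}}(k)]$. Both routes exploit the fact that $\det[S(k)]$ has modulus one on $\mathbb R$, so that the statement is genuinely about the winding of a phase.

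First I would establish \eqref{6.18} from Theorem~\ref{theorem6.3}(b). Because $H_V$ and $H_{A,B,\mathbf V}$ are unitarily equivalent through \eqref{5.12}, they have the same eigenvalues with the same multiplicities, so $\mathcal N_{\rm{h}}=\mathcal N$; moreover \eqref{5.54} gives $\mu=\nu$. To transfer the half-line identity to the full line I would use \eqref{5.60} of Theorem~\ref{theorem5.3}, namely $\det[\mathbf S(k)]=(-1)^n\det[S(k)]$ for $k\in\mathbb R$. Since $(-1)^n$ is a fixed constant, it shifts $\arg[\det[\mathbf S(k)]]$ by the same amount at $k=0^+$ and as $k\to\infty$, and therefore cancels in the difference; hence $\arg[\det[S(0^+)]]-\arg[\det[S_\infty]]$ equals the left side of Theorem~\ref{theorem6.3}(b), which after the substitutions $\mathcal N_{\rm{h}}=\mathcal N$ and $\mu=\nu$ becomes exactly $\pi[2\mathcal N-n+\nu]$.

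For the independent argument-principle proof I would set $D(k):=\det[T_{\rm{l}}(k)]$. Formula \eqref{3.14} gives $\det[S(k)]=D(k)/D(k)^\ast$ on $\mathbb R$, so $\arg[\det[S(k)]]=2\arg[D(k)]$, and since $S_\infty=\mathbf I$ the left side of \eqref{6.18} equals $2\arg[D(0^+)]$. I would then apply the argument principle to $D$ on the positively oriented boundary of the upper half-annulus $\{\epsilon<|k|<R,\ \operatorname{Im}k>0\}$. Inside, $D$ is analytic apart from poles on the positive imaginary axis of total multiplicity $\mathcal N$, and by the remark following \eqref{5.47} it has no zeros in $\overline{\mathbb C^+}\setminus\{0\}$, so the total change of $\arg D$ around the contour is $-2\pi\mathcal N$. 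The large semicircle contributes $o(1)$ because Theorem~\ref{theorem4.1} gives $T_{\rm{l}}(k)=I+O(1/k)$, hence $D\to 1$; the two real-axis segments each contribute $-\arg[D(0^+)]$ by the symmetry $D(-k)=D(k)^\ast$; and the small semicircle around $k=0$ contributes $-(n-\nu)\pi$, since \eqref{5.55} gives $D(k)\sim c\,k^{\,n-\nu}$ as $k\to 0$. Equating $-2\arg[D(0^+)]-(n-\nu)\pi=-2\pi\mathcal N$ and solving yields $2\arg[D(0^+)]=\pi[2\mathcal N-n+\nu]$, which is \eqref{6.18}.

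The main obstacle will be the bookkeeping at $k=0$ in the second route: one must justify the indentation that excludes the boundary zero of order $n-\nu$, read off the precise leading power from \eqref{5.55}, and follow a single continuous branch of the argument through the small semicircle so that the fractional contribution $-(n-\nu)\pi$ combines correctly with the integer winding; one must also confirm that the large-semicircle contribution truly vanishes and that $D$ is zero-free on $\overline{\mathbb C^+}\setminus\{0\}$. By contrast, the first route is essentially verification once \eqref{5.60}, \eqref{5.54}, and the unitary equivalence are in hand, the only subtlety being that the constant $(-1)^n$ drops out of the argument difference. Presenting both gives the cleanest narrative: the half-line route explains \emph{why} the formula holds, and the argument-principle route makes it self-contained on the full line.
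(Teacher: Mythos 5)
Your proposal is correct and coincides with the paper's treatment in both of its parts: your first route (the unitary equivalence \eqref{5.12} giving $\mathcal N_{\rm{h}}=\mathcal N$, the identity $\mu=\nu$ from \eqref{5.54}, and the cancellation of the constant $(-1)^n$ in \eqref{5.60} when taking the argument difference) is exactly the paper's proof of the theorem, and your second route is exactly the independent argument-principle proof the paper supplies immediately afterward, using the same indented upper half-plane contour, the same absence of zeros of $\det[T_{\rm{l}}(k)]$ in $\overline{\mathbb C^+}\setminus\{0\}$, the same symmetry \eqref{3.13}, the same relation \eqref{3.14}, and the same small-$k$ behavior \eqref{5.55}. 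No gaps to report.
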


\begin{proof}
Even though we can prove \eqref{6.18} independently without using any connection to the half-line $2n\times 2n$ matrix
Schr\"odinger equation \eqref{5.1}, it is illuminating to prove it by exploiting the unitary
connection \eqref{5.12} between the full-line Hamiltonian $H_V$ and
the half-line Hamiltonian $H_{A,B,\mathbf V},$ where 
the half-line $2n\times 2n$ matrix potential $\mathbf V$ is related to
$V$ as in \eqref{1.4} and \eqref{5.10} and the boundary matrices
$A$ and $B$ chosen as in \eqref{5.16}. Then, from \eqref{5.60} it follows that
the left-hand side of \eqref{6.18} is equal to the left-hand side of
\eqref{6.13}, i.e. we have
\begin{equation}
\label{6.19}
\arg[\det[S(0^+)]]-\arg[\det[S_{\infty}]]=\arg[\det[\mathbf S(0^+)]]-\arg[\det[\mathbf S_{\infty}]],
\end{equation}
where $\mathbf S(k)$ is the $2n\times 2n$ scattering matrix corresponding to
$H_{A,B,\mathbf V}.$ Furthermore, because of \eqref{5.12} we know that both
the number of eigenvalues and their multiplicities for $H_V$ and $H_{A,B,\mathbf V}$
coincide, and hence we have
\begin{equation}
\label{6.20}
\mathcal N_{\rm{h}}=\mathcal N,
\end{equation}
where we recall that $\mathcal N_{\rm{h}}$ is the number of eigenvalues
of $H_{A,B,\mathbf V}$ including the multiplicities. We also know that \eqref{5.54} holds,
where $\mu$ is the geometric and algebraic multiplicity of
the eigenvalue $+1$ of $\mathbf S(0).$
Thus, using \eqref{5.54} and \eqref{6.20} on the right-hand side of \eqref{6.13},
with the help of \eqref{6.19} we obtain \eqref{6.18}. Hence, the proof is complete.
\end{proof}

An independent proof of \eqref{6.18} without using the unitary connection \eqref{5.12} can be given
by applying the argument principle to the determinant of
$T_{\rm{l}}(k).$ For this, we can proceed as follows.
As indicated in Section~\ref{section5}, the Hamiltonian $H_V$ has
$N$ distinct eigenvalues $-\kappa_j^2,$ each with multiplicity $m_j$ 
for $1\le j\le N.$ In case $N=0,$ there are no eigenvalues.
The quantity $\det[T_{\rm{l}}(k)]$ has a meromorphic extension from
$k\in\mathbb R$ to $\mathbb C^+$ such that the only poles there
occur at $k=i\kappa_j$ with multiplicity $m_j$ for $1\le j\le N.$ Furthermore,
except for those poles,
 $\det[T_{\rm{l}}(k)]$ is continuous in $\overline{\mathbb C^+}$
 and nonzero in $\overline{\mathbb C^+}\setminus\{0\},$ and it has a zero at $k=0$
with order $n-\nu,$ as indicated in \eqref{5.55}. We note that $\det[T_{\rm{l}}(k)]$ is nonzero
at $k=0$ when $\nu=n.$ 
To apply the argument principle, we choose our contour $\mathcal C_{\varepsilon, r}$ as the positively oriented
closed curve given by
\begin{equation}
\label{6.21}
\mathcal C_{\varepsilon, r}:= (-r,-\varepsilon)\cup  \mathcal C_\varepsilon\cup(\varepsilon, r)\cup \mathcal C_r.
\end{equation}
Note that the first piece $(-r, -\varepsilon)$ on the right-hand side of 
\eqref{6.21} is the directed line segment on the real axis oriented from $-r$ to $-\varepsilon$ for some small positive $\varepsilon$ and some large positive $r$. The second piece $\mathcal C_\varepsilon$  is the upper semicircle centered at the origin with radius $\varepsilon$ and oriented from $-\varepsilon$ to $\varepsilon.$ The third piece is the real line segment $(\varepsilon,r)$ oriented from $\varepsilon$ to $r.$ Finally the fourth piece $\mathcal C_r$ is the upper semicircle centered at zero and with radius $r$ and oriented from $r$ to $-r.$
From \eqref{4.2} we see that the argument of $\det[T_{\rm{l}}(k)]$ does not change along the piece $C_r$ when
$r\to+\infty.$
In the limit $\varepsilon\to 0^+$ and $r\to+\infty,$ the application of the argument principle to $\det[T_{\rm{l}}(k)]$
along the contour $C_{\varepsilon,r}$ yields
 \begin{equation}
\label{6.22}
\begin{split}
\arg[\det[T_{\rm{l}}(+\infty)]]-&\arg[\det[T_{\rm{l}}(0^+)]]+\arg[\det[T_{\rm{l}}(0^-)]]\\
&
-\arg[\det[T_{\rm{l}}(-\infty)]]
=2\pi\left[\displaystyle\frac{n-\nu}{2}-\mathcal N\right],\end{split}
\end{equation}
where the factor $1/2$ in the brackets on the right-hand side is due to the fact that we integrate
along the semicircle $C_\varepsilon.$
From the first equality in \eqref{3.13}, we conclude that
\begin{equation}
\label{6.23}
\arg[\det[T_{\rm{l}}(0^-)]]-\arg[\det[T_{\rm{l}}(-\infty)]]=
\arg[\det[T_{\rm{l}}(+\infty)]]-\arg[\det[T_{\rm{l}}(0^+)]].
\end{equation}
Using \eqref{6.23} in \eqref{6.22}, we obtain
 \begin{equation}
\label{6.24}
\arg[\det[T_{\rm{l}}(+\infty)]]-\arg[\det[T_{\rm{l}}(0^+)]]=\pi\left[\displaystyle\frac{n-\nu}{2}-\mathcal N\right].
\end{equation}
Finally, using \eqref{3.14} in \eqref{6.24} we get
\begin{equation*}
%\label{6.25}
\arg[\det[S_{\infty}]]-\arg[\det[S(0^+)]]=2\pi\left[\displaystyle\frac{n-\nu}{2}-\mathcal N\right],
\end{equation*}
which is equivalent to \eqref{6.18}.

\newpage

\end{document}